\keywords{Proof complexity, branching programs, non-determinism, logspace}
\tikzset{negated/.style={
		decoration={markings,
			mark= at position 0.5 with {
				\node[transform shape] (tempnode) {$\backslash$};
			}
		},
		postaction={decorate}
	}
}
\crefname{propC}{Proposition}{Propositions}
\newcommand{\dt}{{DT}}
\newcommand{\exdt}{e\dt}
\newcommand{\ndt}{{N}\dt}
\newcommand{\exndt}{e\ndt}
\newcommand{\Bool}{\mathrm{Bool}}
\newcommand{\boolexdt}{$\Bool$(\exdt)}
\newcommand{\boolexndt}{$\Bool$(\exndt)}
\newcommand{\szel}{Szelepcs\'{e}nyi}
\newcommand{\poly}{\mathrm{poly}}
\newcommand{\emphasis}[1]{\textbf{#1}}
\renewcommand{\epsilon}{\varepsilon}
\renewcommand{\phi}{\varphi}
\newcommand{\dfn}{:=}
\newcommand{\bnf}{::=}
\renewcommand{\vec}[1]{\mathbf{#1}}
\newcommand{\IH}{\mathit{IH}}
\newcommand{\storageone}{}
\newcommand{\storagetwo}{}
\newcommand{\orange}[1]{{\color{orange} #1}}
\newcommand{\gray}[1]{{\color{gray} #1}}
\newcommand{\blue}[1]{{\color{blue} #1}}
\newcommand{\red}[1]{{\color{red} #1}}
\newcommand{\green}[1]{{\color{teal} #1}}
\newtheorem{theorem}[thm]{Theorem}
\newtheorem{proposition}[thm]{Proposition}
\newtheorem{lemma}[thm]{Lemma}
\newtheorem{corollary}[thm]{Corollary}
\theoremstyle{definition}
\newtheorem{definition}[thm]{Definition}
\newtheorem{remark}[thm]{Remark}
\newtheorem{example}[thm]{Example}
\newcommand{\co}{\mathit{co}}
\newcommand{\N}{\mathbf{N}}
\newcommand{\Logspace}{\mathbf{L}}
\newcommand{\NL}{\N\Logspace}
\newcommand{\coNL}{\co\NL}
\newcommand{\Ptime}{\mathbf{P}}
\newcommand{\NP}{\mathbf{NP}}
\newcommand{\coNP}{\co\NP}
\newcommand{\ALOGTIME}{\mathbf{ALOGTIME}}
\newcommand{\bool}{\{0,1\}}
\newcommand{\unf}[2]{{\mathrm{Unf}_{#2}(#1)}}
\newcommand{\exndtsize}[1]{\# #1}
\newcommand{\sat}[1]{\vDash_{#1}}
\newcommand{\notsat}[1]{\nvDash_{#1}}
\newcommand{\simulates}[1]{\gtrsim_{#1}}
\newcommand{\cnot}{\neg}
\newcommand{\cand}{\wedge}
\newcommand{\cimp}{\supset}
\newcommand{\dec}[3]{#1#2#3}
\newcommand{\posdec}[3]{#1#2(#1\lor #3)}
\newcommand{\decider}[4]{\dec{#2}{#3^{#1}}{#4}}
\newcommand{\extiff}{\leftrightarrow}
\newcommand{\ee}[2]{e_{#1#2}}
\newcommand{\thr}[2]{t^{#1}_{#2}}
\newcommand{\thresh}[2]{\thr{#1}{#2}}
\newcommand{\thrextaxs}[2]{\mathcal{T}^{#1}_{#2}}
\newcommand{\dextaxs}[2]{\mathcal D^{#1}_{#2}}
\newcommand{\E}{\mathcal{E}}
\newcommand{\BB}{\mathcal{B}}
\newcommand{\seqar}{\mbox{\Large $\, \rightarrow \, $}}
\newcommand{\dseqar}{\extiff}
\newcommand{\lk}{\mathsf{LK}}
\newcommand{\ext}{\mathsf{e}}
\newcommand{\ldt}{\mathsf{LDT}}
\newcommand{\lndt}{\mathsf{LNDT}}
\newcommand{\eldt}{\ext\ldt}
\newcommand{\elndt}{\ext\lndt}
\newcommand{\booleldt}{\lk(\eldt)}
\newcommand{\posboolelndt}{\lk^+ ( \elndt)}
\newcommand{\DM}{\mathrm{DM}}
\newcommand{\bl}{\mathsf{NB}}
\newcommand{\blus}{\bl^\DM}
\newcommand{\dbl}{\mathsf{DB}}
\newcommand{\dm}[1]{#1^\DM}
\newcommand{\depth}[1]{\mathrm{dp}(#1)}
\newcommand{\pos}[1]{#1^{+}}
\newcommand{\lefrul}[1]{#1\text{-}l}
\newcommand{\rigrul}[1]{#1\text{-}r}
\newcommand{\infrule}{\mathsf{r}}
\newcommand{\id}{\mathsf{id}}
\newcommand{\exch}{\mathsf{e}}
\newcommand{\cntr}{\mathsf{c}}
\newcommand{\wk}{\mathsf{w}}
\newcommand{\cut}{\mathsf{cut}}
\newcommand{\List}[3]{\vec{#2}^{#3}}
\newcommand{\nndt}{(N)DT}
\newcommand{\exnndt}{e(N)DT}
\newcommand{\lnndt}{\mathsf{L(N)DT}}
\newcommand{\elnndt}{\mathsf{eL(N)DT}}
\newcommand{\boolexnndt}{$\Bool$(\exnndt)}
\newcommand{\decext}[3]{d_{#1}^{#2,#3}}
\newcommand{\decextaxs}[1]{\mathcal D^{#1}}
\newcommand{\threshextaxs}[1]{\mathcal T^{\vec B}}
\newcommand{\B}{{\vec B}}
\newcommand{\Bj}[1]{\B^{#1}}
\begin{document}

\title[Prover-Adversary games for (non-deterministic) branching programs]{Prover-Adversary games for systems over (non-deterministic) branching programs}\thanks{The alphabetically first author has been supported by a UKRI Future Leaders Fellowship, \emph{Structure vs Invariants in Proofs}, project number
MR/S035540/1.}
\author[A.~Das]{Anupam Das\lmcsorcid{0000-0002-0142-3676}}
\author[A.~Delkos]{Avgerinos Delkos}

% \date{\today}

% affiliation 1 (automatically numbered a)
\address{University of Birmingham, UK}	%optional
% write emails for all authors having that affiliation
\email{a.das@bham.ac.uk, axd1010@alumni.bham.ac.uk} 

\begin{abstract}
 We introduce Pudl\'ak-Buss style Prover-Adversary games  to characterise proof systems reasoning over deterministic branching programs (BPs) and non-deterministic branching programs (NBPs).
 Our starting points are the proof systems $\eldt$ and $\elndt$, for BPs and NBPs respectively, previously introduced by Buss, Das and Knop.
 We prove polynomial equivalences between these proof systems and the corresponding games we introduce.
   This crucially requires access to a form of \emph{negation} of branching programs which, for NBPs, requires us to formalise a non-uniform version of the Immerman-\szel\ theorem that $\coNL=\NL$.
   Thanks to the techniques developed, we further obtain a proof complexity theoretic version of Immerman-\szel, showing that $\elndt$ is polynomially equivalent to systems over boundedly alternating branching programs.
\end{abstract}

\maketitle

\section{Introduction}

\emph{Proof complexity} investigates the size of proofs, in terms of that of their conclusions.
Originally motivated by the Cook-Levin theorem that Boolean satisfiability is $\NP$-complete, finding general superpolynomial lower bounds on proof size for Boolean logic directly implies $\Ptime\neq \NP$,
what is now known as \emph{Cook's programme}.
Systems of interest in proof complexity are typically parametrised by a complexity class of interest, whose nonuniform counterpart comprises the objects of reasoning in the associated proof system.
For instance Frege systems reason on formulas, the nonuniform counterpart of $\ALOGTIME$ \cite{Buss87:BFV-in-ALOGTIME}. 
For $\Ptime$ the corresponding system is \emph{extended} Frege, employing `Tseitin extension' to represent the dag structure of circuits.

Recently Buss, Das and Knop proposed a proof complexity theory of $\Logspace $ and $\NL$ via systems, $\eldt $ and $\elndt$, reasoning about deterministic branching programs (BPs) and non-deterministic branching programs (NBPs) respectively \cite{DBLP:conf/csl/BussDasKnop}.
In earlier work \cite{Das-Del,DasDel25:pos-bps-journal} the current authors studied the `positive' fragment of $\elndt$, reasoning about \emph{monotone} branching programs, and showed a polynomial simulation over positive sequents, inspired by previous developments for the sequent calculus \cite{DBLP:journals/jcss/AtseriasGP02}.
In all these works (and often in proof complexity),
reasoning about families of propositional proofs can be cumbersome and notationally complex. 
This issue is made worse when the objects of reasoning have underlying dag structures, represented using extension where indexing must be carefully controlled to maintain well-foundedness, yielding further technicalities for handling program equivalence/simulation. 

A promising solution to these issues is the program of \emph{bounded arithmetic} \cite{buss1985bounded,krajicek1995bounded,cooknguyen2010logical,krajivcek2019proof}.
Here (very) weak theories of arithmetic essentially serve as uniform counterparts of propositional proof systems.
However in this work we follow a different branch of the proof complexity literature: \emph{Prover-Adversary games}, à la Pudl\'ak-Buss \cite{PudBuss}.
Roughly speaking, these are two-player games where Prover asks queries (typically the objects of reasoning) and Adversary assigns each query a Boolean value; Prover wins if they can force Adversary into a `simple contradiction' (typically one `easily' witnessed by polynomial-size proofs).
In this way proofs are recast as \emph{strategies}, with the logarithm of proof size proportional to the depth of a strategy.
Indeed, we can see such `game systems' as canonical `balanced tree-like' versions of their corresponding inference systems, cf.~\cite{krajivcek1994lower}.

    \begin{figure}[t]
        \centering
     \begin{tikzpicture}[scale=1.68]
     
     \foreach \pos/\name/\disp in {
  % {(2.28,3.88)/1/{\scriptsize size $M$}}, 
  {(-2,3)/2/{$\bl$}},
  {(1.8,1.48)/3/{\small$\posboolelndt$}}, 
  % {(-2.48,2)/4/{\small+-transl.}}, 
  {(-0.88,1.92)/5/{$\blus$}},
  {(2,2.8)/6/{\small $\elndt$}},
   % {(-2.48,2.68)/7/{\small$\{\neg,\lor,\land\}$}},
   % {(-1.28,1.68)/8/{\small$\{\lor,\land\}$}},
   % {(2.48,2.28)/9/{\tiny$\qquad O(|\vec B|)$-cuts \& $\land$-elim}}, 
  {(1.68,3.48)/10/{\tiny$\to  B$}},
  {(-2.05,3.53) /10/{\tiny$B\mapsto 0$}},
  % {(-2,3.53) /10/{\tiny$B\mapsto 0$}},
  {(-1,2.12)  /10/{\tiny$B\mapsto 0$}},
  % {(1.68,4.12)/11/{\tiny $\elndt$}},
    {(1.68,3.98)/12/{\tiny proof}},
      {(-2,4.08)/13/{\tiny strategy}},
      {(-1,2.72)/14/{\tiny De Morgan}},
        {(-1,2.58)/15/{\tiny strategy }},
          {(1.08,1.8)/16/{\tiny$\thresh{\List{}{B}{}}{k}\hspace{-2pt}\to B, \thresh{\List{}{B}{}}{k+1}$}},
           {(1.08,2.48)/17/{\tiny proof for}},
             {(1.08,2.34)/17/{\tiny each $k$}},
               {(1.08,2.2)/17/{\tiny }}}
\node[minimum size=20pt,inner sep=0pt] (\name) at \pos {\disp};
    
   %edges in graph
    
    \draw [->][thick,olive][out=268, in=188](2) to (5);
         \draw [->][thick,orange]
    (6) [out=128, in=68] to  (2);
  
    \draw [->][ultra thick,cyan](5) to  [out=-8, in=188](3) ;
    \draw [->][thick](3) to (6);
    %proof
    \draw [-][thick,blue](1.68,3.58) to (2.18,4.18);
    \draw [-][thick,violet](1.68,3.58) to (1.18,4.18);  
     \draw [-][thick,violet]
    (1.18,4.18) [out=28, in=128] to  (2.18,4.18);
    %strategy boolean
  \draw [-][thick,blue](-2,3.58) to (-1.5,4.18);
    \draw [-][thick,violet](-2,3.58) to (-2.5,4.18);  
     \draw [-][thick,violet]
    (-2.5,4.18) ..controls(-1.88,4.48) and (-1.7,4.05).. (-1.5,4.18);
  %strategy no negation
   \draw [-][thick,blue](-1,2.18) to (-.5,2.78);
    \draw [-][thick,violet](-1,2.18) to (-1.5,2.78);  
     \draw [-][thick,violet]
    (-1.5,2.78) ..controls(-.88,3.08) and (-.7,2.65).. (-.5,2.78);
    
    %small strategy
    %    \draw [-][thick,brown](-.68,2.8) to (-.48,3);
    % \draw [-][thick,brown](-.68,2.8) to (-.88,3);  
    %  \draw [-][thick,brown]
    % (-.88,3) ..controls(-.82,2.98) and (-.7,3.05).. (-.48,3);
  
  %family of proofs
   \draw [-][thick,blue](1.08,1.88) to (1.58,2.48);
    \draw [-][thick,violet](1.08,1.88) to (.58,2.48);  
     \draw [-][thick,violet]
    (.58,2.48) [out=28, in=128] to  (1.58,2.48);
     \end{tikzpicture}

        \caption{High-level structure of the polynomial simulations for the non-deterministic setting.
        The right side displays inference systems and proofs therein, while the left side displays game systems and strategies therein.
        The bold cyan arrow is the most technical, incorporating our formalisation of Immerman-\szel.
}
        \label{High level view}
        \label{structure-of-paper}
    \end{figure}
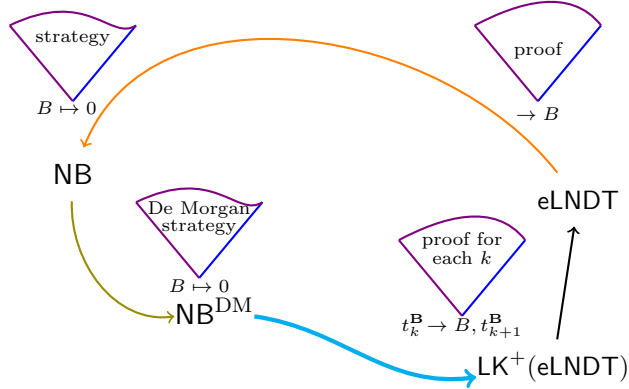

\subsection*{Contributions}
In this work we develop games $\dbl$ and $\bl$ for reasoning about BPs and NBPs, respectively, and prove their correspondence to $\eldt$ and $\elndt$, respectively.
Classically, the equivalence of tree-like and dag-like versions of a system typically requires closure of the language under Boolean combinations (in particular negation).
Such bootstrapping is readily formalisable in $\eldt$, with relatively simple constructions of Boolean combinations of (deterministic) BPs (see, e.g., \cite{Weg00:bps-and-bdds}).
In fact this idea of using BP-based games for $\Logspace$ proof complexity was already proposed by Cook \cite{Cook01slides} (but, as far as we know, never published).
However the problem becomes more complex for $\elndt$, which comprises the main focus of this work.
Here the idea is to compute the \emph{negation} of an NBP as an NBP by formalising a partial non-uniform version of the Immerman-\szel\ theorem that $\coNL = \NL$ \cite{immerman1988nondeterministic,szelep1988method}.
A novelty of our formalisation is that we are able to simplify the inductive counting argument: we do not encode counting at the level of NBPs themselves, but rather at the level of the \emph{proof}, relying on constructions of (positive) counting programs from previous work \cite{Das-Del,DasDel25:pos-bps-journal}.  
Thus the programs we construct are \emph{partial} negations, working relative to a fixed number of true inputs.

Let us point out a particular design choice at this point.
In order to more easily reason about proof complexity via games, we simply close the queries of our games under (explicit) Boolean combinations.
Of course there is no free lunch here: the aforementioned formalisation of Immerman-\szel\ is duly carried out in the proof system $\elndt$.

As an application of the machinery built up in this work, we develop in \cref{sec:prf-comp-immszel} a bona fide proof complexity theoretic version of the Immerman-\szel\ theorem. 
Namely we show that $\elndt$ polynomially simulates a system, $\mathsf{eL\exists\forall DT}$, reasoning over twice-alternating branching programs ($\exists\forall$BPs). 
Here we exploit the previously constructed partial negations of NBPs to reduce any $\exists\forall$BP to an equivalent NBP, relative to a fixed number of true inputs. 
Again, the general polynomial simulation is recovered by using a counting argument at the proof level, exhausting every possible number of true inputs.

\subsection*{Structure}
In \Cref{sec:prelims-elndt} we recall the systems $\eldt$ and $\elndt$ from \cite{DBLP:conf/csl/BussDasKnop}, and in \Cref{sec:games} we introduce our corresponding games $\dbl$ and $\bl$.
In \Cref{sec:proofs-to-strategies} we prove one direction of the correspondence, translating $\eldt$ and $\elndt$ proofs to $\dbl$ or $\bl$ strategies, respectively.
In \cref{sec:strats-to-proofs-deterministic} we present the converse direction for the deterministic setting, translating $\dbl$ strategies to $\eldt$ proofs.
The remainder of the paper is devoted to establishing the same for the non-deterministic setting.

In \Cref{sec: Imm-szel} we present our formalisation of Immerman-\szel, witnessed by polynomial-size proofs in $\elndt$. 
Finally in \Cref{sec:strats-to-proofs} we use this to establish a translation from $\bl$ strategies to $\elndt$ proofs.
This final argument is quite involved, composed of several intermediate systems; we visualise its structure in \Cref{structure-of-paper}, and give some explanation in the caption.
The main difficulty is the bold cyan arrow (bottom), where we rely on the Immerman-\szel\ construction from \cref{sec: Imm-szel}.

\section{Proof systems for (non-deterministic) branching programs}
\label{sec:prelims-elndt}

In this section we shall briefly recall the systems $\eldt $ and $\elndt$, from \cite{DBLP:conf/csl/BussDasKnop,BussDasKnop19:preprint} and also appearing in \cite{Das-Del,DasDel25:pos-bps-journal},  reasoning about deterministic and non-deterministic branching programs respectively.

Throughout this work, we make use of \emphasis{propositional variables}, written $p,q,$ etc.
An \emphasis{assignment} is a map from propositional variables to Booleans $\{0,1\}$.
A \emphasis{Boolean function} is a map from assignments to $\{0,1\}$. 

 In proof complexity, formally, a (sound) \emphasis{propositional proof system} is just a polynomial-time function $P$ from $\Sigma^*$ to the set of propositional tautologies, for $\Sigma$ some finite alphabet.
 The idea is that
 $P$ checks (efficiently) that an element $ \sigma \in \Sigma^*$ correctly codes a proof in the system in which case the output $P(\sigma)$  is the tautology $\sigma$ proves.  Otherwise $P$ outputs the tautology 1 by convention.

A proof system $P$ \emphasis{polynomially simulates} a system $Q$ if we can construct in polynomial-time, from a $Q$-proof of $A$, a $P$-proof of $A$.
As is common in proof complexity, we will typically refrain from calculating explicit polynomial bounds throughout this work, which will always be evident from the arguments at hand.
% In practice (and throughout this work) we shall avoid specifying proof systems at such a low level, leaving such formalisation implicit.
Furthermore, while we often only state the \emph{existence} of small proofs, all of our arguments are feasibly constructive and such results comprise bona fide polynomial simulations.

A more comprehensive introduction to proof complexity can be found in the textbooks \cite{krajicek1995bounded,cooknguyen2010logical,krajivcek2019proof}.

\subsection{(Non-)deterministic branching programs}
A \emphasis{(non-deterministic) branching program (NBP)} is a (rooted) directed acyclic graph $G$ such that: 
\begin{itemize}
\item $G $ has two distinguished \emphasis{sink} nodes, $0$ and $1$.
\item $G$ has a unique \emphasis{root} node, i.e.\ a unique node with in-degree $0$.
    \item Each non-sink node of $G$ is labelled by a propositional variable.
    \item Each edge of $G$ is labelled by either $0$ or $1$.
\end{itemize}
We say that a NBP is \emphasis{deterministic} (a \emphasis{BP}) if each non-sink node has \emph{exactly} two outgoing edges, one labelled $0$ and the other labelled $1$.

A \emphasis{run} of a NBP $G$ on an assignment $\alpha$ is a maximal path beginning at the root of $G$ consistent with $\alpha$. 
I.e., at a node labelled by $p$ the run must follow an edge labelled by $\alpha(p) \in \bool$.
$G$ \emphasis{accepts} $\alpha$ if there is a run on $\alpha$ reaching the sink $1$.
We extend $\alpha$ to a map from all NBPs to $\bool$ by setting $\alpha(G) = 1 $ iff $G$ accepts $\alpha$.
Thus
each NBP \emphasis{computes} a unique Boolean function $\alpha \mapsto \alpha(G)$.

Further background on (non-deterministic) branching programs can be found in, e.g., the textbook \cite{Weg00:bps-and-bdds}.

\subsection{Representation via formulas with extension}

In order to syntactically represent (N)BPs, in addition to the propositional variables we fixed earlier we will also make use of a disjoint set of \emphasis{extension variables}, written $e_0,e_1,\dots$.
We briefly recall the syntax for expressing and reasoning about (non-deterministic) branching programs from \cite{DBLP:conf/csl/BussDasKnop,BussDasKnop19:preprint,Das-Del,DasDel25:pos-bps-journal}.
We refer the reader to those works for further examples and foundational details.

 \emph{Extended non-deterministic decision-tree} formulas, or simply \emphasis{\exndt-formulas}, written $A,B$ etc., are generated by the grammar:
\[
A,B \quad \bnf \quad 
0 \ \mid \ 1\ \mid \ ApB \ \mid\ A \lor B \ \mid \  \ e_i
\]
% where $p$ ranges over propositional variables. 

When writing formulas we employ some usual syntactic simplifications, omitting external and internal brackets when there is no ambiguity.
The \emphasis{size} of a formula $A$, written $|A|$, is the number of symbols occurring in $A$.

We often identify the propositional variable $p$ with the formula $0p1$.
\emphasis{\exdt-formulas} are \exndt-formulas not using $\lor$, and \nndt-formulas are \exnndt-formulas not using extension variables.  

\begin{remark}
    [Decision variables]
    Note that, since extension variables are disjoint from propositional variables, $Ae_iB$ is never a correct \exndt-formula; this will turn out to be important for our syntactic representation of (N)BPs to be faithful.
\end{remark}

Semantically $\lor$ is interpreted as usual disjunction, while $\dec A p B$ is interpreted as a decision ``if $p$ then $B$ else $A$''.  
Thus formulas without extension variables compute Boolean functions as usual.
Call such a formula \emphasis{valid} if it returns $1$ on all assignments.
The following result renders systems reasoning about even extension-free formulas suitable for proof complexity analysis:

\begin{propC}
[\cite{DBLP:conf/csl/BussDasKnop,BussDasKnop19:preprint}]
    \label{coNP-completeness}
    The set of valid \ndt\ formulas (even disjunctions of \dt\ formulas) is $\coNP$-complete.
\end{propC}

\noindent %FIXED indent
 The interpretation of extension variables is parametrised by a set of {`extension axioms'}. These axioms give meaning to extension variables by rendering them abbreviations of more complex formulas.
 Viewing (N)BPs as graphs, extension variables are essentially used to `name' the nodes.

\begin{definition}
[Extension axioms]
\label{extension-axioms-definition}
A set of \emphasis{extension axioms} $\E$ is a set of the form $ \{ \ee{i}{} \extiff E_i \}_{i<n} $, where each formula $E_i$ may only contain extension variables among $\ee{0}{}, \dots, \ee{i-1}{}$.  
\end{definition}

 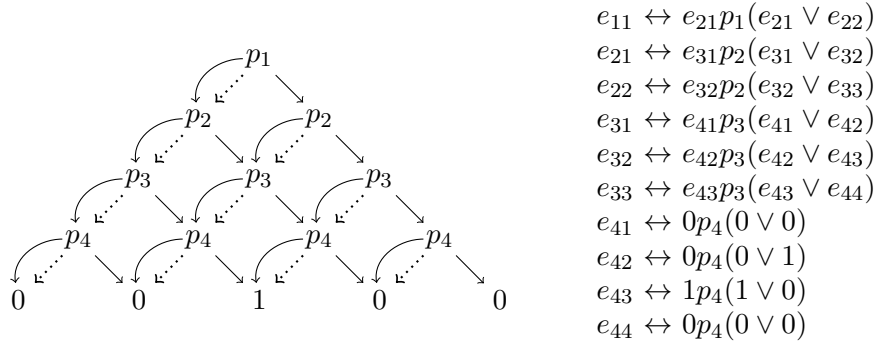
\begin{figure}[t]
    \[
    \raisebox{-0.5\height}{
\begin{tikzpicture}[scale=.8, auto,swap]
\foreach \pos/\name/\disp in {
  {(0,4)/1/$p_1$}, 
  {(-1,3)/2/$p_2$},
  {(1,3)/3/$p_2$}, 
  {(-2,2)/4/$p_3$}, 
  {(0,2)/5/$p_3$},
  {(2,2)/6/$p_3$}, 
  {(-3,1)/8/$p_4$},
  {(-1,1)/10/$p_4$},
  {(1,1)/11/$p_4$},
  {(3,1)/12/$p_4$},
  {(-4,-0)/15/$0$},
  {(-2,-0)/16/$0$},
  {(0,-0)/17/$1$},
  {(+2,-0)/18/$0$},
  {(+4,-0.008)/19/$0$}}
\node[minimum size=12pt,inner sep=0pt] (\name) at \pos {\disp};

    \draw [->][thick,dotted](1) to (2);
    \draw [->][thin](1) to (3);
    
    \draw [->][thick,dotted](2) to (4);
    \draw [->][thin](2) to (5);
     
    \draw [->][thick,dotted](3) to (5);
    \draw [->][thin](3) to (6);

     \draw [->][thick,dotted](4) to (8);
     \draw [->][thin](4) to (10);
      \draw [->][thick,dotted](5) to (10);
    \draw [->][thin](5) to (11);
     \draw [->][thick,dotted](6) to (11);
     \draw [->][thin](6) to (12);
     \draw [->][thin](8) to (16);
     \draw [->][thick,dotted](8) to (15);
      \draw [->][thin](10) to (17);
     \draw [->][thick,dotted](10) to (16); \draw [->][thin](11) to (18);
     \draw [->][thick,dotted](11) to (17); \draw [->][thin](12) to (19);
     \draw [->][thick,dotted](12) to (18);
     \draw [->][thin]
    (1) [out=180, in=100] to  (2); \draw [->][thin]
    (2) [out=180, in=100] to  (4); \draw [->][thin]
    (4) [out=180, in=100] to  (8); \draw [->][thin]
    (8) [out=180, in=100] to  (15); \draw [->][thin]
    (3) [out=180, in=100] to  (5); \draw [->][thin]
    (5) [out=180, in=100] to  (10); \draw [->][thin]
    (10) [out=180, in=100] to  (16); \draw [->][thin]
    (6) [out=180, in=100] to  (11); \draw [->][thin]
    (11) [out=180, in=100] to  (17); \draw [->][thin]
    (12) [out=180, in=100] to  (18);
\end{tikzpicture}
}
\qquad
{
\begin{array}{r@{\ \extiff \ }l}
    \ee 1 1 & \posdec {\ee 2 1 } {p_1}{\ee 2 2} \\
    \ee 2 1 & \posdec {\ee 3 1 } {p_2}{\ee 3 2} \\
    \ee 2 2  & \posdec {\ee 3 2 } {p_2}{\ee 3 3} \\
    \ee 3 1  & \posdec {\ee 4 1 } {p_3}{\ee 4 2} \\
    \ee 3 2  & \posdec {\ee 4 2 } {p_3}{\ee 4 3} \\
    \ee 3 3  & \posdec {\ee 4 3 } {p_3}{\ee 4 4} \\
    \ee 4 1  & \posdec 0 {p_4} 0 \\
    \ee 4 2  & \posdec 0 {p_4} 1 \\
    \ee 4 3  & \posdec 1 {p_4} 0 \\
    \ee 4 4  & \posdec 0 {p_4} 0
\end{array}
}
\]
\vspace{-3ex}
\caption{NBP for 2-out-of-4 threshold and representation by extension axioms.
Here $0$-edges are dotted, and $1$-edges are solid; the multiple $0$-leaves correspond to the same sink.
}
\label{fig:thresh-nbp-4-2}
\end{figure}

\begin{example}
[2-out-of-4 threshold]
\label{example-2-4-thresh}
The 2-out-of-4 threshold function, returning 1 iff at least two of its four inputs are 1, is computed by the NBP on the left of Fig.~\ref{fig:thresh-nbp-4-2}. 
% Note that it can be construed as the `monotone closure' of a BP (even OBDD) for the 2-out-of-4 Exact function, returning 1 if \emph{exactly} two of its four inputs are 1.
This NBP may be represented by the extension variable $\ee 1 1 $ under the extension axioms on the right of Fig.~\ref{fig:thresh-nbp-4-2}.
Each $\ee i j$ represents the $j$\textsuperscript{th} node (left to right) on the $i$\textsuperscript{th} row (top to bottom) for $1\leq i\leq 4$ and $1 \leq j\leq i$. 

For well-foundedness of the extension axioms (i.e.\ the subscripting condition of \cref{extension-axioms-definition}), note that we may identify  each $\ee ij$ with $\ee {4(4-i) + j}{}$. 
We shall typically leave such identifications implicit in the sequel.
\end{example}
% \todoanu[]{31-07: copied from CiE paper, check indices again!}{}

Note in \cref{extension-axioms-definition} the condition that each $e_i$ `abbreviates' only formulas with extension variables of smaller index.
This means that the graphs described by formulas under a set of extension axioms is well-founded, i.e.\ a dag, inducing a natural induction principle:

\begin{remark}
[$\mathcal E$-induction]
\label{A-induction}
Given a set of extension axioms $\mathcal E = \{ \ee{i}{} \extiff E_i \}_{i<n} $ we may define a strict partial order $<_\mathcal E$ on formulas over $\ee{0}{},\dots, e_{n-1}$ by:
\begin{itemize}
    \item $A <_\mathcal E \dec A p B$ and $B <_\mathcal E \dec A p B$.
    \item $A <_\mathcal E A \lor B$ and $B <_\mathcal E A \lor B$.
    \item $E_i <_\mathcal E \ee{i}{}$, for each $i<n$.
\end{itemize}
Observe that $<_\mathcal E$ is indeed well-founded by the condition that each $E_i$ must contain only extension variables among $\ee{0}{}, \dots, \ee{i-1}{}$.
Thus we may carry out arguments and make definitions by induction on $<_\mathcal E$, which we shall simply refer to as \emphasis{$\mathcal E$-induction}. 
\end{remark}

From here, for instance, we obtain:

\begin{definition}
[Semantics of \exndt\ formulas]
\label{dfn:sat-\exndt-wrt-extax}
The semantics of \exndt\ formulas with respect to a set of extension axioms $\mathcal E = \{\ee{i}{}{}{} \extiff E_i \}_{i<n}$, are given by $\sat{\mathcal E}$,  an infix binary relation between assignments and formulas over $\ee{0}{}, \dots, \ee{n-1}{}{}{}$ defined by $\mathcal E$-induction by:
\begin{itemize}
\item $\alpha \notsat {\mathcal E} 0$ and $\alpha \sat {\mathcal E} 1$.
    % \item $\alpha \sat {\mathcal E} p$ if $\alpha(p) = 1$.
        \item $\alpha \sat {\mathcal E} \dec A p B$ if either $\alpha(p)=0$ and $\alpha \sat {\mathcal E} A$, or $\alpha(p)=1$ and $\alpha \sat {\mathcal E} B$. 
    \item $\alpha \sat {\mathcal E} A \lor B$ if $\alpha \sat {\mathcal E} A$ or $\alpha \sat {\mathcal E} B$.
    \item $\alpha \sat {\mathcal E} \ee{i}{}{}{}$ if $\alpha \sat {\mathcal E} E_i$.
\end{itemize}
If $ \alpha \sat {\mathcal E} A \iff f(\alpha) = 1$  for some Boolean function $f$, we say that $A$ \emphasis{computes $f$ over} $\mathcal E$.
We may also say that $A$ is \emphasis{valid over $\mathcal E$} if, for every assignment $\alpha$ we have $\alpha \sat{\mathcal E} A$ (equivalently, $A$ computes the constant function $1$ over $\E$).
\end{definition}

Since many of our arguments are based on $\mathcal E$-induction, let us recall the following complexity analysis from earlier work: 
\begin{proposition}
[Complexity of $\mathcal E$-induction, e.g.\ \cite{DasDel25:pos-bps-journal}]
\label{complexity-of-A-induction}
Let $A$ be an \exdt\ or \exndt\ formula over a set of extension axioms $\mathcal E = \{ \ee{i}{}{}{} \extiff E_i\}_{i<n} $.
Then $| \{ B<_\mathcal E A\}| \leq |A| + \sum\limits_{i<n} |E_i|$ and, if $B<_\mathcal E A$, then $|B|\leq \max (|A|,|E_0|, \dots, |E_{n-1}|)$.
\end{proposition}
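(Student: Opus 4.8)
The plan is to prove both bounds simultaneously by structural induction on the formula $A$, reading off the recursive structure of the relation $<_{\mathcal E}$ directly from \Cref{A-induction}. For the first claim, let me write $S(A) := \{ B : B <_{\mathcal E} A\}$ (the set of formulas below $A$ in the order, i.e.\ the formulas occurring in the ``$\mathcal E$-unfolding'' of $A$), and aim to show $|S(A)| \le |A| + \sum_{i<n}|E_i|$. The base cases are $A \in \{0,1\}$ (or $A = e_i$ with $E_i$ not itself mentioning extension variables, but that is subsumed), where $S(0) = S(1) = \emptyset$ and the bound is immediate. For the inductive step there are three cases matching the three clauses of the order:

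First, if $A = \dec C p D$, then $S(A) = \{C, D\} \cup S(C) \cup S(D)$, since $<_{\mathcal E}$ is the transitive closure of the generating clauses. Crudely $|S(A)| \le 2 + |S(C)| + |S(D)|$; using the inductive hypothesis for $C$ and $D$ one gets something like $2 + |C| + |D| + 2\sum_i |E_i|$, which is \emph{too weak} — the $\sum_i|E_i|$ term gets doubled. This is the point that needs care, and is, I expect, the main obstacle: the naive induction double-counts the contribution of the extension axioms, because the same $E_i$ is reachable from both $C$ and $D$. The fix is to strengthen the induction hypothesis, or rather to account for sharing: one should instead bound $|S(A)|$ by $|A|$ plus the sum of $|E_i|$ taken over only those $i$ such that $e_i$ actually occurs in $S(A) \cup \{A\}$, and observe that each relevant $E_i$ is counted once because $S(A)$ is a \emph{set}. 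Concretely, the right way to organise it is: every element of $S(A)$ is either a subformula of $A$, or a subformula of some $E_i$ with $e_i$ reachable; bound the first kind by $|A|$ and the second kind by $\sum_{i<n}|E_i|$ globally, not per-branch. So the induction should be phrased as ``$S(A) \subseteq (\text{subformulas of } A) \cup \bigcup_{i<n}(\text{subformulas of } E_i)$,'' from which the cardinality bound $|S(A)| \le |A| + \sum_{i<n}|E_i|$ follows at once since the number of subformulas of a formula is at most its size. That containment is proved by a clean $\mathcal E$-induction: each generating clause of $<_{\mathcal E}$ either passes to an immediate subformula ($C <_{\mathcal E} \dec C p D$, $C <_{\mathcal E} C\lor D$, and symmetrically $D$) or steps from $e_i$ into $E_i$ (then all of $E_i$'s subformulas, plus recursively whatever is below them, which by induction lands inside $\bigcup_j(\text{subformulas of }E_j)$).

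For the second claim — if $B <_{\mathcal E} A$ then $|B| \le \max(|A|, |E_0|,\dots,|E_{n-1}|)$ — I would again use $\mathcal E$-induction on $A$, or equivalently induct on the derivation of $B <_{\mathcal E} A$. If $B <_{\mathcal E} A$ was generated directly by the subformula clauses ($A = \dec B p D$, $A = B \lor D$, etc.), then $B$ is a proper subformula of $A$, so $|B| < |A| \le \max(\dots)$. If it was generated via $E_i <_{\mathcal E} e_i = A$, then $B = E_i$ or $B <_{\mathcal E} E_i$; in the first case $|B| = |E_i| \le \max(\dots)$, and in the second the inductive hypothesis applied to $E_i$ gives $|B| \le \max(|E_i|, |E_0|,\dots,|E_{n-1}|) = \max_j |E_j| \le \max(\dots)$. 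The transitive case ($B <_{\mathcal E} C <_{\mathcal E} A$) is handled by composing the two inequalities, using that $\max$ is monotone and that the right-hand side $\max(|A|,|E_0|,\dots,|E_{n-1}|)$ dominates $\max(|C|,|E_0|,\dots,|E_{n-1}|)$ whenever $|C| \le \max(|A|,|E_0|,\dots,|E_{n-1}|)$, which the first inequality supplies. Both arguments are elementary once the induction is set up; the only genuine subtlety is, as noted, ensuring the first bound is stated set-theoretically so that the shared extension axioms are not over-counted.
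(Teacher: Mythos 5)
Your argument is correct. Note that the paper itself gives no proof of this proposition—it is recalled from \cite{DasDel25:pos-bps-journal}—so there is nothing to compare against, but your proof is exactly the standard one: you rightly identify that a naive per-branch induction double-counts $\sum_{i<n}|E_i|$, and the fix of proving the set containment $\{B<_{\mathcal E}A\}\subseteq \mathrm{Sub}(A)\cup\bigcup_{i<n}\mathrm{Sub}(E_i)$ by $\mathcal E$-induction, then bounding distinct subformulas by size, gives the first inequality, while the induction on the derivation of $B<_{\mathcal E}A$ (with the transitivity case absorbed by monotonicity of $\max$) gives the second.
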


\begin{figure}
    \textbf{Identity and cut:}
    \smallskip
    \[   
    % \small
    \vlinf{\id}{}{A \seqar              A}{}
    \qquad
    \vliinf{\cut}{}{\Gamma \seqar                 \Delta}{\Gamma \seqar                 \Delta, A}{\Gamma, A \seqar                 \Delta}
    \]
    
    \medskip
    
    \textbf{Structural rules:}
    \smallskip
    \[
    % \small
    \begin{array}{c@{\qquad}c@{\qquad}c}
        \vlinf{\lefrul\exch}{}{\Gamma, B,A ,\Gamma'\seqar \Delta}{\Gamma, A, B ,\Gamma' \seqar \Delta}  
        &
        \vlinf{\lefrul \wk}{}{\Gamma, A \seqar                 \Delta}{\Gamma \seqar                 \Delta}
        &
        \vlinf{\lefrul\cntr}{}{\Gamma, A \seqar                 \Delta}{\Gamma, A, A \seqar                 \Delta}
        \\
        \noalign{\smallskip}
        \vlinf{\rigrul\exch}{}{\Gamma\seqar \Delta, B,A ,\Delta'}{\Gamma \seqar \Delta, A, B , \Delta'}
         & 
         \vlinf{\rigrul \wk}{}{\Gamma \seqar                 \Delta, A}{\Gamma \seqar                 \Delta}
         &
         \vlinf{\rigrul\cntr}{}{\Gamma \seqar                 \Delta, A}{\Gamma \seqar                 \Delta, A, A}
    \end{array}
    \]

\medskip

\textbf{Logical rules:}
\smallskip
\[
% \small
\begin{array}{c@{\quad}c@{\quad}c@{\quad}}
      \vlinf{ \lefrul 0}{}{0 \seqar                 }{} &
        \vlinf{\lefrul 1 }{}{\Gamma, 1 \seqar \Delta}{\Gamma \seqar \Delta} 
     & 
     \vliinf{\lefrul \lor}{}{\Gamma, A \lor B \seqar                 \Delta}{\Gamma , A \seqar                 \Delta}{\Gamma, B \seqar                 \Delta}
     \\
     \vlinf{\rigrul 0}{}{\Gamma \seqar \Delta,0}{\Gamma\seqar \Delta}
     &
     \vlinf{ \rigrul 1}{}{\seqar                 1}{}
     &
     \vlinf{\rigrul \lor}{}{\Gamma \seqar                 \Delta, A\lor B}{\Gamma \seqar                 \Delta, A,B}
\end{array}
\]
\[
   \vliinf{\lefrul p}{}{\Gamma, \dec A p B \seqar                 \Delta}{\Gamma, A \seqar                 \Delta, p}{\Gamma, p, B \seqar                 \Delta}
   \qquad
 \vliinf{\rigrul p}{}{\Gamma \seqar                 \Delta, \dec A p B}{\Gamma \seqar                 \Delta, A, p}{\Gamma, p \seqar                 \Delta , B} 
\]

    \caption{Rules for systems $\mathsf{(e)}\ldt$ and $\mathsf{(e)}\lndt$.}
    \label{fig:lndt}
\end{figure}

\subsection{Proof systems}

\label{section: proof systems for NBPs}

We now recall the sequent systems $\eldt$ and $\elndt$ reasoning over \exdt\ and \exndt\ formulas, respectively, first introduced in \cite{BussDasKnop19:preprint,DBLP:conf/csl/BussDasKnop}.\footnote{The qualifier $\mathsf L$ here is just a standard indication for a sequent system over the corresponding logic, e.g.\ $\mathsf{LK}$ for classical logic and $\mathsf{LJ}$ for intuitionistic logic, originating from Gentzen's works.} 
Lines in these systems represent deterministic and non-deterministic branching programs respectively.

A \emphasis{sequent} is an expression $\Gamma \seqar \Delta$, where $\Gamma$ and $\Delta $ (called \emphasis{cedents})
are lists of formulas (`$\seqar$' is just a syntactic delimiter).
Such a sequent is interpreted as ``if all formulas in the antecedent $\Gamma$ are true then some formula of the succedent  $\Delta$ is true''.

\begin{definition}
[Systems for \exnndt\ formulas]
\label{eldt-and-elndt-systems}
The system $\lndt$ is given by the rules in \cref{fig:lndt}.
  $\ldt$ is the restriction of $\lndt$ not including rules for disjunction, $\lor$. 
  
  An $\lndt$ ($\ldt$) \emphasis{derivation} of $\Gamma \seqar                 \Delta$ from \emphasis{hypotheses} $\mathcal H = \{\Gamma_i \seqar                 \Delta_i\}_{i\in I}$ is defined as expected: it is a finite list of sequents, each either some $\Gamma_i \seqar                 \Delta_i$ from $\mathcal H$ or following from previous ones by rules of $\lndt$ ($\ldt$, resp.) and ending with $\Gamma \seqar   \Delta$.
An $\elndt$ ($\eldt$) \emphasis{proof} of a sequent $\Gamma \seqar \Delta$ over a set of ($\lor$-free, resp.) extension axioms $\E$ is an $\lndt$ ($\ldt$, resp.) derivation $\pi$ with hypotheses $\E$.\footnote{$A \extiff B$ is construed as an abbreviation for the pair of sequents $A \seqar B $ and $B \seqar A$.}
If $\Gamma\seqar \Delta$ does not have extension variables we just say that $\pi$ is an $\elndt$ (or $\eldt$, resp.) proof of $\Gamma \seqar \Delta$.

 The \emphasis{size} of a  derivation $\pi$, written $|\pi|$, is the number of symbols occurring in it.
\end{definition}

% Note that, as usual, it is not important whether  cedents are construed as  multisets or lists, but working with lists facilitates the definition of long disjunctions and conjunctions later when we work with games.

\begin{remark}
[Cedents as lists]
    As usual, it is not important whether cedents are construed as lists, multisets or sets for studying proof complexity of $\eldt$ and $\elndt$.
    However working with lists accommodates more faithfully the definition of long disjunctions and conjunctions later when we work with games.
\end{remark}

% As expected we have:

\begin{proposition}
    [Soundness and completeness, \cite{DBLP:conf/csl/BussDasKnop,BussDasKnop19:preprint}]
    \label{soundness-completeness}
    Suppose $\Gamma, \Delta $ are sequents of \nndt\ formulas.
    Then, respectively, $\lnndt$ proves $\Gamma \seqar \Delta$ iff 
    for every assignment $\alpha$, either $\alpha\not \models A$ for some $A \in \Gamma$ or $\alpha \models B$ for some $B\in \Delta$. 
    % $\ldt$ and $\lndt$ are sound and complete over \dt\ and \ndt\ sequents, respectively.
\end{proposition}

\noindent
Note that, as the validity problem for disjunctions \dt\ formulas is $\coNP$-complete, cf.~\cref{coNP-completeness}, so is the provability of \dt-sequents. 
In fact we can establish a stronger version of the above result, relativised to a set of extension axioms, though we do not ever rely on this.
Nonetheless some of our proof translations will be parametrised by sets of extension axioms.

\begin{example}
\label{ex:eij-in-eik}
    Recalling \cref{example-2-4-thresh}, let us write $\E$ for the set of extension axioms in \cref{fig:thresh-nbp-4-2}, right.
    Over this set we have $\elndt $ proofs of, say, $\ee 4 2 \seqar \ee 4 3$ by,
    \begin{equation}
        \label{eq:e42-imp-e43}
    \vlderivation{
    \vlin{\lefrul \E}{}{\ee 4 2 \seqar \ee 4 3}{
    \vliin{\lefrul{p_4}}{}{\posdec 0 {p_4} 1 \seqar \ee 4 3 }{
        \vliq{\wk}{}{0 \seqar \ee 4 3 , p_4}{
        \vlin{\lefrul 0 }{}{0 \seqar }{\vlhy{}}
        }
    }{
        \vlin{\rigrul \E}{}{p_4, 0\lor 1 \seqar \ee 4 3}{
        \vliin{\rigrul {p_4} }{}{p_4, 0\lor 1 \seqar \posdec 1 {p_4} 0}{
            \vliq{\wk}{}{p_4 , 0 \lor 1 \seqar 1, p_4}{
            \vlin{\rigrul 1}{}{\seqar 1}{\vlhy{}}
            }
        }{
            \vlin{\rigrul \lor }{}{p_4,0\lor 1 \seqar 1 \lor 0}{
            \vliq{\wk}{}{p_4, 0 \lor 1 \seqar 1,0}{
            \vlin{\rigrul 1}{}{\seqar 1}{\vlhy{}}
            }
            }
        }
    }
    }
    }
    }
    \end{equation}
    where, formally, the steps $\lefrul \E$ and $\rigrul \E$ are obtained by `cutting' against the appropriate extension hypotheses.
    We have also omitted several `contraction' and `exchange' steps, essentially treating each side of the sequent as \emph{sets} of formulas instead of lists for convenience.
    Note that, while we have displayed the proof above as a tree, proofs may in general be \emph{dags}, so that when measuring proof complexity it suffices to count only the number of symbols in \emph{distinct} sequents. E.g.\ we may \emph{identify} the two initial sequents $\seqar 1$ above.

       \newcommand{\storage}{\eqref{eq:e42-imp-e43}}
    
   We can use this to build further proofs, e.g.\ of $\ee 3 2 \seqar \ee 33$: 
   \begin{equation}
       \label{eq:e32-in-e33}
    \small
    \fontdimen6\textfont2=1em %FIXED Fixed spacing
    \vlderivation{
    \vlin{\E}{}{\ee 3 2 \!\seqar\! \ee 33 }{
    \vliin{\lefrul{p_3}}{}{\posdec {\ee 4 2 }{p_3}{\ee 4 3} \!\seqar\! \posdec {\ee43}{p_3}{\ee44}}{
        \vliin{\rigrul{p_3}}{}{\ee 4 2 \!\seqar\! p_3, \posdec{\ee 4 3 }{p_3}{\ee 4 4 }}{
            \vlin{\rigrul \wk}{}{\ee 4 2 \!\seqar\! p_3, \ee 43 }{
            \vliq{}{}{\ee 4 2 \!\seqar\! \ee 43}{\vlhy{\text{\storage}}}
            }
        }{
            \vliq{\wk}{}{\ee 4 2 , p_3 \!\seqar\! p_3 \ee 4 3 \lor \ee 44 }{
            \vlin{\id}{}{p_3\!\seqar\! p_3}{\vlhy{}}
            }
        }
    }{
        \vliin{\rigrul{p_3}}{}{p_3 , \ee 4 2 \lor \ee 4 2 \!\seqar\! \posdec {\ee 43}{p_3}{\ee44}}{
            \vliq{\wk}{}{p_3, \ee42 \lor \ee43 \!\seqar\! \ee43,p_3}{
            \vlin{\id}{}{p_3 \!\seqar\! p_3}{\vlhy{}}
            }
        }{
            \vlin{\rigrul \lor}{}{p_3 , \ee42 \lor \ee43 \!\seqar\! \ee43\lor\ee44}{
            \vliq{\wk}{}{p_3, \ee42\lor\ee43 \!\seqar\! \ee43,\ee44}{
            \vliin{\lefrul\lor}{}{\ee42\lor \ee43 \!\seqar\! \ee43}{
                \vliq{}{}{\ee42\!\seqar\!\ee43}{\vlhy{\text{\storage}}}
            }{
                \vlin{\id}{}{\ee43\!\seqar\!\ee43}{\vlhy{}}
            }
            }
            }
        }
    }
    }
    }
   \end{equation}
   More generally we can build $\elndt$ proofs of all $\ee i j \seqar \ee i k$ for $j\leq k$.
\end{example}

% \begin{definition}
%     [Systems]
%     The system $\lnndt$ is defined as the extension of the sequent calculus $\lk$ (restricted to \exnndt\ formulas, see e.g.\ \cite{Das-Del}) by the rules:
%     \[
%        \vliinf{\lefrul p}{}{\Gamma, \dec A p B \seqar                 \Delta}{\Gamma, A \seqar                 \Delta, p}{\Gamma, p, B \seqar                 \Delta}
% \qquad 
%      \vliinf{\rigrul p}{}{\Gamma \seqar                 \Delta, \dec A p B}{\Gamma \seqar                 \Delta, A, p}{\Gamma, p \seqar                 \Delta , B} 
%     \]
%     The system $\elnndt$ consists of $\lnndt$ derivations that are allowed to use a set of \exnndt\ extension axioms as hypotheses, but with conclusions (typically) free of extension variables.
% \end{definition}
% \footnote{We typically require that the conclusion of such proofs is free of extension variables, or otherwise make clear the ambient set of extension axioms.}
% The \emphasis{size} of a derivation $\pi$, written $|\pi|$, is the number of symbols occurring in it.

% Let us see more examples of $\elndt$ proofs:

\begin{example}
We can build a $\lndt$ proof of $\dec {(\dec A p B)} q D \seqar \dec A q D \lor p$ as follows:
\[
\small
\vlderivation{
\vlin{\lor}{}{\dec {(\dec A p B)} q D \seqar \dec A q D \lor p}{
        \vliin{\lefrul q}{}{\dec {(\dec A p B)} q D \seqar \dec A q D , p}{
            \vliin{\lefrul p}{}{\dec A p B \seqar \dec A q D, q, p}{
                \vliin{\rigrul q}{}{A \seqar \dec A q D , q , p}{
                    \vliq{\wk}{}{A \seqar A , q, p}{
                    \vlin{\id}{}{A \seqar A}{\vlhy{}}
                    }
                }{
                    \vliq{\wk}{}{q \seqar D,q,p}{
                    \vlin{\id}{}{q \seqar q}{\vlhy{}}
                    }
                }
            }{
                \vliq{\wk}{}{p,B \seqar \dec A q D , q, p}{
                \vlin{\id}{}{p\seqar p}{\vlhy{}}
                }
            }
        }{
            \vliin{\rigrul q}{}{q,D\seqar \dec A q D,p}{
                \vliq{\wk}{}{q,D\seqar A,q,p}{
                \vlin{\id}{}{q \seqar q}{\vlhy{}}
                }
            }{
                \vliq{\wk}{}{q,D \seqar D,p}{
                \vlin{\id}{}{D \seqar D}{\vlhy{}}
                }
            }
        }
}
}
\]
This has an $\ldt$ subproof of $\dec {(\dec A p B)} q D \seqar \dec A q D , p$.
We can similarly build an $\ldt$ proof of $\dec {(\dec A p B)} q D , p \seqar \dec B q D$. 
By combining these two as below we can obtain the `distribution' property $\dec {(\dec A p B)} q D \seqar \dec {(\dec A q D)} p {(\dec B q D)}$ in $\ldt$:
\[
\vliinf{\rigrul p}{}{\dec {(\dec A p B)} q D \seqar \dec {(\dec A q D)} p {(\dec B q D)}}{\dec {(\dec A p B)} q D \seqar \dec A q D , p}{\dec {(\dec A p B)} q D , p \seqar \dec B q D}
\]
\end{example}

\section{Prover-Adversary games for (non-deterministic) branching programs}
\label{sec:games}

 The goal of this section is to present games that correspond to $\elnndt$ in the same way that the Boolean formula game of Pudl\'ak and Buss corresponds to Frege \cite{PudBuss}.
% % Before going on to specialise the abstract framework of games in this way, we shall 
 The basic idea of these games is simple.
 Two players, \emphasis{Prover} and \emphasis{Adversary} alternate turns as follows: Prover asks `queries', and {Adversary} assigns them Boolean values. 
 Prover wins if the set of Adversary's answers ever include a `simple contradiction'. 
Such games are duly parametrised by their notions of `query' and `simple contradiction'.
 Before presenting our games,
 we first need to develop some further machinery for defining these data.

\subsection{`Similar' representations of branching programs}
Since (N)BPs have underlying dag structure that we represent via extension, we must address the matter of \emph{equivalent} representations.
While we can mostly sidestep this issue for $\eldt $ and $\elndt$, with different representations being demonstrably equivalent by way of polynomial-size proofs, in the setting of games we must be more resource-conscious of the number of rounds it takes to win (which should be logarithmic).
For this reason 
we shall `bootstrap' our games to include native contradictory specifications, namely 
for a notion of \emph{simulation} of transition systems specialised to our syntax.
In this way, we can really see our games as operating \emph{directly} on (N)BPs rather than their representations, equating bisimilar programs.

We define a judgement $A\simulates \E B$  meaning that the NBP represented by $A$ (over $\E$) \emph{simulates} the NBP represented by $B$ (over $\E$), as transition systems.

% Since (N)BPs have underlying dag structure that we represent via extension, we must address the matter of equivalent representations.
% % While we can mostly sidestep this issue for $\eldt $ and $\elndt$, with different representations being demonstrably equivalent by way of polynomial-size proofs, in the setting of games we must be more resource-conscious of the number of rounds it takes to win (which should be logarithmic).
% % For this reason 
% We shall bootstrap our games to include native `winning conditions' to this end, namely 
% for a notion of \emph{simulation} of transition systems specialised to our syntax.
% In this way, we can really see our games as operating \emph{directly} on (N)BPs rather than their representations, equating bisimilar programs.

\begin{definition}
[Simulation]
    Let $\E = \{e_i\dseqar E_i \}_{i<n}$ be a set of extension axioms.
    We define the judgement $A\simulates\E B$, read `\emphasis{$A$ $\E$-simulates $B$}', by:
    \begin{equation}
        \label{eq:simulation-rules}
    \begin{array}{c@{\qquad}c@{\qquad}c}
        \vlinf{}{}{A \simulates \E A}{} 
        &
        \vliinf{}{}{A\simulates \E C\lor D}{A\simulates \E C}{A\simulates \E D}
        &
        \vlinf{}{}{A\simulates \E e_i}{A\simulates \E E_i}
        \\
        \noalign{\smallskip}
          \vliinf{}{}{ApB \simulates \E CpD}{A\simulates\E C}{B\simulates \E D}
         & 
    \vlinf{}{}{A_0 \lor A_1 \simulates \E B}{A_i \simulates \E B}
    &
    \vlinf{}{}{e_i \simulates \E B}{E_i \simulates \E B}
    \end{array}
     \end{equation}
    % \[
    % \small
    % \vlinf{}{}{A \simulates \E A}{}
    % \quad
    % \vliinf{}{}{A\simulates \E C\lor D}{A\simulates \E C}{A\simulates \E D}
    % \quad
    % \vlinf{}{}{A\simulates \E e_i}{A\simulates \E E_i}
    % \quad
    % \vlinf{}{}{A_0 \lor A_1 \simulates \E CpD}{A_i \simulates \E CpD}
    % \quad
    % \vlinf{}{}{e_i \simulates \E CpD}{E_i \simulates \E CpD}
    % \quad
    % \vliinf{}{}{ApB \simulates \E CpD}{A\simulates\E C}{B\simulates \E D}
    % \]
\end{definition}

\begin{remark}    
It is not difficult to see that $A\simulates \E B$ if and only if the NBP represented by $A$ (over $\E$) \emph{simulates} the NBP represented by $B$ (over $\E$), as labelled transition systems (see e.g.\ \cite[Section~7.4]{book:principles-model-checking} for definitions).  
The forwards implication is by induction on the definition of $\simulates\E$, while the backwards implication follows
by $\E$-induction on $B$ then $A$.
\end{remark}

% Note that, in the middle rule on the second line, it is important that the RHS is a decision formula in order to appropriately model simulation of transition systems: just because $A_0\lor A_1$ simulates $B_0 \lor B_1$ does not mean that one of $A_0$ or $A_1$ does. 
% For uniformity, we have defined the judgement to expand the RHS maximally, bottom-up, before the LHS.

\begin{proposition}
[Simulation in $\elndt$]
\label{simulation-has-polysize-proofs}
\label{simulation result}
Let $\E$ be a set of \exndt\ extension axioms and
suppose $A\simulates\E B$.
There are polynomial-size $\elndt$ proofs of $B\seqar A$ over $\E$.
% Moreover, the only cuts in these proofs are against extension axioms.\todoanu[]{point of qualifier is to get deterministic case for free}
If $\E,A,B$ are $\lor$-free then these proofs are furthermore in $\eldt$.
\end{proposition}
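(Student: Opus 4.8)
The plan is to argue by induction on the derivation of $A \simulates \E B$, constructing for each rule a polynomial-size $\elndt$ proof of $B \seqar A$ over $\E$, reusing (via the dag structure of proofs) the proofs obtained for the premises. Since there are six rules in \eqref{eq:simulation-rules}, there are six cases; each should admit a short, uniform $\lndt$ derivation gadget that prepends a constant number of inference steps to the inductively-obtained subproof(s). The overall size bound then follows because the derivation of $A \simulates \E B$ has polynomially many nodes (one per distinct pair of subformulas occurring, bounded using \cref{complexity-of-A-induction}), and each contributes only a constant-size block to the final proof.

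For the individual cases: the reflexivity rule $A \simulates \E A$ is handled by the $\id$ axiom $A \seqar A$. The two extension-variable rules $A \simulates \E e_i$ from $A \simulates \E E_i$, and $e_i \simulates \E B$ from $E_i \simulates \E B$, are handled by cutting the inductively-obtained proof of $E_i \seqar A$ (resp.\ $B \seqar E_i$) against the appropriate extension hypothesis $e_i \extiff E_i \in \E$ (i.e.\ $e_i \seqar E_i$ and $E_i \seqar e_i$), exactly as in the $\lefrul\E$/$\rigrul\E$ steps of \cref{ex:eij-in-eik}. The disjunction-on-the-right rule, deriving $A \simulates \E C \lor D$ from $A \simulates \E C$ and $A \simulates \E D$, gives $B \seqar A$ from $C \lor D \seqar A$; here I apply $\lefrul\lor$ to the two inductive proofs of $C \seqar A$ and $D \seqar A$. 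The disjunction-on-the-left rule, deriving $A_0 \lor A_1 \simulates \E B$ from $A_i \simulates \E B$ for one $i$, needs $B \seqar A_0 \lor A_1$ from $B \seqar A_i$; apply $\rigrul\wk$ to introduce the missing disjunct then $\rigrul\lor$ (handling the ordering of disjuncts with an exchange if $i=0$). The decision rule, deriving $ApB \simulates \E CpD$ from $A \simulates \E C$ and $B \simulates \E D$, requires $CpD \seqar ApB$ from $C \seqar A$ and $D \seqar B$; this is the small congruence derivation for decisions — apply $\rigrul p$ to reduce the succedent $ApB$ to two premises $CpD \seqar A, p$ and $CpD, p \seqar B$, then on each apply $\lefrul p$ to decompose $CpD$, weakening $p$ or $\dual p$ in appropriately, leaving leaves $C \seqar A,p$ (from the inductive $C \seqar A$ by $\rigrul\wk$), $p \seqar p$, $D \seqar B$ (similarly), $p \seqar p$. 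All these blocks are $\lor$-free except the two disjunction cases, so when $\E, A, B$ are $\lor$-free the rule $A_0 \lor A_1 \simulates \E B$ and $A \simulates \E C \lor D$ cannot appear in the derivation (the relevant formulas being $\lor$-free), so the constructed proof lies in $\eldt$; this gives the final sentence.

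The main obstacle I anticipate is not any single case but the bookkeeping that makes the size bound genuinely polynomial: one must ensure the induction is over a \emph{derivation} $A \simulates \E B$ whose size is polynomial in $|A|, |B|, \sum_i |E_i|$, so that the linear-in-derivation-size blow-up is acceptable. Concretely, the subgoals $A' \simulates \E B'$ encountered all have $A' <_\E^* A$-ish and $B' <_\E^* B$-ish (for the reflexive-transitive closure of the subformula/extension order), and by \cref{complexity-of-A-induction} there are only polynomially many candidate pairs, so one should present the derivation of $A \simulates \E B$ in dag form and correspondingly build the $\elndt$ proof as a dag, identifying repeated sub-derivations; each distinct simulation-judgement then contributes one constant-size block. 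The only other point needing care is the $\rigrul p$ / $\lefrul p$ gadget for the decision case — getting the side formulas $p$ on the correct side in each of the four leaves — but this is exactly the kind of routine cut-and-weakening manipulation already illustrated in the examples preceding the statement, so I expect no real difficulty there.
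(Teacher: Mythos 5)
Your proposal is correct and follows essentially the same route as the paper: induction on the derivation of $A\simulates\E B$, with each rule of \eqref{eq:simulation-rules} simulated by a constant-size gadget (the decision-congruence case via exactly the $\rigrul p$/$\lefrul p$ derivation the paper displays), the $\lor$-free claim following since no disjunction rules can occur, and the size bound handled by sharing subproofs in dag form. The only nit is the stray mention of weakening in ``$\dual p$'' (there are no dual/negated atoms in this syntax), but the leaves you actually list are the right ones, so this is cosmetic.
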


\begin{proof}
    We proceed by induction on the definition of simulation, deriving each rule of \cref{eq:simulation-rules} in $\elndt$, noticing that we do not need to use $\lor$-steps when $\E,A,B$ are $\lor$-free.
    Almost every step is routine, requiring only a couple steps in $\elndt$, except for the lower left rule of \cref{eq:simulation-rules}, which is derived by,
    % \[
    % \begin{array}{r@{\quad \leadsto \quad }l}
    %     %  \vlinf{}{}{A \simulates \E A}{} & \vlinf{\id}{}{A \seqar A}{} \\
    %     % \vliinf{}{}{A\simulates \E C\lor D}{A\simulates \E C}{A\simulates \E D} &         \vliinf{\lefrul\cor}{}{ C\lor D \seqar A}{ C\seqar A}{D\seqar A} \\
    %     % \vlinf{}{}{A\simulates \E e_i}{A\simulates \E E_i} & \vliqf{\E}{}{e_i \seqar A}{E_i \seqar A} \\
    %     %     \vlinf{}{}{A_0 \lor A_1 \simulates \E B}{A_i \simulates \E B} & \vlinf{}{}{}{} \\
    %     \vliinf{}{}{ApB \simulates \E CpD}{A\simulates\E C}{B\simulates \E D} & 
    % % \small
    % \end{array}
    % \]
     \[
    % \small
    \vlderivation{
    \vliin{\rigrul p}{}{CpD \seqar ApB}{
        \vliin{\lefrul p}{}{CpD \seqar A,p}{
            \vlin{\wk}{}{C\seqar A,p,p}{
            \vliq{\IH}{}{C\seqar A}{\vlhy{}}
            }
        }{
            \vlin{\wk}{}{p,D \seqar A,p}{
            \vlin{\id}{}{p\seqar p}{\vlhy{}}
            }
        }
    }{
        \vliin{\lefrul p }{}{CpD , p \seqar B}{
            \vlin{\wk}{}{C,p \seqar B,p}{
            \vlin{p}{}{p\seqar p}{\vlhy{}}
            }
        }{
            \vlin{\wk}{}{p,D , p \seqar B}{
            \vliq{\IH}{}{D\seqar B}{\vlhy{}}
            }
        }
    }
    }
    \]
    where the steps marked $\IH$ are obtained by the inductive hypothesis.
\end{proof}

Note that, in the case of deterministic BPs, simulation reduces to the simpler notion of equivalence between branching programs with the same `unfolding' as decision trees.

\begin{definition}
    [Unfolding]
    Let $\E = \{e_i \extiff E_i\}_{i<n}$ be a set of $\lor$-free extension axioms and $A$  a eDT formula.
    We define the DT-formula $\unf A \E$ by $\E$-induction on $A$:
    \begin{itemize}
        \item $\unf 0 \E \dfn 0$
        \item $\unf 1 \E \dfn 1$
        \item $\unf {ApB} \E \dfn \unf A \E p \unf B \E $
        \item $\unf {e_i}\E \dfn \unf{E_i}\E $, for $i<n$.
    \end{itemize}
\end{definition}
\noindent %FIXED indent
In the presence of disjunction the natural extension of the notion of unfolding above is not canonical: the same NBP may be represented by different bracketings of $\lor$ under associativity and commutativity.
This is why we must work with the notion of simulation above in the games we introduce later.
In any case, for deterministic BPs:

\begin{proposition}\label{simulation and unfolding}
    Let $\E$ be a set of $\lor$-free extension axioms and $A,B$ eDT formulas.
    $A\simulates \E B$ if and only if $\unf A \E = \unf B \E$.
\end{proposition}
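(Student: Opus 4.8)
The plan is to prove the two implications separately, by suitable inductions, using crucially that $\lor$-freeness of $\E,A,B$ makes the $\lor$-rules of $\simulates\E$ irrelevant. For the forward direction, $A\simulates\E B \Rightarrow \unf A\E = \unf B\E$, I would induct on the derivation of $A\simulates\E B$, i.e.\ on the rules of \eqref{eq:simulation-rules}. First I would observe that no $\lor$-rule can occur in such a derivation: the premises of the decision and extension rules are again between $\lor$-free formulas (recall $E_i$ is $\lor$-free since $\E$ is), so by induction every judgement appearing in the derivation is between $\lor$-free formulas, while the two $\lor$-rules have $\lor$-bearing conclusions. That leaves four rules to treat. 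Reflexivity is trivial. For $A_0pA_1\simulates\E B_0pB_1$, obtained from $A_0\simulates\E B_0$ and $A_1\simulates\E B_1$, one unfolds the definition of $\unf{\cdot}\E$ on both sides and applies the two inductive hypotheses $\unf{A_i}\E = \unf{B_i}\E$. For the two extension rules one combines the single inductive hypothesis with the defining equation $\unf{e_i}\E \dfn \unf{E_i}\E$.

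For the backward direction, $\unf A\E = \unf B\E \Rightarrow A\simulates\E B$, I would use the nested $\E$-induction already hinted at in the remark following the definition of $\simulates\E$: an outer induction on $B$ and an inner induction on $A$ (equivalently, a well-founded induction on the multiset $\{A,B\}$ under the multiset extension of $<_\E$, which is well-founded by \cref{A-induction}). I would then case-split on the shapes of $A$ and $B$. (i) If $A = e_i$, then $\unf{E_i}\E = \unf{e_i}\E = \unf B\E$ and $E_i <_\E A$, so the inner hypothesis gives $E_i\simulates\E B$, whence $e_i\simulates\E B$ by the bottom-right rule of \eqref{eq:simulation-rules}. (ii) If $A$ is not an extension variable but $B = e_j$, then $\unf A\E = \unf{E_j}\E$ and $E_j <_\E B$, so the outer hypothesis gives $A\simulates\E E_j$, whence $A\simulates\E e_j$ by the top-right rule. (iii) If neither is an extension variable, then each of $A,B$ is $0$, $1$, or a decision node; since $\unf{\cdot}\E$ performs no simplification, $\unf{A_0pA_1}\E$ is literally the decision formula $\unf{A_0}\E\, p\, \unf{A_1}\E$ while $\unf 0\E = 0$ and $\unf 1\E = 1$, so $\unf A\E = \unf B\E$ forces $A$ and $B$ to have the same outermost form. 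If $A = B \in \{0,1\}$ we apply reflexivity; otherwise $A = A_0pA_1$ and $B = B_0pB_1$ (same node variable, by syntactic matching) with $\unf{A_i}\E = \unf{B_i}\E$, and since $B_i <_\E B$ the outer hypothesis gives $A_i\simulates\E B_i$, so the decision rule gives $A\simulates\E B$.

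The step deserving the most care is the induction structure of the backward direction: a single $\E$-induction on $A$ (or on $B$) does not close, because case (ii) replaces $B$ by the possibly larger formula $E_j$ while leaving $A$ untouched, and case (i) does the symmetric thing — so both inductions are genuinely needed, or equivalently the multiset ordering. The only other point requiring a brief argument is the observation used in case (iii) that a $\lor$-free formula which is not an extension variable and whose unfolding equals $0$ (resp.\ $1$, resp.\ is a decision node) must itself be $0$ (resp.\ $1$, resp.\ a decision node); this is immediate from the purely syntactic definition of $\unf{\cdot}\E$. No proof-size estimates enter here, as the statement is a purely semantic equivalence between representations.
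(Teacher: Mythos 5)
Your proposal is correct and follows essentially the same route as the paper's (sketched) proof: the direction $A\simulates\E B \Rightarrow \unf A\E = \unf B\E$ by induction on the simulation derivation, noting that $\lor$-freeness rules out the disjunction cases, and the converse by a nested $\E$-induction on $B$ then $A$. Your write-up merely fills in the details (the case analysis and the justification of the nested/multiset induction) that the paper leaves as routine.
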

\begin{proof}
[Proof sketch]
The $\Leftarrow$ direction is routine, following by $\E$-induction on $A$ and $B$. For the $\Rightarrow$ direction we proceed by induction on the definition of simulation.
Since $\E, A,B$ are $\lor$-free, no disjunction cases of the definition of $\simulates \E$ apply. 
All the remaining cases follow immediately from inductive hypothesis and the definition of $\unf  - \E$.
    % \begin{itemize}
    %     \item if $A,B$ are Booleans the result is immediate.
    %     \item for $A\simulates \E e_i$ we use $\unf{e_i}\E = \unf{E_i} \E$ to call the inductive hypothesis.
    %     \item for $e_i \simulates \E CqD$ we use again $\unf{e_i}\E = \unf{E_i} \E$ to call the inductive hypothesis.
    %     \item for $ApB \simulates \E CpD$, we use the definition of $\unf{}{}$ on decisions along with the inductive hypotheses . \qedhere
    % \end{itemize}
\end{proof}

\medskip

\subsection{Boolean combinations of branching programs.}

The next matter we must address is to some extent a design choice.
To prove equivalence of our games and $\elnndt$, recall that games are essentially tree-like versions of the inference system they correspond to, so we shall need to prove some sort of closure under Boolean combinations, cf.~\cite{krajivcek1994lower}.
For $\elndt$, this will amount to the formalisation of a (non-uniform) version of Immerman-\szel. 

We could carry out this work either within a game system or within an inference system.
However doing so within games requires us to again be resource conscious of the number of rounds, which must be logarithmic, and it is not clear to us that this can be duly carried out without further bootstrapping.
% Coupled with our intention that an appropriate game system should serve as a feature-rich way of reasoning \emph{about} $\elndt$, we choose to  
Thus we simply expand the queries of our game to be as expressive as possible, closing them under Boolean combinations.
This has the effect of simplifying the translation from $\elnndt$ proofs to strategies, but rendering the converse more difficult.

\begin{definition}
    [Boolean combinations]
    % Writing $A,B,C$ etc.\ as usual to range over e(N)DT formulas, 
    Write $P,Q,R$ etc.\ for \emphasis{Boolean combinations} of \exnndt\ formulas (\emphasis{\boolexnndt-formulas}), generated by,
    \[
    P,Q,R,\dots 
    \quad ::= \quad 
    A \ \mid \ \cnot Q \ \mid \ Q \lor R \ \mid \ Q \cand R
    \]
    where $A$ ranges over \exnndt\ formulas.
\end{definition}
% 
% 

    % Note that $P,Q,R$ also range over the Booleans $0,1$, themselves being DT-formulas.
The intended semantics of this extended class of formulas are as expected, again parametrised by a set of extension axioms in order to interpret the extension variables in a \exnndt-subformula.
Note that the syntax here is `two-tiered': Boolean connectives may not alternate with decisions and extensions (except $\lor$ in the case of \boolexndt). 
This is reflected by the use of different metavariables ($P,Q,R$ etc.) for Boolean combinations of branching programs.
Later we consider intermediate systems that extend $\elnndt$ by (positive) Boolean combinations, interpolating the translation from strategies to $\elnndt$.

\medskip

\subsection{Games for (non-deterministic) branching programs.}
Finally we are in a position to define our games for deterministic and non-deterministic branching programs, ultimately corresponding to $\eldt$ and $\elndt$ respectively.

A \emphasis{(Prover-Adversary) game} is given by a set of \emphasis{queries} and a set of \emphasis{(simple) contradictions}, which are sets of Boolean assignments to queries.
From here the mechanics of the Prover-Adversary game are defined as in \cite{PudBuss}: beginning with an initial set of assignments $S$ (a \emphasis{state} or \emphasis{specification}) Prover asks queries, and Adversary assigns them a Boolean value.
If during a play the set of assignments accumulated (including the initial ones) includes a simple contradiction then Prover wins.
A \emphasis{winning strategy} $\sigma$ (for Prover) from $S$ is represented as a full binary tree in the natural way, with queries labelling non-leaf nodes, as in \Cref{strategy} for the game we are about to define.
Writing $\Gamma \mapsto b$ for $\{A \mapsto b : A \in \Gamma\}$, we call a winning strategy from state $\{\Gamma \mapsto 1, \Delta \mapsto 0\}$ a \emphasis{proof} of $\Gamma \seqar \Delta$, with respect to the underlying game.

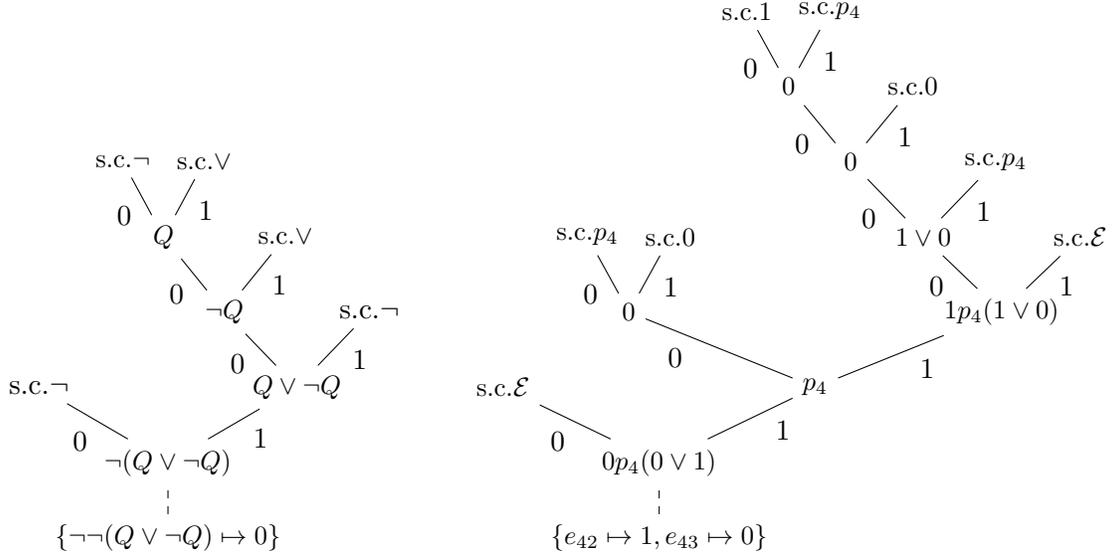
\begin{figure}[t]
    \[
% {\begin{tikzpicture}[grow=up,every tree node/.style={draw=none},
%    level distance=1.08cm,sibling distance=.48cm,
%    edge from parent path={(\tikzparentnode) -- (\tikzchildnode)}]
% \Tree
% [.{\small $\neg \neg(Q\lor \neg Q )$} 
% \edge[dashed] node[right] {0};
% [.{\small $\neg(Q\lor \neg Q )$}
%     \edge node[pos=.68,auto=right] {$1$};
%     [.{\small$Q\lor \neg Q $} 
%     \edge node[pos=.68,auto=right] {$1$};
%     [.$\text{s.c.}\neg $ ]
%     \edge node[pos=.68,auto=left] {$0$};
%     [.{\small $\neg Q$}
%     \edge node[pos=.68,auto=right] {$1$};
%     [.{\small$ \text{s.c.} \lor $} 
%    ]
%     \edge node[auto=left] {$0$};
%     [. {\small $Q$ } 
%      \edge node[pos=.68,auto=right] {$1$};
%     [.{\small$ \text{s.c.} \lor $} 
%    ]
%     \edge node[auto=left] {$0$};
%     [. {\small $\text{s.c.}\neg$ } 
%      ]   ]  ]]
%     \edge node[auto=left] {$0$};
%     [.  $\text{s.c.}\neg$  ]]]
% \end{tikzpicture}}
{\begin{tikzpicture}[scale=1,grow=up,every tree node/.style={draw=none},
   level distance=1cm,sibling distance=.10cm,
   edge from parent path={(\tikzparentnode) -- (\tikzchildnode)}]
\Tree
[.{\small $\{\neg \neg(Q\lor \neg Q ) \mapsto 0\}$} 
\edge[dashed] node[right] {};
[.{\small $\neg(Q\lor \neg Q )$}
    \edge node[pos=.68,auto=right] {$1$};
    [.{\small$Q\lor \neg Q $} 
    \edge node[pos=.68,auto=right] {$1$};
    [.$\text{s.c.}\neg $ ]
    \edge node[pos=.68,auto=left] {$0$};
    [.{\small $\neg Q$}
    \edge node[pos=.68,auto=right] {$1$};
    [.{\small$ \text{s.c.} \lor $} 
   ]
    \edge node[auto=left] {$0$};
    [. {\small $Q$ } 
     \edge node[pos=.68,auto=right] {$1$};
    [.{\small$ \text{s.c.} \lor $} 
   ]
    \edge node[auto=left] {$0$};
    [. {\small $\text{s.c.}\neg$ } 
     ]   ]  ]]
    \edge node[auto=left] {$0$};
    [.  $\text{s.c.}\neg$  ]]]
\end{tikzpicture}}
\qquad
{\begin{tikzpicture}[scale=1,grow=up,every tree node/.style={draw=none},
   level distance=1cm,sibling distance=.10cm,
   edge from parent path={(\tikzparentnode) -- (\tikzchildnode)}]
\Tree
[.{\small $\{\ee 4 2 \mapsto 1 , \ee 43 \mapsto 0\}$} 
\edge[dashed] node[right] {};
[.{\small $\posdec 0 {p_4} 1$}
    \edge node[pos=.68,auto=right] {$1$};
    [.{\small$p_4 $} 
    \edge node[pos=.68,auto=right] {$1$};
    [.{\small$\posdec{ 1} {p_4} {0} $ }
    \edge node[pos=.68,auto=right] {$1$};
     [.{\small$ \text{s.c.} \E $} ]
    \edge node[pos=.68,auto=left] {$0$};
    [.{\small $1 \lor 0$}
    \edge node[pos=.68,auto=right] {$1$};
    [.{\small$ \text{s.c.} p_4 $} 
   ]
    \edge node[auto=left] {$0$};
    [. {\small $0$ } 
     \edge node[pos=.68,auto=right] {$1$};
    [.{\small$ \text{s.c.} 0 $} 
   ]
    \edge node[auto=left] {$0$};
    [. {\small $0$ } 
     \edge node[pos=.68,auto=right] {$1$};
    [.{\small$ \text{s.c.} p_4 $} 
   ]
    \edge node[auto=left] {$0$};
    [. {\small $\text{s.c.}1$ } 
     ]   ]  ]  ]
    ]
    \edge node[pos=.68,auto=left] {$0$};
    [.{\small $ 0$}
    \edge node[pos=.68,auto=right] {$1$};
    [.{\small$ \text{s.c.} 0 $} 
   ]
    \edge node[auto=left] {$0$};
    [.{\small$ \text{s.c.} p_4 $} 
   ]  ]]
    \edge node[auto=left] {$0$};
    [.{\small$\text{s.c.}\E$}  ]]]
\end{tikzpicture}}
\]
    \caption{Two winning strategies for Prover, over $\E$ the set of extension axioms from \cref{fig:thresh-nbp-4-2}. Initial states are indicated below the roots by dashed edges. Simple contradictions are indicated as leaves marked `s.c.' along with their type.}
    \label{strategy}
\end{figure}

We are finally ready to present our games corresponding to $\elnndt$:

% \todo[]{need to make this self contained}

\begin{definition}
% [(N)BP games]
    The \emphasis{non-deterministic branching program game} $\bl$ (over $\E= \{e_i \extiff E_i\}_{i<n}$) is given by:
    \begin{itemize}
        \item Queries are just the \boolexndt-formulas.
        \item Simple contradictions consist of just:
        \begin{itemize}
        \item Boolean contradictions: $\{0\mapsto 1\}$ and $\{1 \mapsto 0\}$.
        \item Decision contradictions:
        $\{A_0 \mapsto b_0, A_1 \mapsto b_1, p \mapsto i, A_0pA_1 \mapsto c : c \neq b_i\}  $.
            \item Boolean connective contradictions: all sets inconsistent with the truth tables for $\lnot,\lor,\land$:
            % E.g.\ $\{Q \mapsto b, \lnot Q \mapsto b\}$, $\{Q\land R \mapsto 1, Q \mapsto 0\}$, $\{Q\lor R \mapsto 1, Q \mapsto 0 , R\mapsto 0\}$.
            \begin{itemize}
                \item $\{Q\mapsto b, \lnot Q \mapsto b\}$
                \item $\{Q_0 \lor Q_1 \mapsto 0, Q_i \mapsto 1\}$ and $\{Q_0 \lor Q_1 \mapsto 1, Q_0 \mapsto 0, Q_1 \mapsto 1\}$
                \item $\{Q_0\land Q_1 \mapsto 0, Q_0 \mapsto 1, Q_1 \mapsto 1\}$ and $\{Q_0 \land Q_1 \mapsto 1, Q_i \mapsto 0\}$.
            \end{itemize}
            \item Extension contradictions: $\{e_i \mapsto b, E_i \mapsto 1-b\} $.
            \item Similarity contradictions:\footnote{Note that the similarity contradictions actually include the extension ones as a special case, but this serves as some useful redundancy.} $\{A \mapsto 0, B\mapsto 1\}  $ whenever $A\simulates \E B$.
            % \item \todoanu[]{should add extra contradictions here, e.g.\ for `deciding on a formula', when we know exactly what is needed.}
        \end{itemize}
    \end{itemize}
    The \emphasis{deterministic branching game} $\dbl$ (wrt $\E$) is defined the same way, only wrt \boolexdt\ instead of \boolexndt.
\end{definition}

\begin{example}
    In \cref{strategy} we give two winning strategies for Prover, one from the state $\{\lnot \lnot (Q \lor \lnot Q) \mapsto 0\}$, left, and one from $\{\ee 4 2 \mapsto 1, \ee43 \mapsto 0\}$, right, over the set $\E$ of extension axioms from \cref{fig:thresh-nbp-4-2}.
    They are $\bl$ proofs of the sequents $\seqar \lnot \lnot (Q \lor \lnot Q) $ and $\ee 4 2 \seqar \ee43$ (over $\E$).
    The left strategy is also a $\dbl$ proof, as long as $Q$ is $\lor$-free.
\end{example}

\noindent
We shall continue to notate (winning) strategies similarly to \cref{strategy}.

\section{Proofs to strategies}
\label{sec:proofs-to-strategies}

In this section we present one direction of our main results, translating propositional proofs in $\eldt$ or $\elndt$ to strategies in $\dbl$ or $\bl$, respectively:

\begin{theorem}
\label{games-simulate-proofs}
If $\elndt$ (or $\eldt$) has a size $N$ proof of a sequent $\Gamma\seqar \Delta$ over $\E$, there is a $O(\log N)$ round winning strategy from $\{\Gamma \mapsto 1, \Delta \mapsto 0\}$ in $\bl$ (or $\dbl$, resp.) over $\E$.
\end{theorem}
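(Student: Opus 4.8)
The plan is to induct on the structure of the $\elndt$ (resp.\ $\eldt$) proof $\pi$ of $\Gamma \seqar \Delta$, producing for each sequent $\Gamma'\seqar\Delta'$ appearing in $\pi$ a winning strategy from $\{\Gamma'\mapsto 1, \Delta'\mapsto 0\}$ whose \emph{depth} (number of rounds) is bounded by $O(\log N_{\Gamma'\seqar\Delta'})$, where $N_{\Gamma'\seqar\Delta'}$ is the size of the sub-derivation ending at that sequent. Since proofs are dags, the depth bound must be measured against the dag-size of the relevant subproof; the logarithmic blow-up arises exactly because a dag of size $N$ has a ``balanced'' tree unfolding of depth $O(\log N)$, so the strategy tree we build is this balanced unfolding rather than a naive one. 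Concretely, I would first reduce to the case of a \emph{tree-like} proof of depth $O(\log N)$ by the standard balancing argument (every size-$N$ sequent derivation has an equivalent one of depth $O(\log N)$, cf.~\cite{krajivcek1994lower}), and then perform a simple induction on that depth, gluing sub-strategies at each inference.

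The inductive step handles each rule of \cref{fig:lndt} and each extension hypothesis in $\E$. For each rule I describe how Prover extends a partial play: given the target state $\{\Gamma'\mapsto 1,\Delta'\mapsto 0\}$ corresponding to the conclusion, Prover makes a constant number of queries (the principal/auxiliary formulas of the rule, together with any needed subformulas) so that, no matter how Adversary answers, the resulting state either already contains a simple contradiction or extends to the target state of one of the premises, at which point Prover follows the inductively obtained sub-strategy for that premise. For instance:
\begin{itemize}
\item For $\rigrul p$ with conclusion $\Gamma'\seqar\Delta',\dec A p B$: the state has $\dec A p B\mapsto 0$; Prover queries $p$. If Adversary says $p\mapsto 0$, Prover additionally queries $A$: answer $A\mapsto 1$ together with $\dec A p B\mapsto 0$ and $p\mapsto 0$ is a decision contradiction, and answer $A\mapsto 0$ yields the state for the left premise $\Gamma'\seqar\Delta',A,p$. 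If $p\mapsto 1$, Prover queries $B$, similarly getting either a decision contradiction or the state for the right premise.
\item For $\cut$ on $A$: Prover simply queries $A$; the $1$-answer routes to the right premise's state, the $0$-answer to the left premise's.
\item For $\id$ ($A\seqar A$): the state $\{A\mapsto 1, A\mapsto 0\}$ is already a contradiction (a similarity contradiction via $A\simulates\E A$, or for $A=0,1$ a Boolean one).
\item For structural rules ($\exch$, $\wk$, $\cntr$) and $\lefrul 1,\rigrul 0$: the conclusion state either contains the premise state or directly contains (resp.\ trivially reduces to) a simple contradiction; weakening on the right/left just ignores an extra assignment.
\item For the extension hypotheses $e_i\seqar E_i$ and $E_i\seqar e_i$ used as leaves of $\pi$: the corresponding state is $\{e_i\mapsto 1, E_i\mapsto 0\}$ or $\{e_i\mapsto 0, E_i\mapsto 1\}$, each an extension contradiction, so the strategy is a single leaf.
\item For $\lefrul\lor,\rigrul\lor$: query $Q_0$ and $Q_1$ (the disjuncts), using the disjunction contradictions to close off the inconsistent branches and routing the rest to the appropriate premise.
\end{itemize}
In every case the local strategy fragment has constant depth and branches to at most the two premises, so composing gives a strategy of depth $O(1)\cdot\depth{\pi} = O(\log N)$ when $\pi$ is the balanced proof. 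The deterministic case is identical, restricted to the $\lor$-free rules and queries, landing in $\dbl$.

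I expect the main obstacle to be the \emph{balancing} step and its careful interaction with extension axioms and dag structure: one must verify that the balanced reformulation of $\pi$ still has all extension hypotheses available as leaves (which it does, since balancing only reorganises the derivation and each hypothesis can be re-introduced freely), and that the resulting depth is genuinely $O(\log N)$ rather than $O(\log N)$ only in the number of distinct sequents --- but by the convention that proof size counts only distinct sequents these coincide. A secondary point requiring care is bookkeeping the number of queries per inference: since auxiliary formulas of a rule (e.g.\ $A$, $B$, $p$ in $\rigrul p$) are subformulas of formulas in the conclusion, they are legitimate \boolexndt-queries, and the decision/Boolean/extension contradictions are exactly tailored to close the ``wrong'' Adversary answers in one step; I would check this case-by-case but expect no surprises. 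Everything is plainly polynomial-time constructible from $\pi$, so the translation is a bona fide polynomial simulation in the sense discussed after \cref{soundness-completeness}.
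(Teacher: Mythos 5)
There is a genuine gap, and it sits exactly at the step you flag as the ``main obstacle'': the reduction to a tree-like proof of depth $O(\log N)$. The Kraj\'i\v{c}ek-style balancing you invoke is a theorem about systems whose \emph{lines} are closed under the Boolean connectives (Frege), because the divide-and-conquer restructuring needs to express conjunctions/implications of blocks of previous lines as single lines (cut formulas). In $\eldt$ and especially $\elndt$ the lines are sequents of \exdt/\exndt\ formulas, which are \emph{not} closed under conjunction or negation of whole sequents; for $\elndt$, expressing the negative occurrences needed for such cut formulas is precisely the negation-of-NBPs problem that the paper has to solve via the Immerman--\szel\ formalisation (and only for the \emph{converse} direction). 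So ``balance $\pi$ first, then induct on depth'' assumes as a black box something that is not available in these systems and is essentially the hard content; without it, your structural induction only yields a strategy of depth proportional to the (tree) depth of $\pi$, which can be linear in $N$, not $O(\log N)$. Your local case analysis per rule (constant queries routing to a premise or closing by a decision/extension/Boolean contradiction) is fine and close in spirit to the paper's \cref{local-soundness-strategy}, but it does not rescue the depth bound on its own.

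The paper's proof instead performs the balancing \emph{inside the game}, which is exactly where Boolean combinations are available for free: writing $Q_i \dfn \bigwedge \Gamma_i \cimp \bigvee \Delta_i$ for the $i$-th line and $Q^i \dfn \bigwedge_{j<i} Q_j$, Prover forces $Q^n \mapsto 0$ from the initial state, binary-searches (querying the $Q^j$'s) for the least $i$ with $Q_i \mapsto 0$ and $Q^i \mapsto 1$, and then wins in $O(\log N)$ further rounds either by an extension contradiction or by the local soundness strategy for the inference step concluding line $i$ (\cref{local-soundness-strategy}, together with the forcing/finding strategies for long conjunctions and disjunctions). If you wanted to salvage your route, you would have to first pass to a system over $\Bool(\exndt)$ queries such as $\posboolelndt$-style calculi and re-prove balancing there --- at which point you are effectively reconstructing the game-level argument, since the game is exactly the canonical balanced tree-like version of the system.
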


\noindent
Also in this section we shall present a converse to this result in the deterministic setting.
All definitions and results in this section apply to both $\dbl$ and $\bl$, unless otherwise stated.

\subsection{A strategic toolbox}
Before presenting a proof of \cref{games-simulate-proofs} above, let us build up some machinery for building strategies.
When describing strategies, we shall use some suggestive terminology:
\begin{itemize}
  \item ``\emphasis{force} $Q\mapsto b$ (or win) from $S$ in $r$ rounds'' is a (partial) strategy from $S$ of depth $\leq r$ with each leaf either a simple contradiction or  has incoming edge $Q \mapsto b$;
    \item ``\emphasis{find} $\alpha\in T$ (or win) from $S$ in $r$ rounds'' is a (partial) strategy from $S$ of depth $\leq r$ with each leaf either a simple contradiction or has incoming edge labelled by some assignment in $T$, say $\alpha$.
\end{itemize}
We shall almost always omit `or win' when using these phrases.
We shall often expand out `$\alpha \in T$' according to the context, e.g.\ saying `find $Q \in \Gamma$ with $Q\mapsto b$'.

\begin{figure}[t]  
    \centering
\[
{\begin{tikzpicture}[scale=.9,grow=up,every tree node/.style={draw=none},
   level distance=1.38cm,sibling distance=.88cm,
   edge from parent path={(\tikzparentnode) -- (\tikzchildnode)}]
\Tree
[.{\small $\{Q_0\lor Q_1 \mapsto 1\}$} 
\edge[dashed] node[right] {};
[.{\small $Q_0$}
    \edge node[pos=.68,auto=right] {$1$};
    [.{\small .} 
   ]
    \edge node[auto=left] {$0$};
    [. {\small $Q_1$}
      \edge node[pos=.68,auto=right] {$1$};
    [.{\small .} 
   ]
    \edge node[auto=left] {$0$};
    [. {\small  s.c. $\lor$}
    ]]]] \end{tikzpicture}
    }\qquad \qquad {\begin{tikzpicture}[scale=.9,grow=up,every tree node/.style={draw=none},
   level distance=1.38cm,sibling distance=.88cm,
   edge from parent path={(\tikzparentnode) -- (\tikzchildnode)}]
    \Tree
[.{\small $\{\neg Q \lor R\mapsto 0\}$} 
\edge[dashed] node[right] {};
[.{\small $Q$}
    \edge node[pos=.68,auto=right] {$1$};
    [.{\small $R $} 
     \edge node[pos=.68,auto=right] {$1$};
    [.{\small s.c. $\lor$} 
   ]
    \edge node[auto=left] {$0$};
    [. {\small .}
    ]
   ]
    \edge node[auto=left] {$0$};
    [. {\small $R$}
      \edge node[pos=.68,auto=right] {$1$};
    [.{\small s.c. $\lor$} 
   ]
    \edge node[auto=left] {$0$};
    [. {\small   $\neg Q$}
       \edge node[pos=.68,auto=right] {$1$};
    [.{\small s.c. $\lor$} 
   ]
    \edge node[auto=left] {$0$};
    [. {\small   s.c. $\neg$}
    ]
    ]]]]
\end{tikzpicture}}
  \] \caption{Examples of strategies for finding and forcing.}\label{fig:strategies for disj/implic}
\end{figure}
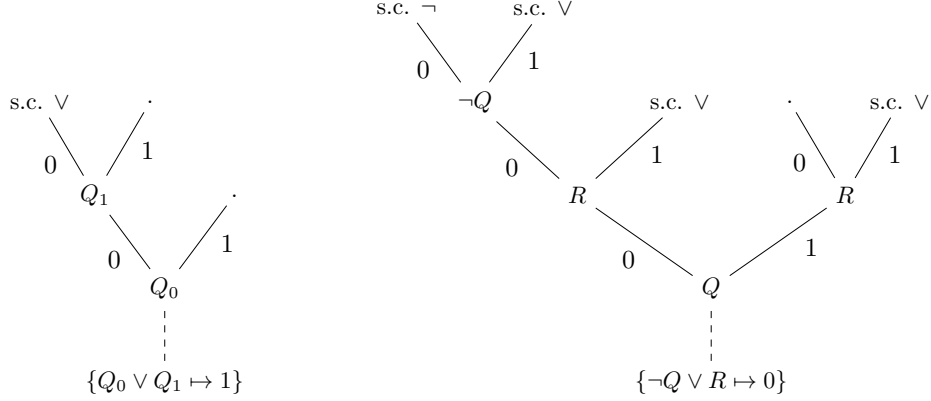

\begin{example}
    [Forcing and finding]
    \label{ex:forcing-finding}
From $\{Q_0 \lor Q_1 \mapsto 1 \}$ we can \emph{find} $Q_i\mapsto 1$ in constantly many rounds by:
\begin{itemize}
    \item ask $Q_0$; if $Q_0\mapsto 1$ we are done; else,
    \item ask $Q_1$; if $Q_1\mapsto 1$ we are done; else,
    \item we have a simple contradiction $\{Q_0\lor Q_1 \mapsto 1, Q_0 \mapsto 0, Q_1 \mapsto 0\}$.
\end{itemize}
This (partial) strategy is visualised in \cref{fig:strategies for disj/implic}, left.

Also, writing $Q\cimp R$ as an abbreviation for $ \cnot Q\lor R$, we can \emph{force} $Q\mapsto 1 $ and $R\mapsto 0$ from $\{Q \cimp R \mapsto 0\}$ in constantly many rounds:
\begin{itemize}
    \item ask $Q$ and $R$; if both $Q\mapsto 1$ and $R \mapsto 0$ we are done; else,
    \item if $R \mapsto 1$ we have a simple contradiction against $Q \cimp R \mapsto 0$; else,
    \item we have $Q \mapsto 0$, so ask $\cnot Q$;
    \begin{itemize}
        \item if $\cnot Q\mapsto 1$ we have a simple contradiction against $Q \cimp R \mapsto 0$; else,
        \item if $\cnot Q \mapsto 0$ we have a simple contradiction against $Q \mapsto 0$.
    \end{itemize}
\end{itemize}
This (partial) strategy is visualised in \cref{fig:strategies for disj/implic}, right.
\end{example}

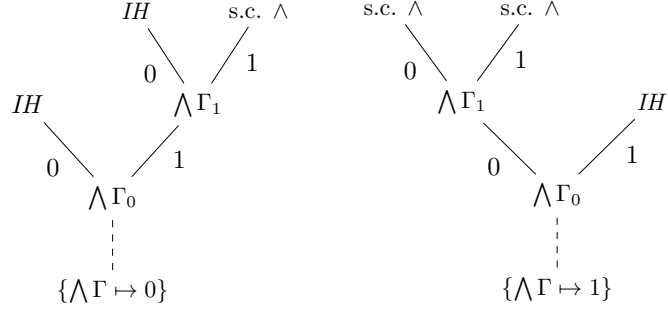
\begin{figure}[t]
    $$\hspace{-18pt}\raisebox{-.008\height}
{\begin{tikzpicture}[scale=.9,grow=up,every tree node/.style={draw=none},
   level distance=1.38cm,sibling distance=.88cm,
   edge from parent path={(\tikzparentnode) -- (\tikzchildnode)}]
   
\Tree
[.{\small $\{\bigwedge \Gamma \mapsto 0\}$} 
\edge[dashed] node[right] {};
[.{\small $\bigwedge \Gamma_0$}
    \edge node[pos=.68,auto=right] {$1$};
    [. {\small $\bigwedge \Gamma_1$}
      \edge node[pos=.68,auto=right] {$1$};
    [.{\small s.c. $\land$} 
   ]
    \edge node[auto=left] {$0$};
    [. {\small  $\IH$}
    ]]
    \edge node[auto=left] {$0$};
    [. {$\IH$}
     ]]]
    \end{tikzpicture}\label{fig:strategy1}}\quad\quad
    {\begin{tikzpicture}[scale=.9,grow=up,every tree node/.style={draw=none},
   level distance=1.38cm,sibling distance=.88cm,
   edge from parent path={(\tikzparentnode) -- (\tikzchildnode)}]
   
\Tree
[.{\small $\{\bigwedge \Gamma \mapsto 1\}$} 
\edge[dashed] node[right] {};
[.{\small $\bigwedge \Gamma_0$}
    \edge node[pos=.68,auto=right] {$1$};
    [.{\small $\IH$} 
   ]
    \edge node[auto=left] {$0$};
    [. {\small $\bigwedge \Gamma_1$}
      \edge node[pos=.68,auto=right] {$1$};
    [.{\small s.c. $\land$} 
   ]
    \edge node[auto=left] {$0$};
    [. {\small  s.c. $\land$}
    ]]]]
\end{tikzpicture}}\label{fig:strategy2}$$
    \caption{$O(\log n ) $ round strategies for forcing and finding in long conjunctions.}
    \label{fig:forcing-finding-long-conjunctions}
\end{figure}

Let us now see a more interesting, and useful, example.
For a list of queries $\Gamma$ we write $\bigwedge \Gamma $ for the conjunction of its members, bracketed as a (nearly) balanced binary tree.
Formally, if $\Gamma = Q_1, \dots, Q_k$ (and $k\geq 2$) then: 
$$\bigwedge \Gamma \dfn \bigwedge\limits_{i=1}^{\lfloor k/2 \rfloor} Q_i \ \cand \ \bigwedge\limits_{i=\lfloor k/2 \rfloor + 1}^k Q_i$$
Similarly for long disjunctions, $\bigvee \Gamma$.
We have:

\begin{example}
\label{finding-forcing-long-combinations}
    Let $\Gamma$ be a nonempty list of queries of length $n$.
    From $\{\bigwedge \Gamma \mapsto b\}$ we can:
    \begin{itemize}
        \item ($b=0$) find $Q\in \Gamma$ with $Q\mapsto 0$ in $O(\log n)$ rounds.
        \item ($b=1$) force $Q\mapsto 1 $ for any $Q \in \Gamma$ in $O(\log n)$ rounds.
    \end{itemize}
    Dually from $\{\bigvee \Gamma\mapsto b\}$ we can:
    \begin{itemize}
        \item ($b=0$) force $Q \mapsto 0 $ for any $Q \in \Gamma$ in $O(\log n ) $ rounds.
        \item ($b=1$) find $Q\in \Gamma$ with $Q\mapsto 1$ in $O(\log n ) $ rounds.
    \end{itemize}

%\begin{proof}
% [Proof of \Cref{finding-forcing-long-combinations}]
\noindent %FIXED indent
To prove this we proceed by divide-and-conquer induction on $n$, focussing only on the first two items (the other two follow dually). 
The base case, when $n=1$, is immediate.
Otherwise, for $n>1$, let $\Gamma_0$ and $\Gamma_1$ be (roughly) the first and second halves of $\Gamma$, so that $\bigwedge \Gamma = \bigwedge \Gamma_0 \land \bigwedge \Gamma_1$.

\noindent %FIXED indent
For the case $b=0$:

\begin{itemize}
    \item ask $\bigwedge \Gamma_0$ and $\bigwedge \Gamma_1$; 
    \item if $\bigwedge \Gamma_0 \mapsto 0$  find $Q \in \Gamma_0$ with $Q \mapsto 0$, by inductive hypothesis; else,
    \item if $\bigwedge \Gamma_1 \mapsto 0$ find $Q \in \Gamma_1$ with $Q \mapsto 0$, by inductive hypothesis; else,
    \item we have a simple contradiction $\{\bigwedge \Gamma_0 \mapsto 1, \bigwedge \Gamma_1 \mapsto 1, \bigwedge \Gamma \mapsto 0\}$.
\end{itemize}
This is visualised in \cref{fig:forcing-finding-long-conjunctions}, left.

\noindent %FIXED indent
For the case $b=1$, let $Q \in \Gamma_i$:

\begin{itemize}
    \item ask $\bigwedge \Gamma_i$;
    \item if $\bigwedge \Gamma_i \mapsto 1$ force $Q \mapsto 1$ by inductive hypothesis; else,
    \item 
    % ask $\bigwedge \Gamma_{1-i}$; any response yields 
    we have a simple contradiction $\{\bigwedge \Gamma_i \mapsto 0, \bigwedge \Gamma \mapsto 1\}$.
\end{itemize}
This is visualised in \cref{fig:forcing-finding-long-conjunctions}, right.
% \end{proof}
\end{example}

In the sequel we shall implicitly use similar techniques when constructing strategies.

% % \subsection{Some examples}
% % \todoanu[]{give useful examples here}

\subsection{From proofs to strategies.}

Thanks to the presence of Boolean combinations, \cref{games-simulate-proofs} follows by essentially the same proof structure as Pudl\'ak and Buss in \cite{PudBuss}.
The idea is also similar to the translation of Frege proofs into balanced tree-form \cite{krajivcek1994lower}: each sequent is construed as a single query by Boolean combinations, and the winning strategy searches for the first false sequent by a divide-and-conquer on conjunctions of sequents.
This must contradict soundness of some rule, which is again established in logarithmically many rounds.
Thus we first need the following intermediate result:
\begin{lemma}
    [Local soundness]
    \label{local-soundness-strategy}
    Fix an inference step as follows, where $|\Sigma|, |\Pi|, |\Sigma_i|, |\Pi_i| \leq n$:
    \[
\vliiinf{\infrule}{}{\Sigma \seqar \Pi}{\Sigma_{1}\seqar \Pi_{1}}{\cdots }{\Sigma_{k} \seqar \Pi_{k}}
    \]
        From $\{
    \bigwedge \Sigma \cimp \bigvee \Pi \mapsto 0,
    \bigwedge \Sigma_1 \cimp \bigvee \Pi_1 \mapsto 1,
    \dots,
    \bigwedge \Sigma_k \cimp \bigvee \Pi_k \mapsto 1
    \}$
    % Write $L\dfn \bigwedge \Sigma $ and $ L_i \dfn \bigwedge \Sigma_i$; $R\dfn \bigvee \Pi$ and $ R_i \dfn \bigvee \Pi_i$; and $Q\dfn L\cimp R$ and $Q_i \dfn L_i \cimp R_i$, for $i=1,\dots, k$.
% 
    % From $\{Q\mapsto 0, Q_1 \mapsto 1, \dots, Q_k \mapsto 1\}$ 
    there is a strategy winning in $O(\log n) $ rounds.
\end{lemma}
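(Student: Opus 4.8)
The plan is to do a case analysis on the rule $\infrule$, but to reduce every case to a small number of uniform subroutines so that the argument is essentially modular. The key observation is that, by the finding/forcing machinery of \cref{finding-forcing-long-combinations} together with the implication-forcing trick of \cref{ex:forcing-finding}, the initial state already gives Prover a great deal of power: from $\bigwedge \Sigma_j \cimp \bigvee \Pi_j \mapsto 1$ Prover can, after asking $\bigwedge \Sigma_j$ and $\bigvee \Pi_j$, either obtain a simple contradiction or force the `semantic fact' that if all formulas of $\Sigma_j$ are $1$ then some formula of $\Pi_j$ is $1$; and from $\bigwedge \Sigma \cimp \bigvee \Pi \mapsto 0$ Prover can force $\bigwedge \Sigma \mapsto 1$ and $\bigvee \Pi \mapsto 0$, hence (again by \cref{finding-forcing-long-combinations}) force $A\mapsto 1$ for every $A\in\Sigma$ and $A\mapsto 0$ for every $A\in\Pi$, all in $O(\log n)$ rounds. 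So the heart of each case is: Prover fixes the truth values dictated by the conclusion being false, then asks the handful of queries appearing in the premises but not the conclusion (the `principal' and `auxiliary' formulas), and derives a contradiction against the soundness of that one rule.

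Concretely, I would proceed as follows. First I would establish the two `setup' subroutines above as a preliminary remark: (i) from $\{\bigwedge \Sigma \cimp \bigvee \Pi \mapsto 0\}$ force $A\mapsto 1$ for all $A \in \Sigma$ and $A \mapsto 0$ for all $A \in \Pi$, in $O(\log n)$ rounds; (ii) from $\{\bigwedge \Sigma_j \cimp \bigvee \Pi_j \mapsto 1\}$ together with assignments $A\mapsto 1$ for all $A\in\Sigma_j$ already on the board, find $A\in\Pi_j$ with $A\mapsto 1$, in $O(\log n)$ rounds. Both are immediate from \cref{ex:forcing-finding} and \cref{finding-forcing-long-combinations}. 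Then I would run the case analysis on $\infrule$. The structural rules ($\exch,\wk,\cntr$) and the $\lor$-on-the-right, $1$-left/right, $0$-left/right rules are trivial: after the setup, the values forced by the conclusion already contradict the corresponding `fact' from a premise, or one of the Boolean connective / $0$-/$1$-contradictions applies directly; the only extra query needed is perhaps $A\lor B$ or one of $A,B$. The interesting cases are $\cut$, $\lefrul p$, $\rigrul p$ and $\lefrul\lor$. For $\cut$ on $A$: after the setup, ask $A$; if $A\mapsto 1$, this together with $\bigwedge\Sigma\mapsto 1$ and the right premise's fact ($\bigwedge\Sigma, A \cimp \bigvee\Pi$) yields, via subroutine (ii), some $B\in\Pi$ with $B\mapsto1$, contradicting $B\mapsto 0$ (already forced) through a long-disjunction/finding contradiction; if $A\mapsto 0$, use the left premise's fact similarly. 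For $\lefrul p$ with principal formula $\dec A p B$ in the antecedent: it is already forced to $1$; ask $A$, $B$, $p$; branch on $p$; if $p\mapsto 0$ then the decision contradiction forces $A\mapsto 1$ (or we already have a simple contradiction $\{\dec A p B\mapsto 1, p\mapsto 0, A\mapsto 0\}$ via a similarity/decision contradiction), and then the left premise's fact ($\bigwedge\Sigma, A \cimp \bigvee\Pi, p$) applies — but $p$ is on the right here, so we must also check $p$'s value; if $p\mapsto 1$ then $B\mapsto 1$ is forced and the right premise applies. The $\rigrul p$ and $\rigrul \lor$, $\lefrul\lor$ cases are symmetric/dual. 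Throughout, the only genuinely new queries Prover asks beyond the $O(\log n)$-round setup are the constantly many principal and auxiliary subformulas, plus possibly $p$, so the total round count stays $O(\log n)$.

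The main obstacle I anticipate is bookkeeping around the decision rules $\lefrul p$ and $\rigrul p$, because the principal formula $\dec A p B$ and the side formula $p$ migrate between the two cedents in the premises, so one has to be careful that the truth values forced by `conclusion is false' together with the branch on $p$ genuinely match the antecedent/succedent pattern of the relevant premise's `fact' — and, when they do not, that one instead lands in a \emph{decision contradiction} (or, more generally, a \emph{similarity contradiction}, since the same NBP node may be represented by a $\simulates\E$-equivalent formula on the board). Handling this cleanly is where I would spend the most care: I would isolate a small lemma saying that once $\{A_0 pA_1\mapsto c, p\mapsto i\}$ is on the board, Prover can in constantly many rounds either win outright (decision contradiction) or force $A_i\mapsto c$. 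With that in hand, every decision-rule case collapses to: force the conclusion's values, branch on $p$, invoke this lemma to get the value of the relevant child, and then invoke subroutine (ii) on the appropriate premise. All other cases are routine finite checks against the list of simple contradictions.
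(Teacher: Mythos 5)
Your overall plan coincides with the paper's: exploit the conclusion mapped to $0$ and the premises mapped to $1$ via \cref{ex:forcing-finding} and \cref{finding-forcing-long-combinations}, dispatch the non-decision rules as in Pudl\'ak--Buss, and reduce the decision rules $\lefrul p$, $\rigrul p$ to decision contradictions (your constant-round mini-lemma for $\{A_0pA_1\mapsto c,\ p\mapsto i\}$ is fine). However, there is a genuine quantitative gap in your setup step (i): you claim that from $\{\bigwedge\Sigma\cimp\bigvee\Pi\mapsto 0\}$ Prover can force $A\mapsto 1$ for \emph{every} $A\in\Sigma$ and $A\mapsto 0$ for \emph{every} $A\in\Pi$ in $O(\log n)$ rounds. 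This is impossible: each round puts exactly one new assignment on the board, and the initial state contains only the implication queries, so having all $|\Sigma|+|\Pi|$ (up to $2n$) individual values on the board requires $\Omega(n)$ rounds. \cref{finding-forcing-long-combinations} only yields, from $\bigwedge\Sigma\mapsto 1$, a forcing of any \emph{single chosen} member in $O(\log n)$ rounds. Since your subroutine (ii) presupposes ``assignments $A\mapsto 1$ for all $A\in\Sigma_j$ already on the board'', and your case analysis repeatedly appeals to values being ``already forced'', the strategy you describe actually runs in $O(n\log n)$ rounds, which falsifies the lemma as stated and would ruin the $O(\log N)$ bound it feeds into.

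The repair is to reverse the order of find and force, which is exactly what the paper does: from each premise $\bigwedge\Sigma_j\cimp\bigvee\Pi_j\mapsto 1$ first \emph{find}, in $O(\log n)$ rounds, a specific formula that is false in its antecedent or true in its succedent; if that formula happens to lie in $\Sigma$ (resp.\ $\Pi$), only \emph{then} force that one formula to $1$ (resp.\ $0$) from $(\star)$, a single additional $O(\log n)$-round forcing yielding a simple contradiction; otherwise the found formula is one of the constantly many auxiliary formulas ($A$, $B$, $p$, or the cut formula), and constantly many further queries land Prover in a decision, Boolean, or identity-type contradiction, possibly after forcing the value of the principal formula $\dec A p B$ from $(\star)$. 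With this reordering, your case analysis goes through and the total round count stays $O(\log n)$.
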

\begin{proof}
First note that by \cref{ex:forcing-finding} we can: 
\begin{itemize}[align=left] %FIXED Alignment
    \item[($\star$)] force $\bigwedge \Sigma \mapsto 1$ and force $\bigvee \Pi \mapsto 0$ in constantly many rounds. 
\end{itemize}
Now, almost every case is the same as for the usual sequent calculus $\lk$, equivalently Frege systems (see, e.g., \cite{PudBuss}). The remaining cases are the decision rules of $\elnndt$:
\begin{itemize}
    \item If $\infrule$ is $\vliinf{\lefrul p}{}{\Gamma, \dec A p B \seqar                 \Delta}{\Gamma, A \seqar                 \Delta, p}{\Gamma, p, B \seqar                 \Delta}$ then:
    \begin{itemize}
        \item find $C_0 \in \Gamma,A$ with $C_0\mapsto 0$ or $D_0 \in \Delta,p$ with $D_0\mapsto 1$ in $O(\log n) $ rounds, similarly to \cref{finding-forcing-long-combinations};
        % \todo{can make this formal with another lemma for de morgan duality}
        \item find $C_1 \in \Gamma,p,B$ with $C_1 \mapsto 0$ or $D_1 \in \Delta$ with $D_1 \mapsto 1$ in $O(\log n)$ rounds, similarly to \cref{finding-forcing-long-combinations};
        \item if we found some $C_i \in \Gamma$, from $(\star)$ force $C_i \mapsto 1$ in $O(\log n)$ rounds for a contradiction;
        \item if we found some $D_j \in \Delta$, from $(\star)$ force $D_j \mapsto 0 $ in $O(\log n)$ rounds for a contradiction.
    \end{itemize}
    This leaves the possibilities that we found (i) $ A\mapsto 0 $ or $p \mapsto 1$; and, (ii) $ p \mapsto 0 $ or $B \mapsto 0$:
    \begin{itemize}
        % \item first, from $(\star)$, force $\dec A p B \mapsto 1 $ in $O(\log n)$ rounds;
        % \item if $ A\mapsto 0$ and $p\mapsto 0 $ we have a simple contradiction against $\dec A p B \mapsto 1$;
        % \item if $A\mapsto 0$ and $B\mapsto 0$ we have a simple contradiction against $\dec A p B \mapsto 1$;
        \item if $p \mapsto 1 $ and $p\mapsto 0$ then we have immediately a simple contradiction;
        % \item if $p\mapsto 1 $ and $B \mapsto 0$ we have a simple contradiction against $\dec A p B \mapsto 1$.
        \item in every other case, we have a decision contradiction against $\dec A p B \mapsto 1$.
    \end{itemize}

    \item If $\infrule $ is $\vliinf{\rigrul p}{}{\Gamma \seqar                 \Delta, \dec A p B}{\Gamma \seqar                 \Delta, A, p}{\Gamma, p \seqar                 \Delta , B} $ then:
    \begin{itemize}
        \item find $C_0 \in \Gamma$ with $C_0 \mapsto 0$ or $D_0 \in \Delta A,p$ with $D_0 \mapsto 1$ in $O(\log n)$ rounds, similarly to \cref{finding-forcing-long-combinations};
        \item find $C_1 \in \Gamma,p$ with $C_1 \mapsto 0$ or $D_1 \in \Delta,B$ with $D_1 \mapsto 1$ in $O(\log n)$ rounds, similarly to \cref{finding-forcing-long-combinations};
        \item if we found some $C_i \in \Gamma$, from $(\star)$ force $C_i \mapsto 1$ in $O(\log n)$ rounds for a contradiction;
        \item if we found some $D_j \in \Delta $, from $(\star)$ force $D_j \mapsto 0$ in $O(\log n)$ rounds for a contradiction.
    \end{itemize}
    This leaves the possibilities that we found (i) $A \mapsto 1$ or $p \mapsto 1$ ; and, (ii) $p \mapsto 0$ or $B\mapsto 1$:
    \begin{itemize}
        \item first, from $(\star)$ force $\dec A p B \mapsto 0$ in $O(\log n)$ rounds;
        \item if $p\mapsto 1 $ and $p\mapsto 0$ we immediately have a simple contradiction;
        \item in every other case, we have a decision contradiction against $\dec A p B \mapsto 0$. \qedhere
    \end{itemize}
\end{itemize}
\end{proof}

% Note in the Lemma above, we are writing, say $P\cimp Q$ for $\lnot P \lor Q$.
\noindent %FIXED indent
We can now finally give our translation from proofs to strategies:

\begin{proof}
    [Proof of \Cref{games-simulate-proofs}] 
Fix a proof $\pi:(\Gamma_i \seqar \Delta_i)_{i<n}$ over $\E = \{e_i \extiff E_i\}_{i<k}$ of size $N$ and initial state $S = \{\Gamma_{n-1} \mapsto 1, \Delta_{n-1} \mapsto 0 \}$.
% For convenience we shall assume that $n$ is a power of 2, e.g.\ by repeating inferences/sequents, at the cost of at most doubling proof size. 
% 
Let us employ the following abbreviations:
\begin{itemize}
    \item $L_i \dfn \bigwedge \Gamma_i$, for $i<n$; 
    \item $R_i \dfn \bigvee \Delta_i$, for $i<n$;
    \item $Q_i \dfn L_i \cimp R_i$, for $i<n$; 
    \item $Q^i \dfn \bigwedge\limits_{j<i} Q_i$, for $i\leq n$.
\end{itemize}
First note that, by \cref{ex:forcing-finding,finding-forcing-long-combinations}, from $S$ we can force in $O(\log N)$ rounds $L_{n-1} \mapsto 1 $ and $R_{n-1} \mapsto 0 $ and then also $Q_{n-1} \mapsto 0$, and so finally $Q^n \mapsto 0$.
Now we construct a winning strategy from $Q^{n}\mapsto 0$ as follows:
\begin{itemize}
    \item Find the least $i<n$ such that $Q_i \mapsto 0$, by a divide-and-conquer strategy: keep asking $Q^j$ for $j$ (roughly) halfway between the greatest $ l$ with $Q^l\mapsto 1$ (starting with $l=0$) and least $u$ with $Q^u\mapsto 0$ (starting with $u=n$), until $u-l=1$. Set $i=u$ at this point, so that we have $Q_i \mapsto 0$ but $Q^i \mapsto 1$.
    \item If $\Gamma_i \seqar \Delta_i$ is an extension axiom $e_j \seqar E_j$ or $E_j \seqar e_j$, then we can force a contradiction for extension in constantly many rounds.
    \item If $\Gamma_i\seqar \Delta_i$ is the conclusion of an inference step with premisses (among) $\Gamma_j \seqar \Delta_j$ and $\Gamma_k \seqar \Delta_k$, then $j,k<i$ so we can force $Q_j \mapsto 1$ and $Q_k \mapsto 1$ from $Q^i \mapsto 1$ in $O(\log N)$ rounds by \cref{finding-forcing-long-combinations}.
    % \item 
    Now from $\{Q_j \mapsto 1, Q_k \mapsto 1 , Q_i \mapsto 0\}$ we can win in $O(\log N)$ rounds by \cref{local-soundness-strategy}. \qedhere
\end{itemize}
\end{proof}
 
\subsection{Strategies to proofs: the deterministic case}
\label{sec:strats-to-proofs-deterministic}
In the case of $\eldt$ and $\dbl$, we can readily translate strategies back to proofs thanks to determinism, morally since (non-uniform) logspace is easily closed under Boolean combinations.

\begin{theorem}
\label{thm:eldt-psim-db}
   If $\dbl$ has a size $N$ proof of an \exdt\ sequent $\Gamma \seqar \Delta$ over $\E$ then $\eldt$ has a $\poly (N)$ size proof of $\Gamma \seqar \Delta$ over some $\E' \supseteq \E$.
\end{theorem}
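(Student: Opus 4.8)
The plan is to read the $\dbl$ proof $\sigma$ --- a full binary tree whose internal nodes carry queries and whose leaves carry simple contradictions --- directly as a tree-like $\eldt$ derivation, processing it from the leaves towards the root, in the style of Pudl\'ak--Buss \cite{PudBuss}. The one ingredient to prepare first is \emph{closure of deterministic branching programs under Boolean combinations, realised in $\eldt$}. Listing the \boolexdt-formulas occurring in $\sigma$ together with all their subformulas in an order refining the subformula relation, we set $\hat A \dfn A$ for $\exdt$-formulas $A$ and introduce a fresh extension variable $\hat Q$ for each genuinely compound \boolexdt-subformula $Q$, extending $\E$ to $\E'\supseteq\E$ by adding extension axioms realising the usual BP constructions: $\hat{\cnot R}$ is a fresh copy of the BP $\hat R$ with its sinks $0,1$ interchanged, while $\hat{R\cand S}$ (resp.\ $\hat{R\lor S}$) is a fresh copy of $\hat R$ with each occurrence of the sink $1$ (resp.\ $0$) redirected to $\hat S$. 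Processing the subformulas in this order keeps $\E'$ well-founded and $\lor$-free, with $|\E'| = \poly(N)$. The only place determinism is used is the following: a routine $\E'$-induction along these new blocks yields $\poly(N)$-size $\eldt$ proofs over $\E'$ of the expected identities, e.g.\ $\hat Q,\hat{\cnot Q}\seqar$, $\seqar\hat Q,\hat{\cnot Q}$, $\hat{R\cand S}\seqar\hat R$, $\hat{R\cand S}\seqar\hat S$, $\hat R,\hat S\seqar\hat{R\cand S}$, and dually for $\lor$. (For non-deterministic programs the negation construction fails, and recovering it is exactly what the Immerman--\szel\ development of the later sections is for.)

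Given this, for a node $v$ of $\sigma$ write $T_v$ for the set of assignments accumulated along the branch from the root to $v$, including the initial state $S = \{\Gamma\mapsto 1,\Delta\mapsto 0\}$, and let $\mathsf{seq}(T_v)$ be the $\exdt$-sequent with $\hat P$ in the antecedent for each $(P\mapsto 1)\in T_v$ and $\hat P$ in the succedent for each $(P\mapsto 0)\in T_v$. We prove by induction on the height of $v$ that $\eldt$ has a $\poly(N)$-size proof of $\mathsf{seq}(T_v)$ over $\E'$. If $v$ is a leaf then $T_v$ contains a simple contradiction $C$, and it suffices to give a short $\eldt$ proof of $\mathsf{seq}(C)$ and weaken up to $\mathsf{seq}(T_v)$: Boolean and decision contradictions follow at once from the rules of $\eldt$ (e.g.\ $A_0\seqar p,\dec{A_0}{p}{A_1}$ by $\rigrul p$ and identities); extension contradictions are literally among the hypotheses $\E'$; Boolean-connective contradictions are discharged by the closure identities above; and a similarity contradiction $\{A\mapsto 0,B\mapsto 1\}$ with $A\simulates\E B$ yields the sequent $B\seqar A$, which has a $\poly(N)$-size $\eldt$ proof by \cref{simulation result}. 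If $v$ is internal with query $Q$ and children $v_1,v_0$ recording the answers $Q\mapsto 1$ and $Q\mapsto 0$, then $\mathsf{seq}(T_{v_1})$ is $\mathsf{seq}(T_v)$ with $\hat Q$ adjoined to the antecedent and $\mathsf{seq}(T_{v_0})$ is $\mathsf{seq}(T_v)$ with $\hat Q$ adjoined to the succedent, so a single $\cut$ on $\hat Q$ --- modulo structural rules, with the obvious degeneracy when $\hat Q$ already occurs in $\mathsf{seq}(T_v)$ --- derives $\mathsf{seq}(T_v)$ from the inductive hypotheses. At the root $T_{\mathrm{root}} = S$ and $\Gamma,\Delta$ are $\exdt$-formulas, so $\mathsf{seq}(S) = \Gamma\seqar\Delta$; the size bound holds because $\sigma$ has $O(N)$ nodes, each contributing boundedly many inferences on sequents of size $\poly(N)$, while the one-off closure blocks and the subproofs invoked from \cref{simulation result} are themselves $\poly(N)$.

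I expect the main obstacle to be setting up the Boolean-combination closure lemma cleanly: making the copy-and-rewire constructions for $\cnot,\cand,\lor$ precise at the level of extension axioms (fresh-variable renaming, substituting into sinks without creating ill-formed decisions of the shape $Ae_iB$, and the ordering that keeps $\E'$ well-founded), and then pushing the accompanying $\E'$-inductions through. None of this is deep --- it is the familiar observation that nonuniform $\Logspace$ is closed under Boolean combinations --- and everything else is a faithful transcription of the Pudl\'ak--Buss strategy-to-proof translation.
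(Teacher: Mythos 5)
Your proposal is correct and takes essentially the same route as the paper: the paper also translates the strategy node-by-node into cuts on the queried formulas, maintaining the invariant that $1$-answered queries sit on the left and $0$-answered ones on the right, and discharges the simple contradictions exactly as you do (decision/extension contradictions by short direct proofs, similarity contradictions via \cref{simulation-has-polysize-proofs}). The only difference is presentational: the paper factors your inlined Boolean-combination bookkeeping through the intermediate system $\booleldt$ and the cited simulation \cref{eldt-psim-booleldt} (the Boolean BP constructions of Buss--Das--Knop, with negation via $\lnot(\dec A p B)\leftrightarrow \dec{\lnot A}p{\lnot B}$), which is precisely the content of your $\hat Q$-extension-axiom construction and closure identities.
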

\begin{corollary}
    $\eldt$ polynomially simulates $\dbl$, over \dt\ sequents.
\end{corollary}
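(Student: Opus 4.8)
The plan is to mimic the classical Pudl\'ak-Buss argument \cite{PudBuss}: read off an $\eldt$ proof from the strategy tree by an induction running from the leaves up to the root, inserting a single $\cut$ at each internal node. The one genuinely new ingredient, replacing the trivial closure of Boolean formulas under connectives in the original setting, is that deterministic branching programs must be closed under Boolean combinations in a way that $\eldt$ can verify with short proofs; everything else is bookkeeping.

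First I would set up this closure. For each \boolexdt-formula $P$ over $\E$ I would fix an \exdt-formula $\hat P$ together with finitely many fresh extension axioms, so that $\hat P$ denotes a deterministic BP computing the same Boolean function as $P$: put $\hat A \dfn A$ for an \exdt-formula $A$; let $\widehat{\neg Q}$ be $\hat Q$ with its two sinks swapped; let $\widehat{Q_0 \lor Q_1} \dfn \repl{\hat Q_0}{\hat Q_1}{0}$ (redirect the $0$-sink of $\hat Q_0$ into the root of $\hat Q_1$) and dually $\widehat{Q_0 \land Q_1} \dfn \repl{\hat Q_0}{\hat Q_1}{1}$ --- all standard, size-linear constructions, cf.~\cite{Weg00:bps-and-bdds}, carried out on the representing extension axioms. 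Applying this to the (polynomially many, each of polynomial size) distinct queries of the strategy yields one set of extension axioms $\E' \supseteq \E$ of polynomial size; well-foundedness of $\E'$ follows by ordering the new extension variables along the Boolean structure of the queries, below the extension variables they inherit. The key accompanying lemma, proved by $\E$-induction together with \cref{simulation-has-polysize-proofs}, is that $\eldt$ over $\E'$ has polynomial-size proofs of $\hat Q, \widehat{\neg Q} \seqar$ and $\seqar \hat Q, \widehat{\neg Q}$, of $\hat Q_i \seqar \widehat{Q_0 \lor Q_1}$ and $\widehat{Q_0 \lor Q_1} \seqar \hat Q_0, \hat Q_1$, and of the dual sequents for $\land$; that is, each translate provably implements its connective.

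Next I would carry out the translation itself. Replace every query $P$ in the strategy tree by $\hat P$. For a node $v$ let $\pi_v$ be the assignment accumulated on the path from the root to $v$, together with the initial state $\{\Gamma \mapsto 1, \Delta \mapsto 0\}$; let $s_v$ be the sequent whose antecedent lists all $\hat P$ with $(P \mapsto 1) \in \pi_v$ and whose succedent lists all $\hat P$ with $(P \mapsto 0) \in \pi_v$, so that $s_{\mathrm{root}}$ is exactly $\Gamma \seqar \Delta$ (as $\Gamma, \Delta$ are already \exdt-formulas). I claim $\eldt$ has a polynomial-size proof of each $s_v$ over $\E'$, by induction from the leaves. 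At a leaf $\pi_v$ contains a simple contradiction, and each kind yields a short $\eldt$ proof of the corresponding small sequent --- Boolean contradictions via $\lefrul 0$ and $\rigrul 1$; decision contradictions via $\lefrul p$ and $\rigrul p$; Boolean-connective contradictions via the lemma above; extension contradictions as instances of $\E \subseteq \E'$; and similarity contradictions $\{A \mapsto 0, B \mapsto 1\}$ with $A \simulates{\E} B$ by \cref{simulation-has-polysize-proofs} --- and then weakening gives $s_v$. At an internal node $v$ with query $P$ and children $v_0$ ($P \mapsto 0$) and $v_1$ ($P \mapsto 1$), writing $s_v$ as $\Gamma_v \seqar \Delta_v$ we have $s_{v_1}$ is $\Gamma_v, \hat P \seqar \Delta_v$ and $s_{v_0}$ is $\Gamma_v \seqar \Delta_v, \hat P$, so a single $\cut$ on $\hat P$ derives $s_v$ from the two inductive proofs. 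The resulting $\eldt$ proof has one cut per internal node of the strategy, uses only polynomially many formulas each of polynomial size, and includes polynomially many polynomial-size leaf proofs, hence is of polynomial size; this proves \cref{thm:eldt-psim-db}, and the corollary follows since a $\dbl$ proof of $\Gamma \seqar \Delta$ is by definition a winning strategy from $\{\Gamma \mapsto 1, \Delta \mapsto 0\}$.

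I expect the main obstacle to be the first step: setting up the Boolean-closure constructions at the level of extension axioms so that well-foundedness is preserved and sizes remain polynomial, and checking that $\eldt$ verifies their correctness with short proofs. Morally this is just the fact that non-uniform logspace is closed under Boolean combinations; it is routine, but it is precisely where the deterministic case is easy and the non-deterministic case --- treated in the remainder of the paper via a formalisation of Immerman-\szel --- is not.
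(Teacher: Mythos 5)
Your proof is correct and takes essentially the same route as the paper: an induction over the strategy tree with one cut per query, the root sequent being $\Gamma\seqar\Delta$, and each simple contradiction discharged by a short proof (decision rules, extension axioms from $\E$, and \cref{simulation-has-polysize-proofs} for similarity contradictions). The only difference is packaging: the paper runs this induction in the intermediate system $\booleldt$ and then eliminates the Boolean connectives by appealing to the Boolean-combination constructions of Buss--Das--Knop (\cref{eldt-psim-booleldt}), whereas you inline those constructions (sink-swapping for $\lnot$, sink-redirection for $\lor$ and $\land$) and translate the queries to \exdt\ formulas up front, which is the same underlying content.
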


\noindent
This result serves as a warm-up before our development of the non-deterministic case in the next sections.  
For the proof it is convenient to work with a proof system that natively manipulates the queries of $\dbl$:
\begin{definition}
    [System for \boolexdt]
    Write $\booleldt$ for the extension of $\eldt$ manipulating Boolean combinations of $\exdt$ formulas, i.e.\ the queries $P,Q, \dots$ of $\dbl$, equipped with the usual rules for negation, disjunction, and conjunction:

\begin{equation}
    \label{eq:pos-bool-rules}
      \begin{array}{ccc}
            \vlinf{\lefrul\lnot}{}{\Gamma , \lnot Q \seqar \Delta}{\Gamma \seqar \Delta, Q} &  
            \vlinf{\lefrul \land}{}{\Gamma, P \land Q \seqar \Delta}{\Gamma, P, Q \seqar \Delta} &
            \vliinf{\lefrul \lor}{}{\Gamma, P \lor Q \seqar \Delta}{\Gamma, P \seqar \Delta}{\Gamma, Q\seqar \Delta}
            \\
             \vlinf{\rigrul\lnot}{}{\Gamma \seqar \Delta, \lnot Q}{\Gamma , Q \seqar \Delta} & 
             \vlinf{\rigrul\lor}{}{\Gamma \seqar \Delta, P\lor Q}{\Gamma \seqar \Delta, P,Q} &
             \vliinf{\rigrul\land}{}{\Gamma \seqar \Delta, P\land Q}{\Gamma \seqar \Delta, P}{\Gamma \seqar \Delta, Q}
        \end{array}
\end{equation}
\end{definition}

By the Boolean constructions of \cite[Section 5]{DBLP:conf/csl/BussDasKnop} (see also \cite{BussDasKnop19:preprint}) we have:
\begin{proposition}
\label{eldt-psim-booleldt}
If $\booleldt$ has a size $N$ proof of an \exdt\ sequent $\Gamma \seqar \Delta$ over $\E$ then $\eldt$ has a $\poly (N)$ size proof of $\Gamma \seqar \Delta$ over some $\E' \supseteq \E$.
    % $\eldt$ polynomially simulates $\booleldt $, over $\exdt $ sequents.
\end{proposition}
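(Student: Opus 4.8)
The plan is to view this proposition as a repackaging of the closure of deterministic branching programs under Boolean operations. We rely on the constructions of \cite[Section~5]{DBLP:conf/csl/BussDasKnop}: given \exdt-formulas $A,B$ over a set of $\lor$-free extension axioms $\E$, one builds \exdt-formulas representing $\lnot A$, $A \lor B$ and $A \land B$ over some $\lor$-free $\E' \supseteq \E$, each of size polynomial in $|A|,|B|$ and the total size of $\E$, together with polynomial-size $\eldt$-proofs over $\E'$ of the sequents witnessing their correctness --- namely $\seqar A, \neg A$ and $A, \neg A \seqar$, and $A \land B \seqar A$, $A \land B \seqar B$, $A, B \seqar A \land B$, and dually for $\lor$. (Morally this is the closure of BDDs under Boolean operations, cf.\ \cite{Weg00:bps-and-bdds}; the point is that these constructions manipulate only decision nodes and sinks, so they stay inside the $\lor$-free fragment, which is what lets the resulting axioms be fed back into $\eldt$.)

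Fix a $\booleldt$-proof $\pi$ of the \exdt-sequent $\Gamma \seqar \Delta$ over $\E$, of size $N$. Thanks to the two-tiered syntax, every \boolexdt-formula $Q$ has a well-defined outermost Boolean skeleton bottoming out in \exdt-formulas, so we may define, by induction on that skeleton, an \exdt-formula $\tilde Q$ over an extension $\E' \supseteq \E$: set $\tilde A \dfn A$ for \exdt-formulas $A$, and $\widetilde{\lnot Q} \dfn \neg \tilde Q$, $\widetilde{Q \lor R} \dfn \tilde Q \lor \tilde R$, $\widetilde{Q \land R} \dfn \tilde Q \land \tilde R$, in each case applying the corresponding construction above to the already-translated parts. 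We take $\E'$ to be $\E$ together with all the new extension axioms so generated, ranging over the (finitely many) \boolexdt-formulas occurring in $\pi$, and we choose their indices so that every sub-combination is processed before its super-combinations; by the subscripting condition of \Cref{extension-axioms-definition} this yields a legitimate, $\lor$-free set of extension axioms. Since there are at most $N$ relevant formulas and each construction is polynomial, $\E'$ has size $\poly(N)$; crucially, the use of extension variables means already-named subprograms are merely referenced rather than re-copied, so no super-polynomial blow-up occurs under nesting. By construction $\tilde Q = Q$ whenever $Q$ is an \exdt-formula, so in particular $\tilde\Gamma = \Gamma$ and $\tilde\Delta = \Delta$.

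We then transform $\pi$ into an $\eldt$-proof $\tilde\pi$ of $\Gamma \seqar \Delta$ over $\E'$ by applying $\tilde{(\cdot)}$ to every formula of every sequent. An instance of $\id$, $\cut$, a structural rule, or an $\eldt$-logical rule translates to an instance of the same rule (legitimate since $\E \subseteq \E'$, and since none of these rules mentions the Boolean connectives), and an $\E$-hypothesis translates to itself. An instance of a rule from \eqref{eq:pos-bool-rules} is replaced by a polynomial-size $\eldt$-derivation over $\E'$ using the relevant correctness sequent: e.g.\ $\rigrul{\lnot}$, passing from $\tilde\Gamma, \tilde Q \seqar \tilde\Delta$ to $\tilde\Gamma \seqar \tilde\Delta, \neg\tilde Q$, is obtained by a single $\cut$ against $\seqar \tilde Q, \neg\tilde Q$ (together with weakenings and exchanges); $\rigrul{\land}$, passing from $\tilde\Gamma \seqar \tilde\Delta, \tilde P$ and $\tilde\Gamma \seqar \tilde\Delta, \tilde Q$ to $\tilde\Gamma \seqar \tilde\Delta, \tilde P \land \tilde Q$, is obtained from $\tilde P, \tilde Q \seqar \tilde P \land \tilde Q$ by two $\cut$s; and the remaining four rules are handled symmetrically. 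Each such local derivation has size $\poly(N)$ and there are at most $N$ of them (recalling that proofs may be dags, so we count distinct sequents), hence $|\tilde\pi| = \poly(N)$, as required.

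The only genuinely delicate point --- and the one I would expect to take the most care --- is the bookkeeping for $\E'$: arranging the freshly introduced extension variables so that the well-foundedness condition of \Cref{extension-axioms-definition} holds (enabling the $\E'$-induction of \Cref{A-induction}) while keeping the total size polynomial, and double-checking that the Boolean constructions of \cite[Section~5]{DBLP:conf/csl/BussDasKnop} indeed never introduce $\lor$ so that $\E'$ remains a bona fide set of $\lor$-free extension axioms. Once those points are in place, the inductive transformation of the proof is entirely routine.
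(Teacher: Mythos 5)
Your proposal is correct and follows essentially the same route as the paper, which simply appeals to the Boolean-combination constructions of \cite[Section 5]{DBLP:conf/csl/BussDasKnop} (noting they are a special case of the positive-decision machinery of \cref{sec:pos-decs}) and leaves the rule-by-rule replacement of the \eqref{eq:pos-bool-rules} steps by cuts against the correctness sequents implicit. Your write-up just makes that translation, and the bookkeeping for the well-founded, $\lor$-free extension of $\E$, explicit.
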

\noindent %FIXED indent
The branching program construction for Boolean combinations is well-known, and the proof theoretic development of the result is similar to that of the `positive decisions' we shall work with later in \cref{sec:pos-decs}. 
Let us point out, in particular, that negation of \emph{deterministic} branching programs is simple, by the recurrence: 
$\lnot (\dec A p B) \iff \dec {\lnot A} p {\lnot B}$.
Naturally expressing negation is precisely the difficulty in the non-deterministic case.

In any case, from \cref{eldt-psim-booleldt} above, we can readily establish the polynomial simulation of $\dbl$ by $\eldt$:

\begin{proof}
    [Proof of \cref{thm:eldt-psim-db}]
    By \cref{eldt-psim-booleldt} above, it suffices to work with $\booleldt$ instead of $\eldt$. 
    We proceed by induction on the structure of a $\dbl $ strategy from initial state $\{\Gamma\mapsto 1, \Delta \mapsto 0\}$, constructing an $\booleldt$ proof of $\Gamma \seqar \Delta$ over $\E$.
    The construction is straightforward, similar to that of \cite{PudBuss}, simulating each query by a cut maintaining the invariant that queries answered $1$ are on the LHS of the sequent and queries answered $0$ are on the RHS of the sequent. 
    Formally, if $\sigma$ begins by querying some $P$, with substrategies $\sigma_0,\sigma_1$ for the answers $0,1$ respectively, we construct the following $\booleldt$ proof,
    \[
    \vlderivation{
    \vliin{\cut}{}{\Gamma \seqar \Delta}{
        \vltr{\IH}{\Gamma \seqar \Delta, P}{\vlhy{\ }}{\vlhy{\quad }}{\vlhy{\ }}
    }{
        \vltr{\IH}{\Gamma, P \seqar \Delta }{\vlhy{\ }}{\vlhy{\quad }}{\vlhy{\ }}
    }
    }
    \]
    where the subproofs marked $\IH$ are obtained by the inductive hypothesis, with $\sigma_0,\sigma_1$ construed as winning strategies from the corresponding intial states.
    Thus it remains to derive the base cases of simple contradictions by polynomial size proofs:\footnote{We omit consideration of weakenings immediately below an initial sequent in all cases.}
    \begin{itemize}
        \item Boolean contradictions are given by the corresponding initial Boolean rule.
        \item Decision contradictions are given by constant length proofs using just the decision, structural and initial rules.\footnote{In fact, formally, constant size such proofs must exist by the previous completeness result, \cref{soundness-completeness}.}
        For instance $\{\dec A p B \mapsto 1, p \mapsto 0, A \mapsto 0\}$ and $\{\dec A p B \mapsto 0 , p \mapsto 1 , B \mapsto 1\}$ are respectively given by,
        \[
        \vlderivation{
        \vliin{\lefrul p}{}{\dec A p B \seqar p, A}{
            \vliq{\wk}{}{A \seqar p, p , A}{
            \vlin{\id}{}{A \seqar A}{\vlhy{}}
            }
        }{
            \vliq{\wk}{}{p, B \seqar p, A}{
            \vlin{\id}{}{p\seqar p}{\vlhy{}}
            }
        }
        }
        \qquad
        \text{and}
        \qquad
        \vlderivation{
        \vliin{\rigrul p }{}{p, B \seqar \dec A p B}{
            \vliq{\wk}{}{p, B \seqar A,p}{
            \vlin{\id}{}{p\seqar p}{\vlhy{}}
            }
        }{
            \vliq{\wk}{}{p, B, p \seqar B}{
            \vlin{\id}{}{B\seqar B}{\vlhy{}}
            }
        }
        }
        \quad \text{.}
        \]
        \item Boolean connective contradictions are the same as for $\lk$, cf.~\cite{PudBuss}.
        \item Extension contradictions are given by the corresponding extension axioms.
        \item Similarity contradictions are given by \cref{simulation-has-polysize-proofs}. \qedhere
    \end{itemize}
\end{proof}

\begin{remark}
    [On tree-like proofs]
Note that, since strategies are trees, the proof constructed above is almost tree-like, except for the subproofs for similarity contradictions arising from \cref{simulation-has-polysize-proofs}. 
However the simulation of $\booleldt$ by $\eldt$, \cref{eldt-psim-booleldt}, does not preserve this tree-like structure.
\end{remark}

\section{A non-uniform partial formalisation of Immerman-\szel}\label{sec: Imm-szel}
\label{sec:immszel}

In order to translate strategies to proofs, we need a way to simulate the \emph{negation} of branching programs.
For this, in the non-deterministic case, we need a non-uniform formalisation of the Immerman-\szel\ theorem, $\coNL=\NL$.
Throughout this section we focus only on $\elndt$, not $\eldt$.

\subsection{Working with positive decisions}
\label{sec:pos-decs}

It will be convenient in our construction to stitch together branching programs by `positive' decisions.
This sort of decision induces what are often called `monotone' or `positive' (non-deterministic) branching programs (PBPs): NBPs where, for every $0$-edge, there is a parallel $1$-edge \cite{GrigniSnipser,grigni1991structure}, for instance as in~\cref{example-2-4-thresh} earlier. 
We investigated the proof complexity of PBPs in earlier work \cite{Das-Del,DasDel25:pos-bps-journal} essentially by restricting decision formulas to have the form
 $\posdec A p B$, hence comprising a bona fide sublanguage of eNDT formulas.
Here it will be convenient to `bootstrap' our language with the capacity to make (certain) positive decisions on \emph{compound} formulas (over a fixed set of extension variables), and even recursively so.

\begin{definition}
    [Positive decisions]
    \label{pos-decision-programs}
    Let $\E = \{e_i \extiff E_i\}_{i<n}$ be a set of extension axioms and $\vec e=\{e_i\}_{i<n}$.
    We introduce fresh extension variables by the closure property:
    \begin{itemize}
        \item if $B$ is a formula over $\vec e$, and $A,C$ are formulas, then $\posdec A B C$ is an extension variable. 
    \end{itemize}
    Over this language 
    we define $\pos \E$ to be the smallest extension of $\E$ by all extension axioms of the form:
    \[
    \begin{array}{r@{\ \extiff \ }l}
         \posdec A 0 C & A \\
         \posdec A 1 C & A \lor C \\
         \posdec A{(B_0 \lor B_1) }C & (\posdec A{B_0}C )\lor( \posdec A{B_1}C) \\
         \posdec A{(\dec {B_0}p{B_1})}C & \dec{(\posdec A{B_0}C)}p{(\posdec A{B_1}C)} \\
         \posdec A{e_i} C & \posdec A{E_i}C 
    \end{array}
    % \quad
    %  \begin{array}{r@{\ \extiff \ }l}
    %      % \posdec A 0 C & A \\
    %      % \posdec A 1 C & A \lor C \\
    %      \posdec A{(B_0 \lor B_1) }C & (\posdec A{B_0}C )\lor( \posdec A{B_1}C) \\
    %      \posdec A{(\dec {B_0}p{B_1})}C & \dec{(\posdec A{B_0}C)}p{(\posdec A{B_1}C)} \\
    %      % \posdec A{e_i} C & \posdec A{E_i}C 
    % \end{array}
    \]
\end{definition}

The notation we have used is intentionally suggestive, under the interpretation by $\E^+$, perhaps at the danger of ambiguity: we insist, for foundational reasons, that $\posdec A B C$ are \emph{not} compound formulas, but rather bona fide extension variables.
Note that the formulas $A$ and $C$ above may contain extension variables besides those in $\vec e$, in particular other positive decisions according to the closure property. 
For instance we have extension variables of the form $\posdec A B {(\posdec C {B'} D)}$ and so on. However $B$ and $B'$ here must be over the original set of extension variables $\vec e$.

By the $\E$-induction principle, \cref{A-induction}, we may readily establish the following characteristic truth conditions: 
% As expected, we may establish the expected characteristic `truth conditions' for positive decisions by polynomial-size proofs. 
% E.g., writing $A\land B \dfn \posdec 0 A B$, it is simple to give, using these truth conditions, polynomial-size proofs of $A\land B \seqar A$, $A\land B \seqar B$ and $A,B \seqar A\land B$.

\begin{lemma}
    [Truth for positive decisions]
    \label{truth-pos-decs}
    Fix a set of extension axioms $\E = \{e_i \extiff E_i\}_{i<n}$. 
    For formulas $B$ over $\vec e$
    there are polynomial size $\elndt$ proofs over $\pos \E$ of:
    % \begin{enumerate}
    %     \item 
    % \end{enumerate}
    \[
    \begin{array}{r@{ \ \seqar \ }l}
        \posdec A B C & A,B \\
        \posdec A B C & A,C 
        \\
        A & \posdec A B C \\
        B,C & \posdec A B C
    \end{array}
    % \qquad
    %  \begin{array}{r@{ \ \seqar \ }l}
    %     % \posdec A B C & A,B \\
    %     % \posdec A B C & A,C \\
    %     A & \posdec A B C \\
    %     B,C & \posdec A B C
    % \end{array}
    \]
\end{lemma}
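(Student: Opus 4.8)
The plan is to proceed by $\E$-induction on the formula $B$ over $\vec e$ (equivalently, by induction along $<_{\pos\E}$ restricted to such $B$), establishing all four sequents simultaneously. The point is that each of the five defining extension axioms of $\pos\E$ (\cref{pos-decision-programs}) reduces the positive decision $\posdec A B C$ to strictly smaller positive decisions, whose corresponding sequents are available by the inductive hypothesis; it then remains to combine these with a handful of structural, decision, $\lor$-, and extension steps in $\elndt$ over $\pos\E$. Since there are only finitely many cases and each contributes only a constant-size chunk of proof, the total size stays polynomial by \cref{complexity-of-A-induction} (the relevant $<_{\pos\E}$-predecessors of $\posdec A B C$ are bounded in number and size).

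\textbf{The base cases.} For $B=0$ we have the extension axiom $\posdec A 0 C \extiff A$, so $\posdec A 0 C \seqar A,B$ and $A \seqar \posdec A 0 C$ follow by a cut against the extension axiom plus a weakening; the sequents $\posdec A 0 C \seqar A, C$ and $B, C \seqar \posdec A 0 C$ (i.e.\ $0,C \seqar \posdec A 0 C$) follow similarly, the latter using the $\lefrul 0$ rule. For $B=1$, the axiom is $\posdec A 1 C \extiff A\lor C$; the four sequents then read $A\lor C \seqar A,1$, $A\lor C \seqar A,C$, $A\seqar A\lor C$, $1,C\seqar A\lor C$, all provable by $\lefrul\lor$, $\rigrul\lor$, $\id$, $\wk$ steps, together with the extension cut. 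For $B=e_i$ the axiom is $\posdec A{e_i}C \extiff \posdec A{E_i}C$; since $E_i$ is over $\vec e$ and $E_i <_{\pos\E} e_i$, the four sequents for $\posdec A {E_i} C$ are available by IH, and we just cut them against the extension axiom.

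\textbf{The inductive (connective) cases.} For $B = B_0 \lor B_1$ (both $B_i$ over $\vec e$), the axiom gives $\posdec A B C \extiff (\posdec A{B_0}C)\lor(\posdec A{B_1}C)$; we feed the IH sequents for $\posdec A{B_0}C$ and $\posdec A{B_1}C$ through $\lefrul\lor$ / $\rigrul\lor$ and combine with $\rigrul\lor$ on $B_0\lor B_1$ where needed (for the two sequents with $B$ on the right, we use $\posdec A {B_i} C \seqar A, B_i$ then $\rigrul\lor$ to get $B_0\lor B_1$; for the two with $B$ on the left, $B_i, C \seqar \posdec A {B_i} C$ combined via $\lefrul\lor$). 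For $B = \dec {B_0} p {B_1}$ (both $B_i$ over $\vec e$), the axiom gives $\posdec A B C \extiff \dec{(\posdec A{B_0}C)}p{(\posdec A{B_1}C)}$, and the proofs assemble the four target sequents from the eight IH sequents using $\lefrul p$ and $\rigrul p$ — e.g.\ to get $\posdec A B C \seqar A, \dec{B_0}p{B_1}$, apply $\rigrul p$ with premisses derived from $\posdec A{B_0}C \seqar A, B_0$ and $\posdec A{B_1}C \seqar A, B_1$ (plus the extension cut turning the outer positive decision into the decision formula), handling the $p$ on each side by $\id$/$\wk$.

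\textbf{Main obstacle.} No single step is deep; the only thing to be careful about is the bookkeeping in the two inductive cases — keeping the ``left'' variants ($\posdec A B C \seqar A,B$ and $\seqar A,C$) and ``right'' variants ($A \seqar \posdec A B C$ and $B,C \seqar \posdec A B C$) paired correctly so that the IH gives exactly what each $\lefrul p$/$\rigrul p$ or $\lefrul\lor$/$\rigrul\lor$ premiss needs, and checking that $B$ (and its subformulas $B_i$) really do stay over $\vec e$ so that the extension axioms of \cref{pos-decision-programs} apply. Confirming the polynomial size bound is then immediate from \cref{complexity-of-A-induction}, since the recursion depth and branching are controlled by the structure of $B$ under $<_{\pos\E}$.
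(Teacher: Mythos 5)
Your proposal is correct and follows essentially the same route as the paper: an $\E$-induction on $B$, with one case per defining axiom of $\pos\E$, combining the inductive hypotheses with $\lefrul\lor/\rigrul\lor$, $\lefrul p/\rigrul p$ and extension cuts (the paper writes out only the $B,C\seqar\posdec ABC$ variant and notes the other three are analogous, exactly as you sketch). The size bookkeeping via \cref{complexity-of-A-induction} matches the paper's implicit bound as well.
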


% For instance, writing $A\land B \dfn \posdec 0 A B$, it is simple to show, using the truth lemma above, polynomial-size proofs of $A\land B \seqar A$, $A\land B \seqar B$ and $A,B \seqar A\land B$.

\begin{proof}
The argument is a straightforward $\E$-induction on $B$. We will only present proofs for $ B,C \seqar \posdec ABC$ as the other sequents follow similarly.

In the base case, when $B$ is $0$ or $1$, we have:
\[
\vlderivation{
     \vliq{\wk}{}{ 0,C \seqar B}{
       \vlin{\lefrul 0 }{}{0\seqar}{\vlhy{}}
     }
     }
     \qquad
     \vlderivation{
     \vliq{\wk}{}{ 1,C \seqar A,C}{
       \vlin{\id}{}{C\seqar C}{\vlhy{}}
     }
     }
     \]

If $B=D\lor E$ we have,

\[
\vlderivation{\vlin{\pos \E,\rigrul{\lor}}{}{ D\lor E,C \seqar \posdec{A}{(D\lor E)}{ C}}{
     \vliin{\lefrul{\lor}}{}{ D\lor E,C \seqar \posdec{A}{D}{ C}, \posdec{A}{E}{ C}}{
        \vliq{\IH}{}{  D,C\seqar \posdec{A}{D}{ C}}{\vlhy{}{}{}{}}
     }{
        \vliq{\IH}{}{  E,C\seqar \posdec{A}{E}{ C} }{\vlhy{}{}{}{}{}}
     }}
     }
     \]
where the steps marked $\IH$ are obtained by the inductive hypothesis.

    If $B=\dec D p  E$ we have the two proofs:
    \[
    \vlderivation{
        \vliin{\pos \E,\rigrul p}{}{  C,D\seqar p,\posdec {A}{(\dec D pE)}{C}}{\vliq{\wk,\exch}{}{C,D \seqar p,p,\posdec{A}{D}{C}}{
        \vliq{\IH}{}{D,C \seqar \posdec A D C }{\vlhy{}}
        }{}{}{}{}}{\vliq{ \wk}{}{C,D,p \seqar p,\posdec{A}{E}{C}}{\vlin{\id}{}{p\seqar p}{\vlhy{}}}{}{}{}{}}
     }
    \]
    \[
    \vlderivation{
      {
        \vliin{\pos \E,\rigrul p
        }{}{  C,p,E\seqar \posdec {A}{(\dec D pE)}{C}}{\vliq{\wk}{}{C,p,E \seqar p,\posdec {A}{D}{C}}{\vlin{\id}{}{p\seqar p}{\vlhy{}}}{}{}{}{}}{\vliq{\wk,\exch}{}{C,p,E \seqar \posdec {A}{E}{C}}{
            \vliq{\IH}{}{E,C \seqar \posdec A E C}{\vlhy{}}
        }{}{}{}{}}}}
    \]
    which we can combine to obtain, as required:

\[
\vlderivation{
     \vliin{\lefrul p}{}{ \dec D pE,C \seqar \posdec {A}{(\dec D pE)}{C}}{
     \vlhy{  C,D\seqar p,\posdec {A}{(\dec D pE)}{C}}
     }{
     \vlhy{  C,p,E\seqar \posdec {A}{(\dec D pE)}{C}}
     } }
\]
     
   % \makebox(388pt,88pt){\scriptsize$\vlderivation{
   %   \vliin{\lefrul p}{}{ \dec D pE,C \seqar \posdec {A}{(\dec D pE)}{C}}{
   %      \vliin{\pos \E,\rigrul p}{}{  C,D\seqar p,\posdec {A}{(\dec D pE)}{C}}{\vliq{\rigrul{\wk}, I.H.}{w}{C,D \seqar p,p,\posdec{A}{D}{C}}{\vlhy{}}{}{}{}{}}{\vlin{\lefrul \wk,\rigrul \wk}{}{C,D,p \seqar p,\posdec{A}{E}{C}}{\vlhy{p\seqar p}}{}{}{}{}}
   %   }{
   %    {
   %      \vliin{\pos \E,\rigrul p
   %      }{}{  C,p,E\seqar \posdec {A}{(\dec D pE)}{C}}{\vlin{\lefrul \wk, \rigrul \wk}{}{C,p,E \seqar p,\posdec {A}{D}{C}}{\vlhy{p\seqar p}}{}{}{}{}}{\vliq{\lefrul \wk, I.H.}{}{C,p,E \seqar \posdec {A}{E}{C}}{\vlhy{}}{}{}{}{}}}} }$}

        If $B$ is an extension variable $\ee{i}{}{}$ with extension axiom $\ee{i}{}{}\extiff E_i$, we just call the inductive hypothesis for  $\posdec {A}{E_i}{C}$.
\end{proof}

\begin{remark}
[Positive decision rules]
    \label{rem:posdec-rules}
    Note that, from the Truth Conditions above, we can immediately derive the `positive decision' rules as in \cite{Das-Del,DasDel25:pos-bps-journal}, that we may use in the sequel:
\begin{equation}
    \label{eq:posdec-rules}
    \vliinf{\lefrul{B^+}}{}{\Gamma, \posdec A B C \seqar \Delta}{\Gamma ,A\seqar \Delta}{\Gamma, B,C \seqar \Delta}
\qquad
\vliinf{\rigrul{B^+}}{}{\Gamma \seqar \Delta, \posdec A B C}{\Gamma\seqar \Delta, A,B}{\Gamma \seqar \Delta, A , C}
\end{equation}
\end{remark}

\subsection{Positive counting programs and their properties}
\label{sec:pos-counting-progs}
\label{thresholds-over-programs}
For the remainder of this section let us fix $\vec B = B_0, \dots, B_{N-1}$ over a set of extension axioms $\mathcal B$.
Introduce new (temporarily uninterpreted) extension variables $\thresh {\Bj j}  k $ for each $\Bj j  = B_j, \dots, B_{N-1}$, for $j\leq N$ (so $\Bj N$ is just the empty list $\epsilon$) and $k \in \mathbb Z$.
After generating $\pos{\mathcal B}$ as in~\cref{pos-decision-programs} above, we define $\thrextaxs {\vec B}{\mathcal B}$ to be the smallest extension of $\pos{\mathcal B}$ by all extension axioms of the form:
\[
\begin{array}{r@{\ \extiff \ }ll}
     \thresh {\epsilon  }  k & 1 & \text{for $k\leq 0$} \\
     \thresh \epsilon {k} & 0 & \text{for $k> 0$}
     \\
     \thr { \Bj j } {k} & 
     \posdec {\thresh{\Bj{j+1}}{k} } {B_j} { \thresh{\Bj {j+1}} {k-1} } & \text{for $j<N$} 
\end{array}
% \qquad
% \begin{array}{r@{\ \extiff \ }ll}
%      % \thr {\List{}B{s}} 0 & 1 \\
%      % \thr \epsilon {k+1} & 0 \\
%      \thr { \List{}B{s} } {k+1} & 
%      \posdec {\thr {\List{}B{s+1}}{k+1} } {B_s} { \thr{\List{}B{s+1}}{k}} & \text{(for $s<N$)}
% \end{array}
\]
It is not hard to see, by induction on $|\Bj j|$, that $\thresh {\Bj j } k$ is true over $\thrextaxs \B \BB$ just when at least $k$ of $\Bj j$ are true.
Note that we have extension variables with negative subscripts too, like $\thr \epsilon {-1}$, which we have set to $1$. 
It is convenient to admit these so that definitions and proofs by induction on the length of the superscript are more uniform (like the final case above).

\begin{remark}
    Notice that the thresholds programs above are defined slightly differently from the ones from the earlier \cref{example-2-4-thresh}, in that the leaves left of the $1$ sink in \cref{fig:thresh-nbp-4-2} are set to $1$ rather than $0$ by the above formulation.
    By monotonicity of the program, this makes no difference to the semantics, but the current formulation streamlines the exposition henceforth.
\end{remark}

Similar `threshold' programs were introduced in \cite{Das-Del,DasDel25:pos-bps-journal} where several basic properties were established in order to carry out counting arguments.
As our formulation is slightly different, and for self-containment, we reproduce two necessary lemmata here, with reference to the analogous results from previous work.
% , in particular:
% $\seqar \thr{\List{}B{s}}0 $ and $ \thr{\List{}B{s}}{N-s+1} \seqar $.
% \todoanu[]{be more specific with what is needed?}

\renewcommand{\storageone}{\text{\cref{truth-pos-decs}}}

% We will use some results from \cite{Das-Del}, reformulated to arbitrary $\vec B$ rather than just lists of propositional variables $\vec p$:
% \todoavg[]{i use the below result in cref{putting strategy to elndt(land) together}}
% \todoanu[]{changing all $\vec p$ to $\vec B$ and moving}
\begin{lemma}
[Monotonicity of thresholds, cf.~{\cite[Proposition 4.6]{DasDel25:pos-bps-journal}}]
 \label{monotonicity-of-threshold-subscripts}
 There are polynomial-size $\elndt$ proofs over $\thrextaxs{\vec B}{\BB}$ of:
 % \[
 % \seqar \thr {\vec B^s} 0
 % \quad ; \qquad
 % \thr{\vec B^s}{k+1} \seqar \thr{\vec B^s}{k}
 % \quad; \qquad
 % \thr{\vec B^s}{k} \seqar  \quad \text{when $k>|\vec B^s|$.}
 % \]
 \begin{enumerate}
     \item\label{item:thr-0-true} $\quad \seqar \thr {\Bj j} k$ whenever $k\leq 0$
    \item\label{item:thr-big-false} $\thr{\Bj j}{k} \seqar  \quad $ whenever $k>|\Bj j|$
     \item\label{item:thr-k+1-implies-thr-k} $\thr{\Bj j}{k} \seqar \thr{\Bj j }{k-1}$
 \end{enumerate}
\end{lemma}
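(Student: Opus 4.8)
The plan is to prove all three items by induction on the length $|\Bj j|$ of the superscript, using the recursive structure of the extension axioms for $\thr{\Bj j}{k}$ together with the positive decision rules from \cref{rem:posdec-rules} and the base-case axioms $\thr \epsilon k \extiff 1$ (for $k\leq 0$) and $\thr \epsilon k \extiff 0$ (for $k>0$). Since each $\thr{\Bj j}{k}$ unfolds over $\thrextaxs{\vec B}{\BB}$ to $\posdec{\thr{\Bj{j+1}}{k}}{B_j}{\thr{\Bj{j+1}}{k-1}}$, every inductive step reduces a claim about $\Bj j$ to claims about $\Bj{j+1}$ (with possibly shifted thresholds), which is exactly the shorter list.

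For \cref{item:thr-0-true}: in the base case $\Bj j = \epsilon$ and $k\leq 0$, the sequent $\seqar \thr\epsilon k$ is just the extension axiom $\thr\epsilon k \extiff 1$ composed with $\rigrul 1$. In the inductive step, $\thr{\Bj j}{k} \extiff \posdec{\thr{\Bj{j+1}}{k}}{B_j}{\thr{\Bj{j+1}}{k-1}}$; using the $\rigrul{B^+}$ rule from \eqref{eq:posdec-rules}, it suffices to prove $\seqar \thr{\Bj{j+1}}{k}, B_j$ and $\seqar \thr{\Bj{j+1}}{k}, \thr{\Bj{j+1}}{k-1}$, both of which follow from the inductive hypothesis (for $k$ and $k-1$ respectively, both $\leq 0$) together with a weakening. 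For \cref{item:thr-big-false}: base case $\Bj j=\epsilon$, so $k > 0$ and $\thr\epsilon k \extiff 0$, giving the sequent via $\lefrul 0$. Inductive step: apply $\lefrul{B^+}$ to $\thr{\Bj j}{k}$, reducing to $\thr{\Bj{j+1}}{k} \seqar$ and $B_j, \thr{\Bj{j+1}}{k-1} \seqar$; since $k > |\Bj j| = |\Bj{j+1}| + 1$, we have $k > |\Bj{j+1}|$ and $k-1 > |\Bj{j+1}|$, so both follow from the inductive hypothesis plus weakening.

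For \cref{item:thr-k+1-implies-thr-k}, i.e.\ $\thr{\Bj j}{k} \seqar \thr{\Bj j}{k-1}$: the base case $\Bj j=\epsilon$ splits on the sign of $k$ — if $k\leq 0$ then $k-1\leq 0$ too and both sides are $1$; if $k=1$ then the antecedent is $0$; if $k>1$ then the antecedent is $0$ as well — in each subcase the sequent is trivial using the base axioms and $\lefrul 0$ / $\rigrul 1$. For the inductive step, unfold both thresholds: the antecedent is $\posdec{\thr{\Bj{j+1}}{k}}{B_j}{\thr{\Bj{j+1}}{k-1}}$ and the succedent is $\posdec{\thr{\Bj{j+1}}{k-1}}{B_j}{\thr{\Bj{j+1}}{k-2}}$. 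Apply $\lefrul{B^+}$ to the antecedent and $\rigrul{B^+}$ to the succedent; this leaves goals of the shape $\thr{\Bj{j+1}}{k} \seqar \thr{\Bj{j+1}}{k-1}, B_j$ and $B_j, \thr{\Bj{j+1}}{k-1} \seqar \thr{\Bj{j+1}}{k-1}, B_j$ and $B_j, \thr{\Bj{j+1}}{k-1} \seqar \thr{\Bj{j+1}}{k-2}, B_j$ — the first and third being instances of the inductive hypothesis (for thresholds $k$ and $k-1$) after weakening, and the middle ones discharged by $\id$ and weakening. Care is needed to arrange the combination of the four resulting premises so that the $\lefrul{B^+}$ and $\rigrul{B^+}$ applications line up, but this is purely a bookkeeping exercise with structural rules.

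The size bound is immediate: the induction on $|\Bj j|$ has depth at most $N$, and at each level we invoke the inductive hypothesis a bounded number of times at boundedly many shifted thresholds, together with constant-size applications of the positive decision rules (whose derived status from \cref{truth-pos-decs} costs only polynomially in $|A|,|B_j|,|C|$) and structural rules; so the total proof size is polynomial in $N$ and the sizes of the $B_j$ and of $\BB$. The main obstacle is not conceptual but notational: keeping track of the interplay between the threshold-index shifts ($k$ versus $k-1$ versus $k-2$) and the case splits on the sign of $k$ and on $k$ versus $|\Bj j|$ in the base cases, and ensuring the positive decision rules are applied to the correct principal formula on each side. Everything else is routine once the recursion on the extension axioms is set up.
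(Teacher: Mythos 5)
Your proposal is correct and follows essentially the same route as the paper: induction on $|\Bj j|$, with base cases read off the $\thrextaxs{\vec B}{\BB}$ axioms and inductive steps obtained by unfolding $\thr{\Bj j}{k}$ to $\posdec{\thr{\Bj{j+1}}{k}}{B_j}{\thr{\Bj{j+1}}{k-1}}$ and discharging the resulting premises via the inductive hypothesis, identity and weakening. The only (immaterial) difference is that you go through the derived rules of \cref{rem:posdec-rules}, sometimes invoking the inductive hypothesis at two shifted indices, where the paper cuts directly against the truth conditions of \cref{truth-pos-decs} and, in item \eqref{item:thr-k+1-implies-thr-k}, gets by with a single inductive-hypothesis instance plus an identity; both yield the stated polynomial bound (counting distinct sequents, as proofs may be dag-like).
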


\begin{proof}
    All items are proved by induction on $|\Bj j|$. The base cases, when $\Bj j = \epsilon$, are immediate from the extension axioms $\thrextaxs {\vec B} \BB$.
    We derive the inductive steps as follows,
        \[
        \eqref{item:thr-0-true} : \quad
        \vlderivation{
        \vlin{\thrextaxs{\vec B} \BB}{}{\seqar \thresh {\Bj j} k }{
        \vliq{\storageone}{}{\seqar \posdec{ \thresh{\Bj {j+1}}{k} }{B_j}{\thresh{\Bj {j+1}}{k-1}} }{
        \vliq{\IH}{}{\seqar \thresh {\Bj {j+1}} k }{\vlhy{}}
        }
        }
        }
        \qquad
        \eqref{item:thr-big-false} : \quad
        \vlderivation{
        \vlin{\thrextaxs{\vec B} \BB}{}{\thresh{\Bj j}{k} \seqar }{
        \vliiq{\storageone}{}{ \posdec{ \thresh {\Bj {j+1}}{k} }{B_j}{\thresh{\Bj{j+1}}{k-1}} \seqar }{
            \vliq{\IH}{}{\thresh{\Bj{j+1}}{k} \seqar }{\vlhy{}}
        }{
            \vliq{\IH}{}{\thresh{\Bj{j+1}}{k-1} \seqar }{\vlhy{}}
        }
        }
        }
        \]
        \[
\eqref{item:thr-k+1-implies-thr-k} :
\quad
        \vlderivation{
        \vliq{\thrextaxs{\vec B} \BB}{}{\thresh{\Bj j }{k} \seqar \thresh{\Bj j}{k-1} }{
        \vliiq{\storageone}{}{ \posdec{\thresh{\Bj {j+1}}{k}}{B_j}{\thresh{\Bj{j+1}}{k-1}} \seqar \posdec{\thresh{\Bj {j+1}}{k-1}}{B_j}{\thresh{\Bj{j+1}}{k-2}} }{
            \vliq{\IH}{}{\thresh{\Bj{j+1}}{k} \seqar \thresh{\Bj {j+1}}{k-1}}{\vlhy{}}
        }{
            \vlin{\id}{}{\thresh{\Bj{j+1}}{k-1} \seqar \thresh{\Bj{j+1}}{k-1} }{\vlhy{}}
        }
        }
        }
        \]
        where the steps marked $\IH$ are obtained by the inductive hypothesis.
\end{proof}

\begin{lemma}
[Truth for thresholds, cf.~{\cite[Lemma 5.4]{DasDel25:pos-bps-journal}}]
\label{thresh-truth}
There are polynomial-size $\elndt$ proofs over $\thrextaxs{\vec B} \BB$, of:
    \begin{enumerate}
        \item\label{item:thresh-truth-Bj-true} $B_j, \thresh {\Bj {j+1}} k \seqar \thresh {\Bj j} {k+1}$
        \item\label{item:thresh-truth-j-dec} $\thresh {\Bj {j+1}} k \seqar \thresh {\Bj j} k$
        \item\label{item:thresh-truth-Bj-false} $\thresh{\Bj j} k \seqar \thresh {\Bj {j+1}} k , B_j$
        \item\label{item:thresh-truth-j-inc} $\thresh{\Bj j } k \seqar \thresh{\Bj {j+1}} {k-1}$
    \end{enumerate}
\end{lemma}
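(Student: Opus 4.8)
The plan is to prove all four items simultaneously by induction on $|\Bj j|$, exactly as in the proof of \cref{monotonicity-of-threshold-subscripts}. Recall that each $\thr{\Bj j}{k}$ is, by the extension axioms $\thrextaxs{\vec B}{\BB}$, the positive decision $\posdec{\thr{\Bj{j+1}}{k}}{B_j}{\thr{\Bj{j+1}}{k-1}}$, so each item is a statement about such a positive decision, which we may decompose using the positive decision rules of \cref{rem:posdec-rules} (equivalently, the Truth Conditions of \cref{truth-pos-decs}). In the base case $\Bj j = \epsilon$ there is no $B_j$, so the statements about $\thr{\Bj j}{k}$ vs $\thr{\Bj{j+1}}{k}$ are vacuous or trivially instances of the defining axioms for $\thr\epsilon{k}$; some will also need the facts from \cref{monotonicity-of-threshold-subscripts} (item \eqref{item:thr-0-true} and \eqref{item:thr-k+1-implies-thr-k}) to handle the edge cases where a subscript crosses $0$.

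For the inductive step, I would handle each item by unfolding the relevant $\thr{\Bj j}{\cdot}$ via its extension axiom into a positive decision over $B_j$, applying $\lefrul{B^+}$ on the left and/or $\rigrul{B^+}$ on the right (\cref{eq:posdec-rules}), and closing the resulting leaves with the inductive hypotheses for $\Bj{j+1}$ together with identity and weakening. Concretely, for \eqref{item:thresh-truth-Bj-true}, $B_j,\thr{\Bj{j+1}}k \seqar \thr{\Bj j}{k+1}$: unfold the RHS to $\posdec{\thr{\Bj{j+1}}{k+1}}{B_j}{\thr{\Bj{j+1}}{k}}$ and apply $\rigrul{B^+}$; one branch needs $B_j,\thr{\Bj{j+1}}k\seqar \thr{\Bj{j+1}}{k+1},B_j$, immediate by $\id$ on $B_j$ plus weakening, and the other needs $B_j,\thr{\Bj{j+1}}k \seqar \thr{\Bj{j+1}}{k+1},\thr{\Bj{j+1}}k$, immediate by $\id$ on $\thr{\Bj{j+1}}k$ plus weakening. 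For \eqref{item:thresh-truth-j-dec}, $\thr{\Bj{j+1}}k\seqar\thr{\Bj j}k$: unfold the RHS and apply $\rigrul{B^+}$, using the IH version of \eqref{item:thr-k+1-implies-thr-k} (i.e.\ \cref{monotonicity-of-threshold-subscripts}\eqref{item:thr-k+1-implies-thr-k}) to pass from $\thr{\Bj{j+1}}k$ to $\thr{\Bj{j+1}}{k-1}$ on one branch and $\id$ on the other. Items \eqref{item:thresh-truth-Bj-false} and \eqref{item:thresh-truth-j-inc} are dual: unfold the LHS $\thr{\Bj j}k$ into a positive decision and apply $\lefrul{B^+}$, closing with $\id$, weakening, and in \eqref{item:thresh-truth-j-inc} again \cref{monotonicity-of-threshold-subscripts}\eqref{item:thr-k+1-implies-thr-k} to weaken $\thr{\Bj{j+1}}k$ down to $\thr{\Bj{j+1}}{k-1}$.

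Polynomial size follows routinely: each inductive step adds only a constant number of sequents (a couple of positive-decision steps plus identities and weakenings, and where invoked, a constant-overhead appeal to \cref{monotonicity-of-threshold-subscripts}), and the recursion depth is $|\Bj j| \le N$, so each proof has size polynomial in $N$ and the sizes of the $B_i$ and the extension axioms $\BB$; we invoke \cref{complexity-of-A-induction} to bound the sizes of the subformulas appearing. I do not expect any real obstacle here — the work is essentially bookkeeping — beyond being careful about the boundary cases when $k$ or $k-1$ becomes non-positive or exceeds $|\Bj j|$, where one must substitute the constants $1$ or $0$ (respectively) and appeal to \cref{monotonicity-of-threshold-subscripts}\eqref{item:thr-0-true} rather than blindly unfolding. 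The only mildly delicate point is threading the correct instance of the monotonicity lemma (for the one-shorter list $\Bj{j+1}$) through items \eqref{item:thresh-truth-j-dec} and \eqref{item:thresh-truth-j-inc}, but since \cref{monotonicity-of-threshold-subscripts} is already available over $\thrextaxs{\vec B}{\BB}$ for all sublists, this is unproblematic.
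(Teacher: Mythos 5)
Your proposal is correct and essentially matches the paper's proof: each sequent is obtained by unfolding the relevant threshold via its $\thrextaxs{\vec B}{\BB}$ axiom and applying the truth conditions for positive decisions (\cref{truth-pos-decs}, equivalently the derived rules of \cref{rem:posdec-rules}), with \cref{monotonicity-of-threshold-subscripts}.\eqref{item:thr-k+1-implies-thr-k} invoked where the subscript drops, exactly as in item \eqref{item:thresh-truth-j-inc}. The outer induction on $|\Bj j|$ is superfluous — your concrete derivations never actually use the inductive hypothesis, and indeed the paper proves all four items directly — but this is a stylistic difference, not a gap.
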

\renewcommand{\storagetwo}{\text{\cref{truth-pos-decs} \& \cref{monotonicity-of-threshold-subscripts}}}

\begin{proof}
Almost all cases follow by simply unfolding a threshold extension variable by $\thrextaxs{\vec B} \BB$ and applying the truth conditions for positive decisions:
    \[
\eqref{item:thresh-truth-Bj-true}: \ 
    \vlderivation{
    \vlin{\thrextaxs {\vec B} \BB}{}{B_j, \thresh {\Bj {j+1}} k \seqar \thresh {\Bj j } {k+1}}{
    \vliq{}{}{B_j, \thresh{\Bj{j+1}}{k} \seqar \posdec{\thresh {\Bj {j+1}  }k} {B_j}{\thresh{{\Bj {j+1}}}{k} } }{\vlhy{\storageone}}
    }
    }
    \qquad
\eqref{item:thresh-truth-j-dec}: \ 
    \vlderivation{
    \vlin{\thrextaxs {\vec B} \BB}{}{\thresh {\Bj {j+1}} k \seqar \thresh {\Bj j} k }{
    \vliq{}{}{\thresh {\Bj {j+1}} k \seqar \posdec {\thresh {\Bj {j+1}} k}{B_j}{\thresh {\Bj {j+1}} {k-1}} }{\vlhy{\storageone}}
    }
    }
    \]
    \[
    \eqref{item:thresh-truth-Bj-false}:\ 
    \vlderivation{
    \vlin{\thrextaxs {\vec B} \BB}{}{\thresh {\Bj j} k \seqar \thresh {\Bj {j+1}} k , B_j }{
    \vliq{}{}
    {\posdec{\thresh {\Bj {j+1}} k }{B_j}{\thresh {\Bj {j+1}} {k-1} } \seqar \thresh {\Bj {j+1}} k , B_j }
    {\vlhy{\storageone}}
    }
    }
    % \]
    \qquad
\eqref{item:thresh-truth-j-inc}:\ 
    % \[
    \vlderivation{
    \vlin{\thrextaxs {\vec B} \BB }{}{\thresh{\Bj j} k \seqar \thresh{\Bj{j+1}}{k-1}}{
    \vliq{}{}{\posdec{\thresh{\Bj{j+1}}{k}}{B_j}{\thresh{\Bj{j+1}}{k-1}} \seqar \thresh{\Bj{j+1}}{k-1} }{
    \vlhy{\storagetwo}
    % \vliq{}{}{\posdec{\thresh{\Bj{j+1}}{k}}{B_j}{\thresh{\Bj{j+1}}{k-1}} \seqar \thresh{\Bj{j+1}}{k}, \thresh{\Bj{j+1}}{k-1}}{\vlhy{\storageone}}
    }
    }
    }
    \]
    In the final case, \eqref{item:thresh-truth-j-inc}, we are also using the monotonicity property $\thresh {\Bj{j+1}} k \seqar \thresh{\Bj{j+1}}{k-1}$, by \cref{monotonicity-of-threshold-subscripts}.\eqref{item:thr-k+1-implies-thr-k}.
\end{proof}

\subsection{Decider construction}
\label{sec:decider-construction}
We shall employ the counting mechanisms from \cref{thresholds-over-programs} to define a construction that will compute a decision on \exndt\ formulas relative to a fixed number of true formulas, over appropriate extension axioms. 
For this we shall employ ideas from the Immerman-\szel\ Theorem, $\coNL=\NL$, in a nonuniform setting.

Recall that, from the beginning of \cref{sec:pos-counting-progs}, we have fixed  formulas $\vec B = B_0, \dots, B_{N-1}$ 
over a set of extension axioms $\mathcal{ B}$, 
and we have generated $\pos{{\mathcal{ B}}}$ and 
$ \thrextaxs{\vec B}{\BB}$.
% 
% Definitions and statements in this subsection will also feature a parameter $k<N$ which
% intuitively will serve as a meta-level `counter' of how many formulas among $\vec B$ are true.
% 
The main result of this section is:
\begin{theorem}
[Formalised (partial) Immerman-\szel]
\label{thm:imm-szel}
For $i<N, k \in \mathbb Z $ and formulas $A,C$ there are extension variables $\dec A{B_i^k}C$ 
and a set of extension axioms extending $\thrextaxs{\vec B}{\BB}$ over which there are polynomial-size $\elndt$ proofs of:
\begin{equation}
    \label{eq:decider-truth-conds-relativised}
    \begin{array}{r@{\ \seqar \ }l}
     \gray{\thr{\vec B}k,}\dec A{B_i^k}C & A, B_i \gray{, \thr{\vec B}{k+1}} \\
     \gray{\thr{\vec B}k ,} \dec A{B_i^k}C , B_i & C \gray{, \thr{\vec B}{k+1} }
     \\
     % \thr{\vec B}k  , B_i , C & \dec A{B_i^k}C , \thr{\vec B}{k+1} \\
     % \thr{\vec B}k  , A  & B_i, \dec A{B_i^k}C, \thr{\vec B}{k+1}
% \end{array}
% \qquad
% \begin{array}{r@{\ \seqar \ }l}
     % \thr{\vec B}k,\dec A{B_i^k}C & A, B_i , \thr{\vec B}{k+1} \\
     % \thr{\vec B}k , \dec A{B_i^k}C , B_i & C, \thr{\vec B}{k+1} \\
     \gray{\thr{\vec B}k  ,} B_i , C & \dec A{B_i^k}C \gray{, \thr{\vec B}{k+1} }\\
     \gray{\thr{\vec B}k  , }A  & B_i, \dec A{B_i^k}C\gray{, \thr{\vec B}{k+1}}
\end{array}
\end{equation}
\end{theorem}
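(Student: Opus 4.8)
The plan is to realise $\dec A{B_i^k}C$ as a non-deterministic program behaving like the \emph{deterministic} decision ``if $B_i$ then $C$ else $A$'', where the ``else'' branch is guarded not by a direct test of $\neg B_i$ (impossible for an NBP) but by an Immerman--\szel-style certificate that $B_i$ is false. The key point is that, under the hypothesis recorded by the grey context --- exactly $k$ of $\vec B$ are true, i.e.\ $\thr{\vec B}{k}$ holds but $\thr{\vec B}{k+1}$ does not --- $B_i$ is false precisely when at least $k$ of the remaining $B_j$ ($j\neq i$) are true, a \emph{monotone} condition, hence computed by a counting program exactly of the kind built in \cref{sec:pos-counting-progs}. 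So I would fix, for each $i$, the threshold programs $\thr{\vec B^{(i)}_j}{k}$ obtained by applying the construction of \cref{sec:pos-counting-progs} to the list $\vec B$ with $B_i$ deleted --- so that \cref{monotonicity-of-threshold-subscripts} and \cref{thresh-truth} apply to them verbatim --- together with their $A$-relativised variants $\thr{\vec B^{(i)}}{k}[A]$ (the same programs with $A$ at the accepting sink, computing $\thr{\vec B^{(i)}}{k}\wedge A$; these are built from positive decisions on the $B_j$'s and so remain within the language of \cref{pos-decision-programs}). I would then set, using positive decisions (\cref{pos-decision-programs}, \cref{rem:posdec-rules}),
\[
\dec A{B_i^k}C \ \extiff \ \posdec{\thr{\vec B^{(i)}}{k}[A]}{B_i}{C},
\]
so that $\dec A{B_i^k}C$ denotes $(\thr{\vec B^{(i)}}{k}\wedge A)\vee(B_i\wedge C)$. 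The ``middles'' of all positive decisions involved ($B_i$ and the $B_j$'s) are formulas over the extension variables of $\BB$, so this stays within the regime of \cref{pos-decision-programs}; the deciders are introduced after $\thrextaxs{\vec B}{\BB}$, the punctured thresholds, and all extension variables of $A$ and $C$, following the subformula order on $A,C$ (so nested deciders inside $A,C$ cause no circularity), which keeps the whole axiom set well-founded and of size polynomial in the given data (outside $[0,N]$ the values of $k$ are handled immediately by \cref{monotonicity-of-threshold-subscripts}).

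The crux is a pair of ``bridge'' sequents relating the punctured count to the full count through $B_i$, which I would establish by polynomial-size $\elndt$ proofs: $\thr{\vec B^{(i)}}{k},B_i\seqar\thr{\vec B}{k+1}$ (B1) and $\thr{\vec B}{k}\seqar B_i,\thr{\vec B^{(i)}}{k}$ (B2), expressing semantically that inserting or removing the single element $B_i$ shifts the count by at most one. I would prove each by a downward induction on the position $j$ at which $B_i$ sits, establishing $\thr{\vec B^{(i)}_j}{k},B_i\seqar\thr{\Bj j}{k+1}$ (resp.\ $\thr{\Bj j}{k}\seqar B_i,\thr{\vec B^{(i)}_j}{k}$) for all $k$ at once: in the inductive step both threshold variables unfold, by their defining axioms, to positive decisions with the same front element $B_j$, so one applies $\lefrul{B_j^+}$ (resp.\ $\rigrul{B_j^+}$) and discharges the two premisses from the inductive hypothesis at $k$ and $k-1$ together with the truth conditions of \cref{truth-pos-decs}; the base case $j=i$, where the punctured suffix coincides with $\Bj{i+1}$ so that $B_i$ is at the front of $\Bj i$, is exactly \cref{thresh-truth}.\eqref{item:thresh-truth-Bj-true} (resp.\ \cref{thresh-truth}.\eqref{item:thresh-truth-Bj-false}). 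A few routine one-line inductions give the relating sequents $\thr{\vec B^{(i)}}{k}[A]\seqar\thr{\vec B^{(i)}}{k}$, $\thr{\vec B^{(i)}}{k}[A]\seqar A$ and $\thr{\vec B^{(i)}}{k},A\seqar\thr{\vec B^{(i)}}{k}[A]$.

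With these in hand, each of the four sequents of \eqref{eq:decider-truth-conds-relativised} follows by a short case analysis on which disjunct of $\dec A{B_i^k}C$ is witnessed, i.e.\ by unfolding the positive decisions with the rules of \cref{rem:posdec-rules} and then invoking \cref{truth-pos-decs} and the bridge sequents. The first and third need no context: the first reads off $A$ or $B_i$ in each branch, while the third is $B_i,C\seqar\dec A{B_i^k}C$ straight from the truth conditions for positive decisions. For the second, the $B_i\wedge C$ branch yields $C$, while the $\thr{\vec B^{(i)}}{k}\wedge A$ branch, together with the antecedent $B_i$, yields $\thr{\vec B}{k+1}$ by (B1) --- which is on the grey right. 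For the fourth, (B2) applied to the grey $\thr{\vec B}{k}$ on the left splits into the case $B_i$ (already on the right) and the case $\thr{\vec B^{(i)}}{k}$, which with the antecedent $A$ witnesses the first disjunct of $\dec A{B_i^k}C$, again via \cref{truth-pos-decs}.

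The main obstacle is the pair of bridge sequents (B1) and (B2): everything else is routine manipulation of positive decisions, but (B1)--(B2) require redoing a small inductive-counting argument for the punctured list and, more delicately, arranging the surrounding bookkeeping --- the $N$-fold introduction of punctured threshold families and the subformula-ordered introduction of the deciders --- so that the entire extended axiom set stays well-founded and of polynomial size. This is exactly where reusing the counting machinery of \cref{sec:pos-counting-progs} pays off, allowing us to avoid encoding any counting inside the NBPs we construct.
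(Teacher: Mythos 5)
Your proposal is correct, but it takes a genuinely different route from the paper. The paper (\cref{dec-ext-dfn}, \cref{decider}) realises the $k$-decider as a single one-pass NBP over all of $\vec B$, via extension variables $\decext j c b$ carrying an explicit counter $c$ and a flag $b$ recording whether $B_i$'s true branch was taken, and then proves node-wise characterisations of every $\decext j c b$ relative to suffix thresholds $\thr{\Bj j}{l}$ by backwards induction on $j$ (\cref{decext-char-low-count,decext-char-after-bool-flag,decext-char-before-bool-flag}); the theorem is the instance $j=0$, $c=0$, $l=k$. You introduce no counter program at all: you factor the decider as the positive decision $\posdec{\thr{\vec B^{(i)}}{k}[A]}{B_i}{C}$, i.e.\ the monotone combination $(\thr{\vec B^{(i)}}{k}\land A)\lor(B_i\land C)$ built by reusing the threshold machinery of \cref{sec:pos-counting-progs} on the punctured list, and you shift the combinatorial content into the two bridge sequents (B1) $\thr{\vec B^{(i)}}{k},B_i\seqar\thr{\vec B}{k+1}$ and (B2) $\thr{\vec B}{k}\seqar B_i,\thr{\vec B^{(i)}}{k}$, proved by induction on the position of $B_i$ with \cref{thresh-truth} as base case. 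Both arguments rest on the same Immerman-\szel\ observation (under the promise that exactly $k$ of $\vec B$ are true, $B_i$ is false iff at least $k$ of the others are true), and your derivations of the four sequents from (B1), (B2), \cref{truth-pos-decs} and \cref{rem:posdec-rules} check out; the bookkeeping you flag (fresh punctured and $A$-relativised threshold axioms, transferring \cref{monotonicity-of-threshold-subscripts,thresh-truth} to the punctured list or sharing its suffix variables with the originals, and running the bridge induction for all $k$ simultaneously) is exactly what is needed and stays polynomial. What each approach buys: yours is more modular and economical, isolating the key fact that inserting or deleting one element shifts a threshold by at most one; the paper's counter-and-flag program mirrors the classical inductive-counting construction more literally (the decider of \cref{decider} is one positive combination of $\vec B$ traversed in a single pass) and its intermediate lemmas characterise every internal node relative to arbitrary suffix counts, a somewhat more general reusable statement. (Incidentally, your decider computes a slightly different monotone function from the paper's --- its $C$-branch does not additionally verify $k-1$ of the remaining $B_j$ --- but this is immaterial for the four relativised truth conditions.)
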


The idea behind the sequents above is that, as long as \emph{exactly} $k$ of $\vec B$ are true, $\decider k A{B_i}C$ accurately computes a decision on $B_i$, returning the value of $A$ if false, and $C$ if true. 
Indeed, ignoring the gray threshold formulas, the sequents above are just the truth conditions for a decision `$\dec A {B_i} C$'.
Note here that having $\thr{\vec B}{k+1}$ on the RHS is semantically equivalent to having its negation on the LHS, and of course $\thr{\vec B}k \land \lnot \thr{\vec B}{k+1}$ holds just when exactly $k$ of $\vec B$ are true.
The notation $\dec A{B_i^k}C$ we have used is suggestive but we emphasise that these expressions are \emph{not} compound formulas, but rather bona fide extension variables.
We will call $\decider k A{B_i}C$ the 
\emphasis{$k$-decider} of $B_i$ (for $A,C$, over $\vec B$).

\begin{figure}[t]
\[
\small
\begin{tikzpicture}[scale=1.2, auto,swap]     
\foreach \pos/\name/\disp in {
  {(0,4)/1/$ {\blue{\text{{$B_0$}}}}$}, 
    {(0.5,4)/1.5/$\blue{(0,0)}$}, 
  {(-1,3)/2/$ {\blue{\text{{$B_1$}}}}$},
    {(-.5,3)/2.5/$\blue{(0,0)}$},
  {(1,3)/3/$ {\blue{\text{{$B_1$}}}}$}, 
    {(1.5,3)/3.5/$\blue{(1,0)}$},
  {(-2,2)/4/$ {\blue{\text{{$B_2$}}}}$}, 
    {(-1.5,2)/4.5/$\blue{(0,0)}$},
  {(0,2)/5/$ {\blue{\text{{$B_2$}}}}$},
    {(0.5,2)/5.5/$\blue{(1,0)}$},
  {(2,2)/6/$ {\blue{\text{{$B_2$}}}}$}, 
    {(2.5,2)/6.5/$\blue{(2,0)}$},
  {(0,1.53)/7/$\vdots$}, 
  {(-2,1.53)/8/\vdots}, 
  {(2,1.53)/9/\vdots},
  {(-2.8,-.3)/10/},
  {(2.8,-.3)/11/},
  {(-3.8,-1.62)/12/$ {\red{\text{{$B_{N-1}$}}}}$},
    {(-2.88,-1.62)/12.5/$\red{(k-1,0)}$},
  {(3.8,-1.62)/13/$ {\green{\text{{$B_{N-1}$}}}}$},
      {(4.5,-1.62)/13.5/$\green{(k,1)}$},
  {(-1.2,-1.62)/14/$ {\red{\text{{$B_{N-1}$}}}}$},
      {(-.53,-1.62)/14.5/$\red{(k,0)}$},
  {(1.4,-1.62)/29/$ {\green{\text{{$B_{N-1}$}}}}$},
    {(2.35,-1.62)/29.5/$\green{(k-1,1)}$},
   {(0.08,-1.68)/30/$\hdots$},
     {(0.08,-2.38)/31/$\hdots$},
       {(-5.28,-2.38)/32/$\hdots$},
         {(5.28,-2.38)/33/$\hdots$},
  {(-4.6,-2.8)/15/$\textcolor{orange}{0}$},
  {(-2.4,-2.8)/16/$ \red A$},
  {(-2.3,-2.8)/17/},
  {(-0.6,-2.8)/18/$\textcolor{orange}{0}$},
  {(+2.82,-2.8)/19/$\green C$},
  {(+4.6,-2.8)/20/$\textcolor{orange}{0}$},
  {(.8,-2.8)/34/$\textcolor{orange}{0}$},
  {(-2,-1.18)/21/$\hdots$},
  {(+2,-1.18)/22/$\hdots$},
   {(0,1.03)/23/$ {\blue{\text{{$B_i$}}}}$},
    {(.48,1.03)/23.5/$\blue{(c,0)}$},
   {(-0.88,-.03)/24/$ {\red{\text{{$B_{i+1}$}}}}$},
    {(-.288,-.03)/24.5/$\red{(c,0)}$},
   {(+0.88,-.03)/25/$ {\green{\text{{$B_{i+1}$}}}}$},
     {(1.75,-.03)/26.5/$\green{(c+1,1)}$},
   {(-.88,-.48)/26/$\vdots$},
   {(.88,-.48)/27/$\vdots$},
   {(-5.28,-1.68)/28/$\hdots$},
   {(5.28,-1.68)/28/$\hdots$}}
\node[minimum size=20pt,inner sep=0pt] (\name) at \pos {\disp};

  \draw [->][thin]
    (1) [out=180, in=100] to  (2);
    \draw [->][thick,dotted](1) to (2);
    \draw [->][thin](1) to (3);
    \draw [->][thin]
    (2) [out=180, in=100] to  (4);
    \draw [->][thick,dotted](2) to (4);
    \draw [->][thin](2) to (5);
     \draw [->][thin]
    (3) [out=180, in=100] to  (5);
    \draw [->][thick,dotted](3) to (5);
    \draw [->][thin](3) to (6);
    \draw [->][thin]
    (10) [out=180, in=100] to  (12);
    \draw [->][thick,dotted](10) to (12);
    \draw [->][thin](11) to (13);
    \draw [->][thin]
    (12) [out=180, in=100] to  (15);
     \draw [->][thick,dotted](12) to (15);
     \draw [->][thin](12) to (16);
     \draw [->][thin]
    (14) [out=180, in=100] to  (17);
     \draw [->][thick,dotted](14) to (17);
     \draw [->][thin](14) to (18);
     \draw [->][thin](13) to (20);
     \draw [->][thin]
    (13) [out=180, in=100] to  (19);
     \draw [->][thick,dotted](13) to (19);
     \draw [->][thin]
    (23) [out=180, in=100] to  (24);
    \draw [->][thick,dotted](23) to (24);
    \draw [->][thin](23) to (25);
    
    \draw [->][thin]
    (29) [out=180, in=100] to  (34);
    \draw [->][thick,dotted](29) to (34);
    \draw [->][thin](29) to (19);

\end{tikzpicture}
\]
\caption{The $k$-decider $\dec A{B_i^k} C$, over $\dextaxs \B \BB$, visualised as an NBP that is a positive combination of $\vec B$. $0$-edges are dotted and $1$-edges are solid. 
}\label{decider}
 \end{figure}

To give the intuition before its formal definition, the corresponding NBP that will be computed by $\dec A{B_i^k}C$ is visualised (over its corresponding extension axioms) in \cref{decider} as a NBP, in fact as a positive combination of the programs $\vec B$ we started with.  
Note here that each `node' is actually a copy of an NBP $B_j$, indexed by a pair $(c,b)$. $c$ is a counter that tracks how many true $B_j$s along the path thusfar, while $b$ is a Boolean flag that flips from $0$ to $1$ if $B_i$ is true along the path. 
The blue part of the diagram is all before deciding $B_i$, after which the program splits into two cases, indicated in red ($b=0$) or green ($b=1$), depending on whether the $0$-direction or $1$-direction, respectively, from $B_i$ was followed.
The orange $\color{orange} 0$ leaves correspond to paths where the counter does not match the number of true variables $k$, and so the program returns erroneously.
The extension variables used to describe this program will duly have three indices $c,b,j$.

Let us now define the $k$-decider $\dec A {B_i^k} C$ more formally and prove its relevant properties.
For the remainder of this section we fix $i<N $, $k \in \mathbb Z$ and formulas $A,C$, towards proving \cref{thm:imm-szel}.

\begin{definition}
    [Decision programs]
    \label{dec-ext-dfn}
    We introduce extension variables $\decext j c b$, for $j\leq N, c\leq N, b<2$, and write $\dextaxs \B \BB$ for the smallest extension of $\thrextaxs \B \BB$ by the following extension axioms, given by case analysis on $j\leq N$:
    \begin{itemize}
        \item For $j< N, j\neq i$: 
        $$\decext  j c b \dseqar \posdec {\decext {j+1} c b} {B_j} {\decext {j+1}{c+1}b}$$
        \item For $j=i$:
        \[
        \blue{\decext  i c 0} \dseqar \posdec {\red{\decext  {i+1} c 0}} {B_i} {\green{\decext  {i+1}{c+1}1}}
        \]
        \item For $j=N$:
        \[
        \begin{array}{r@{\ \dseqar\ }ll}
             \red{\decext  N k 0} & \red A & \\
             \green{\decext  N k 1} & \green C & \\
             \decext  N c b & \orange 0 & \text{if $c\neq k$}          
        \end{array}
        \]
        % \[
        % \decext  N c b
        % \dseqar 
        % \begin{cases}
        %     \orange 0 & c\neq k \\
        %     A & c= k, b=0 \\
        %     C & c=k, b=1
        % \end{cases}
        % \]

        From here we set $\dec A {B_i^k} C \dfn \decext 0 0 0$.
    \end{itemize}
\end{definition}

\begin{remark}
[Relation to \cite{DBLP:journals/jcss/AtseriasGP02} and Immerman-\szel]
The formulas $\dec 1{B_i^k} 0$, by \cref{thm:imm-szel}, will represent the negation of $B_i$ in environments where exactly $k$ of $\vec B$ are true.
Formally, in the terminology of \cite{DBLP:journals/jcss/AtseriasGP02} they are \emph{pseudocomplements} with respect to the `exact-$k$' functions.
However our construction of these pseudocomplements is rather inspired by the inductive counting argument underlying the Immerman-\szel\ theorem:
\begin{itemize}
    \item Each `node' non-deterministically checks whether some $B_i$ is true, i.e.\ whether its source reaches the sink $1$, similarly to the subroutine of Immerman-\szel. 
    The role of the counter $c$ is similar to the local counter of the subroutine, with the computation aborting (the orange $\orange 0$s of \cref{decider}) if the wrong number of $B_j$'s were guessed true.
    \item Instead of a main routine that inductively calculates the number of true $B_j$'s (i.e.\ the number of $B_j$'s whose sources eventually reach $1$), an exact number $k$ of true $ B_j$'s is given to us in the context of the corresponding truth conditions, cf.~\cref{eq:decider-truth-conds-relativised}. 
    The role of the main routine is rather carried out by a counting argument at the level of the proof system, essentially a (non-uniform) induction on $k$.
\end{itemize}
\end{remark}

\subsection{Some intermediate results}

% In order to obtain \cref{thm:imm-szel} we first require several intermediate results about local properties of the decider construction. 

% Looking back to \cref{decider}, note that each `node' is actually a copy of an NBP $B_i$, indexed by a pair $(l,b)$. $l$ keeps count of how many true $B_i$s along the path thusfar, while $b$ is a Boolean flag that flips from $0$ to $1$ if $B_i$ is true along the path.
% \todoanu[]{also explain $m$ and $s$.}{}
% In environments where exactly $k$ of the $\vec B$ are true, this program correctly `decides' $B_i$, calling $A$ if it is false and $C$ if it is true. 
% Otherwise the program may return erroneously, indicated by the orange $\color{orange} 0$ leaves.

% We shall use this idea to show that our $k$-deciders have the expected properties, by way of polynomial-size proofs.

% \todoanu[]{talk about thresholds here. need the variables.}

% Proving in $\elndt$ that exactly $k$ of the formulas $\{B_s,\hdots,B_{N-1}\}$ are true, amounts to proving the sequent $\seqar \exact{N}{k}(B_s,\hdots,B_{N-1})$.\todoavg[]{ either define EXACT explicitly as in \cite{Das-Del} or simply throw reference} Since we want to stay in a semi-positive setting we instead  prove the logically equivalent sequent $\thresh{N}{k} \seqar \thresh{N}{k+1}$.

% We would like to show that our decider construction satisfies the
% properties of  a decision and thus need a result about truth conditions.

In order to prove \cref{thm:imm-szel}, 
we first establish a series of intermediate lemmata, characterising evaluation at each node of the decider.
Each result will rely on the previous ones, working from the leaves of the $k$-decider upwards, cf.~\cref{decider}.

All proofs in this subsection are in the system $\elndt$, over extension variables $\dextaxs{\vec B} \BB$ (which also includes $\pos {\mathcal B}$ and $\thrextaxs{\vec B}\BB$). 
Recall that we fixed $i<N$, $k\in \mathbb Z$ and formulas $A,C$ in the previous subsection.

Intuitively,
at each node $\decext j c b $, the formulas $B_0, \dots, B_{j-1}$ have already been `evaluated', and the counter $c$ tells us that (at most) $c$ of them have returned true.
How the remainder of the program evaluates now depends on the number $l$ of true formulas among $\Bj j = B_{j}, \dots, B_{N-1} $.
If $l,c$ are too low for the counter to eventually reach $k$ then the $k$-decider will return erroneously (the orange $\orange 0$ leaves in \cref{decider}).
This intuition is formalised by the following result:

\begin{lemma}
\label{decext-char-low-count}
    If $l + c  < k$ there are polynomial-size proofs of $\gray{\thresh{\vec B^j}{l} ,} \decext j c b \seqar \gray{\thresh{\vec B^j}{l+1}}$, for $j\leq N$.
\end{lemma}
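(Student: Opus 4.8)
The plan is to prove the statement by downward induction on $j$ (equivalently, on $|\Bj j| = N-j$), working throughout in $\elndt$ over $\dextaxs{\vec B}{\BB}$ and freely using the truth conditions for positive decisions (\cref{truth-pos-decs}), in particular the derived rule $\lefrul{B^+}$ of \cref{rem:posdec-rules}, together with the threshold properties \cref{monotonicity-of-threshold-subscripts,thresh-truth}. Intuitively we are tracing the NBP of \cref{decider} from its leaves upwards. For the base case $j=N$, where $\Bj N = \epsilon$: if $c\neq k$ then $\decext N c b \extiff 0$ is an extension axiom, so after unfolding the sequent follows by $\lefrul 0$ and weakening; if $c=k$ then $l+c<k$ forces $l<0$, hence $l+1\le 0$, so $\seqar\thr\epsilon{l+1}$ is provable by \cref{monotonicity-of-threshold-subscripts}.\eqref{item:thr-0-true} and the sequent follows by weakening (regardless of whether $\decext N k b$ unfolds to $A$ or $C$).

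For the inductive step $j<N$, the extension axiom for $\decext j c b$ has the shape $\posdec{\decext{j+1}{c}{b'}}{B_j}{\decext{j+1}{c+1}{b''}}$, where $(b',b'')=(b,b)$ unless $j=i$ (in which case $b=0$ and $(b',b'')=(0,1)$). Unfolding this and applying $\lefrul{B_j^+}$ with antecedent context $\thr{\Bj j}{l}$ and succedent context $\thr{\Bj j}{l+1}$ leaves two premisses:
\begin{itemize}
    \item $\thr{\Bj j}{l}, \decext{j+1}{c}{b'} \seqar \thr{\Bj j}{l+1}$, derived by a $\cut$ on $B_j$. In the branch with $B_j$ on the right, chain $\thr{\Bj j}{l}\seqar\thr{\Bj{j+1}}{l},B_j$ (\cref{thresh-truth}.\eqref{item:thresh-truth-Bj-false}), the inductive hypothesis at $(j+1,l,c,b')$ (applicable since $l+c<k$), and $\thr{\Bj{j+1}}{l+1}\seqar\thr{\Bj j}{l+1}$ (\cref{thresh-truth}.\eqref{item:thresh-truth-j-dec}). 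In the branch with $B_j$ on the left, chain $\thr{\Bj j}{l}\seqar\thr{\Bj{j+1}}{l-1}$ (\cref{thresh-truth}.\eqref{item:thresh-truth-j-inc}), the inductive hypothesis at $(j+1,l-1,c,b')$ (applicable since $(l-1)+c<k$), and $B_j,\thr{\Bj{j+1}}{l}\seqar\thr{\Bj j}{l+1}$ (\cref{thresh-truth}.\eqref{item:thresh-truth-Bj-true}).
    \item $\thr{\Bj j}{l}, B_j, \decext{j+1}{c+1}{b''} \seqar \thr{\Bj j}{l+1}$, derived by chaining $\thr{\Bj j}{l}\seqar\thr{\Bj{j+1}}{l-1}$ (\cref{thresh-truth}.\eqref{item:thresh-truth-j-inc}), the inductive hypothesis at $(j+1,l-1,c+1,b'')$ (applicable since $(l-1)+(c+1)=l+c<k$), and $B_j,\thr{\Bj{j+1}}{l}\seqar\thr{\Bj j}{l+1}$ (\cref{thresh-truth}.\eqref{item:thresh-truth-Bj-true}).
\end{itemize}
Composing these two premisses via the $\lefrul{B_j^+}$ rule and cutting against the defining axiom of $\decext j c b$ yields the claimed sequent; the recursion naturally bottoms out, since once the threshold parameter drops below $0$ the succedent formula is provably $1$.

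For the polynomial size bound: although this recursion branches (three instances at level $j+1$, with threshold parameter $l$ or $l-1$ and count $c$ or $c+1$), the threshold parameters reachable at level $j'$ all lie within a window of width $O(N)$, so only $O(N^3)$ distinct instances $(\decext{j'}{c'}{b'},l')$ ever occur; since the local derivation at each node uses a constant number of its successors plus constantly many of the polynomial-size threshold proofs, sharing subproofs across all instances produces a single polynomial-size dag-like $\elndt$ proof from which every required sequent can be read off.

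I expect the main obstacle to be purely bookkeeping: tracking the off-by-one shifts of the threshold subscripts across the $\cut$ on $B_j$ and across the counter increment, and checking that the side condition $l'+c'<k$ survives every recursive call — which it does, essentially because decrementing $l$ cancels incrementing $c$. The only genuinely non-routine move is the $\cut$ on $B_j$ in the first premiss: it is needed because $\thr{\Bj j}{l}$ alone determines neither $\thr{\Bj{j+1}}{l}$ (which requires $B_j$ false) nor $\thr{\Bj{j+1}}{l-1}$ tightly enough to recover $\thr{\Bj j}{l+1}$ (which requires $B_j$ true).
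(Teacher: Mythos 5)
Your proof is correct and takes essentially the same route as the paper's: backwards induction on $j$, the base case split on whether $c=k$, and the inductive step unfolding the positive decision for $\decext j c b$, case-splitting on $B_j$ via a cut, and chaining the threshold truth conditions of \cref{thresh-truth} with the inductive hypothesis at parameters $(l,c)$, $(l-1,c)$ and $(l-1,c+1)$. The only differences are cosmetic: you apply the $\lefrul{B_j^+}$ unfolding before the cut on $B_j$ rather than after, and you spell out the threshold bookkeeping (and the dag-sharing size count) that the paper's displayed derivation leaves implicit.
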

\begin{proof}
    By backwards induction on $j\leq N$:
    \begin{itemize}
        \item For the base case, $j=N$, we have $\vec B^N = \epsilon$ so:
        \begin{itemize}
            \item if $c\neq k$ we have $\decext N c b \dseqar 0$ by $\dextaxs{\vec B}\BB $; 
            % and we are done by $\lefrul 0$ and weakenings;
            \item if $c=k$ then $l<0$
        so $\thresh{\epsilon} {l+1} \dseqar 1 $ by $\thrextaxs{\vec B}{\BB}$.
        % and we are done by $\rigrul 1 $ and weakenings. 
        \end{itemize} 
        In both cases we may conclude by an initial step $\lefrul 0 $ or $\rigrul 1$ and weakenings.
        \item For the inductive step we have the following derivation, conducting a sort of case analysis on $B_j$:
        \[
        \vlderivation{
        \vliin{\cut}{}{\thresh{\vec B^j} l , \decext j c b \seqar \thresh {\vec B^j} {l+1}}{
            \vliq{\dextaxs \B \BB}{}{\thresh{\vec B^j} l , \decext j c b \seqar \thresh {\vec B^j} {l+1}, B_j}{
            \vliq{\IH}{}{\thresh {\vec B^j} {l}, \decext {j+1}c {b} \seqar \thresh {\Bj j} {l+1} }{\vlhy{}}
            }
        }{
        \vliiq{\dextaxs {\B} \BB }{}{B_j, \thresh {\vec B^j} {l} , \decext j c b  \seqar \thresh {\vec B^j} {l+1}} {
            \vliq{\IH}{}{\thresh {\vec B^j} {l}, \decext {j+1}c {b} \seqar \thresh {\Bj j} {l+1} }{\vlhy{}}
        }{
            \vliq{\thrextaxs{\vec B}{\BB}}{}{\thresh {\vec B^j} {l}, B_j, \decext{j+1}{c+1}{b'} \seqar \thresh {\vec B^j} {l+1}}{
            \vliq{\IH}{}{\thresh {\vec B^{j+1}} {l-1}, \decext{j+1}{c+1}{b'} \seqar \thresh {\vec B^{j+1}} {l}}{\vlhy{}}
            }
        }
        }
        }
        \]
        for appropriate $b'$, where the steps marked $\IH$ are obtained by the inductive hypothesis, steps marked $\dextaxs{\vec B}\BB$ are obtained by positive truth conditions from \cref{truth-pos-decs} under the corresponding extension axioms from \cref{dec-ext-dfn}, and steps marked $\thrextaxs {\vec B}\BB $ are obtained by the truth conditions for thresholds from \cref{thresh-truth}. \qedhere
    \end{itemize}
\end{proof}

$\elndt$ proofs in the remainder of this subsection will have a similar structure to the one above, and we shall duly annotate them similarly without further clarification.

Continuing our informal intuition, if $l+c=k$ then the $k$-decider will return correctly (i.e.\ it will correctly decide $ B_i$, returning $A$ if false and $C$ if true).
If $j>i$ then $\decext j c b$ has already evaluated $B_i$ and the latter's truth value is stored as $b$.
Thus such nodes will return $A$ or $C$ depending on the value of $b$.
This intuition is formalised by the following result:

\begin{lemma}
\label{decext-char-after-bool-flag}
    If $l+c= k$ then, for $i<j\leq N$, there are polynomial-size proofs of:
        \begin{enumerate}
            \item\label{item:decext-implies-zerocase} $\gray{\thresh{\vec B^j}{l},} \decext j c 0  \seqar A \gray{,\thresh{\vec B^j}{l+1}}$
            \item\label{item:zerocase-implies-decext} $\gray{\thresh{\vec B^j}{l} ,}A\seqar\decext j c 0 \gray{,\thresh{\vec B^j}{l+1}}$
            \item\label{item:decext-implies-onecase} $\gray{\thresh{\vec B^j}{l} , }\decext j c 1 \seqar C \gray{, \thresh{\vec B^j}{l+1}}$
            \item\label{item:onecase-implies-decext} $\gray{\thresh{\vec B^j}{l} , }C\seqar \decext j c 1 \gray{, \thresh{\vec B^j}{l+1}}$
        \end{enumerate}
\end{lemma}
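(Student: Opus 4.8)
The plan is to prove all four items simultaneously by backwards induction on $j$, from $j=N$ down to $j=i+1$. Since $j>i$ throughout, the only extension axioms for $\decext j c b$ that ever come into play are the terminal ones at $j=N$ and the ``$j\neq i$'' axioms $\decext j c b \dseqar \posdec{\decext{j+1}c b}{B_j}{\decext{j+1}{c+1}b}$; in particular the flag $b$ is carried along unchanged, so items \eqref{item:decext-implies-zerocase}--\eqref{item:zerocase-implies-decext} (about $b=0$) and items \eqref{item:decext-implies-onecase}--\eqref{item:onecase-implies-decext} (about $b=1$) really form two independent inductions, structurally identical upon swapping $A$ with $C$ and $0$ with $1$. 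I will only discuss the $b=0$ items.

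For the base case $j=N$ we have $\vec B^N=\epsilon$, and everything follows from the terminal axioms of $\dextaxs{\vec B}{\BB}$ together with the constants $\thresh\epsilon m \dseqar 1$ (for $m\le 0$) and $\thresh\epsilon m \dseqar 0$ (for $m>0$), by a case split on whether $c=k$. If $c=k$ then $l=0$ and $\decext N k 0 \dseqar A$, $\thresh\epsilon 0 \dseqar 1$, $\thresh\epsilon 1 \dseqar 0$, so items \eqref{item:decext-implies-zerocase}, \eqref{item:zerocase-implies-decext} reduce to $1,A\seqar A,0$, immediate from $\id$ and weakenings. If $c\neq k$ then $\decext N c 0 \dseqar 0$, closing item \eqref{item:decext-implies-zerocase} by $\lefrul 0$; and $l=k-c\neq 0$, so for item \eqref{item:zerocase-implies-decext} either $l>0$ (use $\thresh\epsilon l \dseqar 0$ and $\lefrul 0$) or $l<0$ (use $\thresh\epsilon{l+1}\dseqar 1$ and $\rigrul 1$), in each case concluding by weakenings.

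For the inductive step, with $i<j<N$, the proofs will have the same overall shape as the one in the proof of \cref{decext-char-low-count}: perform a $\cut$ on $B_j$ to split into the ``$B_j$ false'' and ``$B_j$ true'' cases, and unfold $\decext j c 0$ by its extension axiom in each. In the ``$B_j$ false'' branch the count does not move, so we invoke the inductive hypothesis at $j+1$ with the same parameters $l,c$, handling the resulting positive decision $\posdec{\decext{j+1}c 0}{B_j}{\cdot}$ by the truth conditions of \cref{truth-pos-decs} (for item \eqref{item:decext-implies-zerocase}) or by $\rigrul{B^+}$ of \cref{rem:posdec-rules} (for item \eqref{item:zerocase-implies-decext}), and converting between $\thresh{\vec B^j}{\cdot}$ and $\thresh{\vec B^{j+1}}{\cdot}$ via \cref{thresh-truth}.\eqref{item:thresh-truth-Bj-false} and \eqref{item:thresh-truth-j-dec}. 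In the ``$B_j$ true'' branch we apply $\lefrul{B^+}$ (resp.\ $\rigrul{B^+}$), yielding two sub-branches; here the count drops by one, so the inductive hypothesis at $j+1$ is now called with parameters $(l-1,c+1)$ — which still sum to $k$ — and the relevant threshold conversions are \cref{thresh-truth}.\eqref{item:thresh-truth-j-inc} and \eqref{item:thresh-truth-Bj-true}.

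The one genuinely new ingredient — and the main obstacle — is the remaining sub-branch, present for items \eqref{item:decext-implies-zerocase}, \eqref{item:decext-implies-onecase}: the first premise of $\lefrul{B^+}$ leaves us at $\decext{j+1}c 0$ even though $B_j$ is true, i.e.\ the monotone ``slack'' copy that fails to increment the counter. There $(l-1)+c=k-1<k$, so this is precisely the low-count regime of \cref{decext-char-low-count}, which yields $\thresh{\vec B^{j+1}}{l-1},\decext{j+1}c 0\seqar\thresh{\vec B^{j+1}}{l}$; combining this with the conversions above makes the target sequent derivable. (For items \eqref{item:zerocase-implies-decext}, \eqref{item:onecase-implies-decext} the corresponding premise of $\rigrul{B^+}$ has $B_j$ on both sides and closes by $\id$, so \cref{decext-char-low-count} is not needed there.) Apart from this, the work is bookkeeping, the only delicate point being the $\pm1$ index shifts between thresholds over $\vec B^j$ and over $\vec B^{j+1}$ as one passes a decision node. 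Polynomial size is clear: the induction runs for at most $N$ steps, each contributing a constant number of inferences together with fixed invocations of \cref{truth-pos-decs}, \cref{monotonicity-of-threshold-subscripts}, \cref{thresh-truth} and \cref{decext-char-low-count}, all of which have polynomial-size proofs.
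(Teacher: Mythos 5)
Your proposal is correct and follows essentially the same route as the paper's proof: backwards induction on $j$ with a cut on $B_j$ in the inductive step, the positive-decision truth conditions of \cref{truth-pos-decs} and the threshold conversions of \cref{thresh-truth}, calls to the inductive hypothesis at $(l,c)$ and $(l-1,c+1)$, and — exactly as in the paper — an appeal to \cref{decext-char-low-count} only in the ``slack'' sub-branch of items \eqref{item:decext-implies-zerocase}/\eqref{item:decext-implies-onecase}, with items \eqref{item:zerocase-implies-decext}/\eqref{item:onecase-implies-decext} needing no such appeal. The only deviations are cosmetic: your base case splits on $c=k$ versus $c\neq k$ rather than on the sign of $l$, and you phrase some steps via the derived rules of \cref{rem:posdec-rules} where the paper cuts directly against \cref{truth-pos-decs}.
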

\begin{proof}
    We prove only \cref{item:decext-implies-zerocase,item:zerocase-implies-decext}, with \cref{item:decext-implies-onecase,item:onecase-implies-decext} respectively proved similarly.
    We proceed by backwards induction on $j$.

     For the base case, $j=N$, note that $\vec B^N = \epsilon$. We have:
    \begin{enumerate}
        \item By the $\dextaxs{\vec B} \BB$ axioms, we have either $\decext N c 0 \dseqar 0$ or $\decext N c 0 \dseqar A$, so we can conclude by $\id$ or $\lefrul 0$ and weakenings.
        \item 
        \begin{itemize}
            \item If $l<0$ then we have $\thresh \epsilon {l+1} \dseqar 1$ from $\thrextaxs \B\BB$, and so we conclude by $\rigrul 1 $ and weakenings.
            \item If $l>0$ then we have $\thresh \epsilon l \dseqar 0$ from $\thrextaxs \B\BB$, and so we conclude by $\lefrul 0$ and weakenings.
            \item If $l=0$ then $k=c $ by assumption, so from $\dextaxs \B\BB$ we have $\decext N k 0 \dseqar A$, whence we conclude by $\id$ and weakening steps.
        \end{itemize}
    \end{enumerate}

\newcommand{\storage}{\cref{decext-char-low-count}}
    For the inductive steps we have the following derivations:
        % \begin{enumerate}
            % \item 
            \[
            \eqref{item:decext-implies-zerocase} : \ 
            \vlderivation{
            \vliin{\cut}{}{\thresh {\Bj j} l , \decext j c 0 \seqar A, \thresh {\Bj j}{l+1}}{
                \vliq{\dextaxs \B \BB }{}{\thresh {\Bj j} l , \decext j c 0 \seqar A, \thresh {\Bj j}{l+1}, B_j}{
                \vliq{\thrextaxs \B \BB}{}{\thresh {\Bj j} l , \decext {j+1} c 0 \seqar A, \thresh {\Bj j}{l+1}, B_j}{
                \vliq{\IH}{}{\thresh {\Bj {j+1}} l , \decext {j+1} c 0 \seqar A, \thresh {\Bj {j+1}}{l+1}}{\vlhy{}}
                }
                }
            }{
                \vliiq{\dextaxs \B\BB}{}{B_j, \thresh {\Bj j} l , \decext j c 0 \seqar A, \thresh {\Bj j}{l+1}}{
                    \vliq{\thrextaxs \B \BB}{}{B_j, \thresh {\Bj j} l , \decext {j+1} c 0 \seqar A, \thresh {\Bj j}{l+1}}{
                    \vliq{}{}{\thresh {\Bj {j+1}} {l-1} , \decext {j+1} c 0 \seqar A, \thresh {\Bj {j+1}}{l}}{\vlhy{\text{\storage}}}
                    }
                }{
                    \vliq{\thrextaxs \B \BB}{}{B_j, \thresh {\Bj j} l , \decext {j+1} {c+1} 0 \seqar A, \thresh {\Bj j}{l+1}}{
                    \vliq{\IH}{}{\thresh {\Bj {j+1}} {l-1} , \decext {j+1} {c+1} 0 \seqar A, \thresh {\Bj {j+1}}{l}}{\vlhy{}}
                    }
                }
            }
            }
            \]
            % \item 
            \[
            \eqref{item:zerocase-implies-decext} : \quad 
            \vlderivation{
            \vliin{\cut}{}{\thresh{\Bj j }l , A \seqar \decext j c 0 , \thresh {\Bj j}{l+1}}{
                \vliq{\dextaxs \B\BB}{}{\thresh{\Bj j }l , A \seqar \decext j c 0 , \thresh{\Bj j}{l+1}, B_j}{
                \vliq{\thrextaxs \B\BB}{}{\thresh{\Bj j }l , A \seqar \decext {j+1} c 0 , \thresh{\Bj j}{l+1}, B_j}{
                \vliq{\IH}{}{\thresh{\Bj {j+1} }l , A \seqar \decext {j+1} c 0 , \thresh {\Bj {j+1}}{l+1}}{\vlhy{}}
                }
                }
            }{
                \vliq{\dextaxs \B\BB}{}{B_j, \thresh{\Bj j }l , A \seqar \decext j c 0 , \thresh {\Bj j}{l+1}}{
                \vliq{\thrextaxs \B\BB }{}{B_j, \thresh{\Bj {j} }{l} , A \seqar \decext {j+1} {c+1} 0 , \thresh {\Bj {j}}{l+1}}{
                \vliq{\IH}{}{\thresh{\Bj {j+1} }{l-1} , A \seqar \decext {j+1} {c+1} 0 , \thresh {\Bj {j+1}}{l}}{\vlhy{}}
                }
                }
            }
            }
            \vspace{-1.35\baselineskip} %FIXED Moved QED to end of equation
            \]
        % \end{enumerate}

\end{proof}
\noindent %FIXED indent
Finally we consider the case when $l+c = k $ and $j\leq i$. 
The intuition here is that $B_i$ has not yet been evaluated, so the $k$-decider behaves like a bona fide decision on $B_i$, returning $A$ if false and $C$ if true.
This intuition is formalised by the following result:

\begin{lemma}
\label{decext-char-before-bool-flag}
    If $l+c = k$ then, for $j\leq i$, there are polynomial-size proofs of:
        \begin{enumerate}
            \item\label{decext-implies-dec-zerocase} $\gray{\thresh{\vec B^j}{l},} \decext jc0 \seqar A, B_i \gray{, \thresh{\vec B^j}{l+1}}$
            \item\label{dec-implies-decext-zerocase} $\gray{\thresh{\vec B^j}{l},} A \seqar B_i, \decext j c 0 \gray{, \thresh{\vec B^j}{l+1}}$
            \item\label{decext-implies-dec-onecase} $\gray{\thresh{\vec B^j}{l},} \decext j c 0 , B_i \seqar C\gray{, \thresh{\vec B^j}{l+1}}$
            \item\label{dec-implies-decext-onecase} $\gray{\thresh{\vec B^j}{l} ,} B_i , C \seqar \decext j c 0 \gray{, \thresh{\vec B^j}{l+1}}$
        \end{enumerate}
\end{lemma}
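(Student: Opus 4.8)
The plan is to prove \cref{decext-char-before-bool-flag} by backwards induction on $j$, from $j=i$ down to $j=0$, following exactly the template of the proofs of \cref{decext-char-low-count,decext-char-after-bool-flag}. As there, the symmetry between $A$ and $C$ (and between the $0$- and $1$-edges at $B_i$) lets us treat only, say, \cref{decext-implies-dec-zerocase} and \cref{dec-implies-decext-onecase}, the remaining two being dual. The invariant governing the argument is that a node $\decext j c b$ computes meaningfully precisely when the number $l$ of true formulas among $\Bj j$ satisfies $l+c=k$, and returns $0$ (erroneously) when $l+c<k$; so whenever the argument branches on $B_j$, the correct edge is the one along which decrementing $l$ to $l-1$ is matched by incrementing the counter from $c$ to $c+1$, whereas the erroneous edge leaves the counter at $c$ and hence yields $(l-1)+c=k-1<k$.

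For the base case $j=i$, I would unfold $\decext i c 0$ by its $\dextaxs \B \BB$ axiom into $\posdec{\decext {i+1} c 0}{B_i}{\decext {i+1} {c+1} 1}$; this is the unique point at which the Boolean flag flips (along the $1$-edge), so the nodes $\decext {i+1} c 0$ and $\decext {i+1} {c+1} 1$ are already governed by \cref{decext-char-after-bool-flag} instantiated at $j=i+1$. For \cref{decext-implies-dec-zerocase} the positive-decision truth conditions of \cref{truth-pos-decs} give $\decext i c 0 \seqar \decext {i+1} c 0, B_i$, and since $B_i$ already sits on the right of the goal a single cut reduces us to \cref{decext-char-after-bool-flag}.\eqref{item:decext-implies-zerocase} in the form $\thresh{\Bj{i+1}}{l}, \decext {i+1} c 0 \seqar A, \thresh{\Bj{i+1}}{l+1}$, modulo converting between $\Bj i$- and $\Bj {i+1}$-thresholds by \cref{thresh-truth}.\eqref{item:thresh-truth-Bj-false} and \cref{thresh-truth}.\eqref{item:thresh-truth-j-dec}. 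For \cref{dec-implies-decext-onecase} I would instead use $B_i, \decext {i+1} {c+1} 1 \seqar \decext i c 0$ from \cref{truth-pos-decs} and reduce to \cref{decext-char-after-bool-flag}.\eqref{item:onecase-implies-decext} with counter $c+1$ and index $l-1$ (valid since $(l-1)+(c+1)=k$), this time using \cref{thresh-truth}.\eqref{item:thresh-truth-j-inc} and \cref{thresh-truth}.\eqref{item:thresh-truth-Bj-true}, where the $B_i$ on the left of the goal supplies the hypothesis needed.

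For the inductive step $j<i$, I would unfold $\decext j c 0$ by its axiom into $\posdec{\decext {j+1} c 0}{B_j}{\decext {j+1} {c+1} 0}$ (the flag staying $0$) and argue by a cut on $B_j$, exactly as in the inductive step of \cref{decext-char-after-bool-flag}: in the $B_j$-false branch one recurses into $\decext {j+1} c 0$ with $l$ true among $\Bj {j+1}$ and counter $c$ (so $l+c=k$, apply the inductive hypothesis); in the $B_j$-true, counter-incremented branch one recurses into $\decext {j+1} {c+1} 0$ with $l-1$ true and counter $c+1$ (again $=k$, apply the inductive hypothesis); and in the $B_j$-true, counter-unchanged branch one recurses into $\decext {j+1} c 0$ with $l-1$ true and counter $c$, where $(l-1)+c=k-1<k$, so this erroneous branch is dispatched by \cref{decext-char-low-count} at $j+1$. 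The three branches are glued using \cref{truth-pos-decs} for the positive decision and \cref{monotonicity-of-threshold-subscripts,thresh-truth} to reconcile the gray threshold formulas, with derivations annotated just as in the proof of \cref{decext-char-after-bool-flag}.

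I expect the only real obstacle to be the bookkeeping: tracking the $l+c$ invariant along each of the three branches of the $B_j$-case analysis, keeping the gray threshold sub- and superscripts in step, and selecting the correct items of \cref{thresh-truth} (and, on the erroneous branch, \cref{decext-char-low-count}) to bridge $\Bj j$- and $\Bj {j+1}$-thresholds. There is no new conceptual ingredient beyond \cref{decext-char-after-bool-flag}; the one genuinely new feature is the flag-flip at $j=i$, which is precisely why the base case appeals to \cref{decext-char-after-bool-flag} rather than to a bare unfolding.
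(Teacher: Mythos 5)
Your proposal follows essentially the same route as the paper's proof: backwards induction on $j\leq i$, base case $j=i$ reduced to \cref{decext-char-after-bool-flag} by unfolding the unique flag-flipping positive decision, and an inductive step by a cut on $B_j$ whose counter-matching branches are closed by the inductive hypothesis and whose counter-unchanged (erroneous) branch is dispatched by \cref{decext-char-low-count}, with the threshold bookkeeping via \cref{thresh-truth} exactly as in the paper; your treatments of \cref{decext-implies-dec-zerocase,dec-implies-decext-onecase} match the paper's derivations item for item. The one caveat concerns your claim that the remaining two items are ``dual'': in the base case, \cref{decext-implies-dec-onecase} is not a mirror image of \cref{dec-implies-decext-onecase}, since with $\decext i c 0$ on the left and $B_i$ true the left positive-decision rule of \cref{rem:posdec-rules} forces you to also consider the non-incrementing disjunct $\decext {i+1} c 0$, where $(l-1)+c<k$, so this branch must be closed by \cref{decext-char-low-count} (precisely the move you already make in your inductive step) rather than by a one-line dualisation of \cref{dec-implies-decext-onecase}; the paper handles this with a two-premise unfolding in its base case for \cref{decext-implies-dec-onecase}.
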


\begin{proof}
    We proceed by backwards induction on $j\leq i$.

\newcommand{\storage}{\cref{decext-char-after-bool-flag}}
\renewcommand{\storagetwo}{\cref{decext-char-low-count}}
    For the base case, $j=i$, \cref{decext-implies-dec-zerocase,dec-implies-decext-zerocase} are proved similarly:
    \[
    \eqref{decext-implies-dec-zerocase} : \quad 
    \vlderivation{
    \vliq{\dextaxs \B\BB}{}{\thresh {\Bj i}l, \decext i c 0 \seqar A, B_i, \thresh{\Bj i }{l+1}}{
    \vliq{\thrextaxs\B\BB}{}{\thresh {\Bj i}l, \decext {i+1} c 0 \seqar A, B_i, \thresh{\Bj i }{l+1}}{
    \vliq{}{}{\thresh {\Bj {i+1}}l, \decext {i+1} c 0 \seqar A, \thresh{\Bj {i+1} }{l+1}}{
    \vlhy{\text{\storage}}
    }
    }
    }
    }
    \qquad
    \eqref{dec-implies-decext-zerocase} : \quad 
    \vlderivation{
    \vliq{\dextaxs \B \BB}{}{\thresh {\Bj i}l, A \seqar   B_i, \decext i c 0, \thresh{\Bj i }{l+1}}{
    \vliq{\thrextaxs\B\BB}{}{\thresh {\Bj i}l, A \seqar  B_i,\decext {i+1} c 0,  \thresh{\Bj i }{l+1}}{
    \vliq{}{}{\thresh {\Bj {i+1}}l,  A \seqar \decext {i+1} c 0 ,  \thresh{\Bj {i+1} }{l+1}}{
    \vlhy{\text{\storage}}
    }
    }
    }
    }
    \]
    \cref{decext-implies-dec-onecase,dec-implies-decext-onecase} are proved dually, accounting for the choice when $B_i$ is true:
    \[
    \eqref{decext-implies-dec-onecase} : \quad 
    \vlderivation{
    \vliiq{\dextaxs \B\BB}{}{\thresh {\Bj i } l , \decext i c 0 , B_i \seqar C, \thresh {\Bj i }{l+1}}{
        \vliq{\thrextaxs \B\BB}{}{\thresh {\Bj i } l , \decext {i+1} c 0 , B_i \seqar C, \thresh {\Bj i }{l+1}}{
        \vliq{\wk}{}{\thresh {\Bj {i+1} } {l-1} , \decext {i+1} c 0  \seqar C, \thresh {\Bj {i+1} }{l}}{
        \vliq{}{}{\thresh {\Bj {i+1} } {l-1} , \decext {i+1} c 0  \seqar \thresh {\Bj {i+1} }{l}}{\vlhy{\text{\storagetwo}}}
        }
        }
    }{
        \vliq{\thrextaxs \B \BB}{}{\thresh {\Bj i } l , \decext {i+1} {c+1} 1 , B_i \seqar C, \thresh {\Bj i }{l+1}}{
        \vliq{}{}{\thresh {\Bj {i+1} } {l-1} , \decext {i+1} {c+1} 1 , B_i \seqar C, \thresh {\Bj {i+1} }{l}}{\vlhy{\text{\storage}}}
        }
    }
    }
    \]
    \[
    \eqref{dec-implies-decext-onecase} : \quad 
    \vlderivation{
    \vliq{\dextaxs \B\BB}{}{\thresh {\Bj i } l , B_i , C \seqar \decext i c 0 , \thresh{\Bj i }{l+1}}{
    \vliq{\thrextaxs \B\BB}{}{\thresh {\Bj {i+1} } {l-1} , C \seqar \decext {i+1} {c+1} 1 , \thresh{\Bj {i+1} }{l}}{\vlhy{\text{\storage}}}
    }
    }
    \]

    \noindent For the inductive step we have the following derivations:
%    \begin{enumerate}
        % \item
        {
        \fontdimen6\textfont2=0.45em %FIXED Changed spacing to fit margins
        \[
        \eqref{decext-implies-dec-zerocase} : \  
        \vlderivation{
        \vliin{\cut}{}{\thresh {\Bj j } l\!\!, \decext j c 0 \!\!\seqar\!\! A,\! B_i, \thresh{\Bj j }{l+1}}{
            \vliq{\dextaxs \B\BB}{}{\thresh {\Bj j } l\!\!, \decext j c 0 \!\!\seqar\!\! A,\! B_i, \thresh{\Bj j }{l+1}, B_j}{
            \vliq{\thrextaxs \B\BB}{}{\thresh {\Bj j } l\!\!, \decext {j+1} c 0 \!\!\seqar\!\! A,\! B_i, \thresh{\Bj j }{l+1}, B_j}{
            \vliq{\IH}{}{\thresh {\Bj {j+1} } l\!\!, \decext {j+1} c 0 \!\!\seqar\!\! A,\! B_i, \thresh{\Bj {j+1} }{l+1}}{\vlhy{}}
            }
            }
        }{
            \vliiq{\dextaxs \B\BB}{}{B_j, \thresh {\Bj j } l\!\!, \decext j c 0 \!\!\seqar\!\! A,\! B_i, \thresh{\Bj j }{l+1}}{
                \vliq{\thrextaxs \B\BB}{}{B_j, \thresh {\Bj j } l\!\!, \decext {j+1} c 0 \!\!\seqar\!\! A,\! B_i, \thresh{\Bj j }{l+1}}{
                \vliq{\wk}{}{ \thresh {\Bj {j+1} } {l-1}\!\!, \decext {j+1} c 0 \!\!\seqar\!\! A,\! B_i, \thresh{\Bj {j+1} }{l}}{
                \vliq{}{}{\thresh {\Bj {j+1} } {l-1}\!\!, \decext {j+1} c 0 \!\!\seqar\!\! \thresh{\Bj {j+1} }{l}}{\vlhy{\text{\storagetwo}}}
                }
                }
            }{
                \vliq{\thrextaxs \B \BB }{}{B_j, \thresh {\Bj j } l\!\!, \decext {j+1} {c+1} 0 \!\!\seqar\!\! A,\! B_i, \thresh{\Bj j }{l+1}}{
                \vliq{\IH}{}{ \thresh {\Bj {j+1} } {l-1}\!\!, \decext {j+1} {c+1} 0 \!\!\seqar\!\! A,\! B_i, \thresh{\Bj {j+1} }{l}}{\vlhy{}}
                }
            }
        }
        }
        \]
        }

        % \item 
        \[
\eqref{dec-implies-decext-zerocase} : \         
\vlderivation{
        \vliin{\cut}{}{\thresh{\Bj j } l , A \seqar B_i , \decext j c 0 , \thresh {\Bj j }{l+1}}{
            \vliq{\dextaxs \B\BB}{}{\thresh{\Bj j } l , A \seqar B_i , \decext j c 0 , \thresh {\Bj j }{l+1}, B_j}{
            \vliq{\thrextaxs \B\BB}{}{\thresh{\Bj j } l , A \seqar B_i , \decext {j+1} c 0 , \thresh {\Bj j }{l+1}, B_j}{
            \vliq{\IH}{}{\thresh{\Bj {j+1} } l , A \seqar B_i , \decext {j+1} c 0 , \thresh {\Bj {j+1} }{l+1}}{\vlhy{}}
            }
            }
        }{
            \vliq{\dextaxs \B\BB}{}{B_j, \thresh{\Bj j } l , A \seqar B_i , \decext j c 0 , \thresh {\Bj j }{l+1}}{
            \vliq{\thrextaxs \B \BB}{}{B_j, \thresh{\Bj {j+1} } {l-1} , A \seqar B_i , \decext {j+1} {c+1} 0 , \thresh {\Bj {j+1} }{l}}{
            \vliq{\IH}{}{\thresh{\Bj j } l , A \seqar B_i , \decext {j+1} {c+1} 0 , \thresh {\Bj j }{l+1}}{\vlhy{}}
            }
            }
        }
        }
        \]
%    \end{enumerate}
    The derivations for \cref{dec-implies-decext-onecase,decext-implies-dec-onecase} are similar.
\end{proof}

\subsection{Putting it all together}
From here the main result of this section is an immediate consequence of our final lemma:

\begin{proof}
    [Proof of \cref{thm:imm-szel}]
    As $\dec A {B_i^k}C := \decext  0 0 0 $, the result is now just a special case of \cref{decext-char-before-bool-flag}, setting $c=0$, $l=k$ and $j=0$. 
\end{proof}

\section{Strategies to proofs: the non-deterministic case}
\label{sec:strats-to-proofs}

Our final main result is the converse of \Cref{games-simulate-proofs}, in the non-deterministic case:
\begin{theorem}
\label{thm:strat-to-proofs}
If $\bl$ has a size $N$ proof of an \exndt\ sequent $\Gamma \seqar \Delta$ over $\E$ then $\elndt$ has a $\poly (N)$ size proof of $\Gamma \seqar \Delta$ over some $\E' \supseteq \E$.
\end{theorem}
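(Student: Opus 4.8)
The plan is to follow the chain of translations displayed in \Cref{structure-of-paper}: from a $\bl$ strategy we pass to a strategy in a De Morgan game $\blus$, then to a proof in $\dmboolelndt$ (positive Boolean combinations over \emph{signed} $\exndt$-formulas), then --- invoking the deciders of \Cref{thm:imm-szel} --- to a proof in $\posboolelndt$ (positive Boolean combinations over $\exndt$-formulas), and finally to an $\elndt$ proof by eliminating the positive Boolean combinations. Each step introduces new extension axioms --- dual literals, thresholds, deciders, positive decisions --- which accumulate into the final $\E'\supseteq\E$, and each is feasibly constructive and polynomial, so that the composite is a genuine polynomial simulation.

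First, I would convert the given $\bl$ strategy into a $\blus$ strategy, where $\blus$ is the variant of $\bl$ whose queries are positive ($\land,\lor$) Boolean combinations of \emph{signed} $\exndt$-formulas, i.e.\ of $\exndt$-formulas $A$ together with fresh literals $\dual A$ interpreted by $\dual A\extiff\lnot A$. The translation replaces each query $Q$ by its De Morgan normal form, driving negations inward through the Boolean connectives to the $\exndt$-leaves and cancelling double negations; since this rewriting is depth-preserving the round complexity stays $O(\log N)$, and one checks that simple contradictions translate --- Boolean-connective contradictions become their De Morgan variants, while decision, extension and similarity contradictions are essentially unchanged. The point of this step is to confine negation to literals \emph{without nesting}, which is exactly what lets a single family of (partial) deciders handle all the negations in one go.

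Next, I would translate a $\blus$ strategy from $\{\Gamma\mapsto 1,\Delta\mapsto 0\}$ into a $\dmboolelndt$ proof of $\Gamma\seqar\Delta$, by exactly the induction-on-strategies argument used for \Cref{thm:eldt-psim-db}: simulate each query by a $\cut$, maintaining the invariant that queries answered $1$ (resp.\ $0$) sit on the left (resp.\ right), and discharge the base cases --- Boolean-connective contradictions by the $\lk^+$ rules \eqref{eq:pos-bool-rules}, decision and extension contradictions by short proofs, similarity contradictions by \Cref{simulation-has-polysize-proofs} --- lifted routinely to signed formulas. The heart of the argument is then the elimination of signs, the bold cyan arrow of \Cref{structure-of-paper}. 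Let $\vec B=B_0,\dots,B_{N-1}$ list the $\exndt$-formulas occurring signed in this proof. For each $k\in\{0,\dots,N\}$, replace every literal $\dual{B_i}$ throughout the proof by the $k$-decider $\dec 1 {B_i^k} 0$ of \Cref{thm:imm-szel} --- a genuine (positive-combination) NBP that computes $\lnot B_i$ whenever exactly $k$ of $\vec B$ are true. The relativised truth conditions \eqref{eq:decider-truth-conds-relativised} (with $A=1$, $C=0$) are precisely the truth conditions for a decision $\dec 1 {B_i} 0$ (that is, $\lnot B_i$) modulo the `gray' hypotheses $\thr{\vec B}{k}$ and $\thr{\vec B}{k+1}$, so each inference step involving a sign goes through in $\posboolelndt$ with $\thr{\vec B}{k}$ added on the left and $\thr{\vec B}{k+1}$ on the right, yielding a $\posboolelndt$ proof of $\thr{\vec B}{k},\Gamma\seqar\Delta,\thr{\vec B}{k+1}$. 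Telescoping these $N+1$ proofs by cutting on the threshold formulas --- using that $\seqar\thr{\vec B}{0}$ and $\thr{\vec B}{N+1}\seqar$ are derivable by \Cref{monotonicity-of-threshold-subscripts} --- collapses to a $\posboolelndt$ proof of $\Gamma\seqar\Delta$, of total size $(N+1)\cdot\poly(N)=\poly(N)$.

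Finally, I would simulate $\posboolelndt$ by $\elndt$: $\lor$ is native, and a conjunction of $\exndt$-formulas is a polynomial-size positive NBP --- obtained (after reassociating to a right comb) via iterated positive decisions $\posdec 0 P Q$ as in \Cref{pos-decision-programs}, whose defining equivalences have polynomial-size $\elndt$ proofs by \Cref{truth-pos-decs}; this is the positive non-deterministic analogue of \Cref{eldt-psim-booleldt} and is implicit in \cite{Das-Del,DasDel25:pos-bps-journal}. Composing the four translations and tracking the polynomial sizes throughout gives the claimed $\elndt$ proof of $\Gamma\seqar\Delta$ over $\E'$. I expect the main obstacle to be the third step: verifying that the relativised decider truth conditions \eqref{eq:decider-truth-conds-relativised} really do suffice to push the entire signed proof through for each fixed $k$, and arranging the telescoping over $k$ so that the total proof stays polynomial.
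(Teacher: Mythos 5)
Your route through \Cref{structure-of-paper} is, for its second, third and fourth stages, essentially the paper's own argument: the cut-per-query induction, the substitution of the deciders $\dec 1{B_i^k}0$ of \Cref{thm:imm-szel} for the negated/dual literals, the threshold-relativised sequents $\thr{\vec B}{k},\Gamma\seqar\Delta,\thr{\vec B}{k+1}$ telescoped via \Cref{monotonicity-of-threshold-subscripts}, and the final elimination of $\land$ by positive decisions all match \Cref{dm-strats-to-posbool-elndt,strategy to elndt(land),posboolelndt-to-elndt} (your extra intermediate system of signed formulas is only a harmless refactoring of where the decider substitution happens). The genuine gap is in your first stage. You assert that replacing each query $Q$ by $\dm Q$ is ``depth-preserving'' and that ``simple contradictions translate''. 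This fails precisely for the negation contradictions: if the original $\bl$ strategy wins at a leaf via $\{Q\mapsto b,\lnot Q\mapsto b\}$ with $Q$ a \emph{compound} query, the translated leaf carries $\{\dm Q\mapsto b,\dm{(\lnot Q)}\mapsto b\}$, and $\dm Q$, $\dm{(\lnot Q)}$ are two syntactically unrelated positive combinations of literals (De Morgan duals of one another), so this is \emph{not} a simple contradiction of the De Morgan game --- its negation contradictions live only at the level of $\exndt$-literals. The same problem arises whenever the principal formula of a Boolean-connective contradiction is a negated compound query.

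Adversary is under no obligation to answer $\dm Q$ and $\dm{(\lnot Q)}$ consistently with the duality, so Prover must \emph{earn} the contradiction by further play, and doing so naively (descending the whole query) would cost depth linear in $|Q|$, destroying the logarithmic round bound and, more importantly for your purposes, requiring a separate argument that the patched object is still a winning strategy of polynomial size. This is exactly what the paper's \Cref{replacing-queries} (a Leibniz property for substituting queries), \Cref{spiral} (Spira's lemma) and \Cref{duality-strategy} supply: a divide-and-conquer strategy winning from $\{\dm Q\mapsto b,\dm{(\lnot Q)}\mapsto b\}$ in $O(\log(\exndtsize Q))$ rounds, which is then grafted onto the offending leaves in \Cref{game simulation}. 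Without some such argument your first translation does not produce a legitimate $\blus$ strategy, and the subsequent stages have nothing to operate on; with it, your proof goes through as in the paper.
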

\begin{corollary}
    $\elndt$ polynomially simulates $\bl$, over \ndt\ sequents.
\end{corollary}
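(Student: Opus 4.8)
The plan is to follow the template of the deterministic case (\cref{thm:eldt-psim-db}), but interpolated through several intermediate systems, the point being that a negated NBP cannot be realised as an NBP directly: instead we replace each negated \exndt-formula by a \emph{decider} from \cref{sec: Imm-szel}, which works relative to a fixed number of true inputs. Concretely I would factor the translation as: (i) $\bl$ strategy $\rightsquigarrow$ $\blus$ strategy, where $\blus$ is the ``De Morgan'' game whose queries are \emph{positive} Boolean combinations of \exndt-formulas $A$ and negated \exndt-formulas $\cnot A$; (ii) $\blus$ strategy $\rightsquigarrow$ $\dmboolelndt$ proof, where $\dmboolelndt$ is the positive sequent calculus over \exndt-formulas and their negations, by the cut-per-query argument in the proof of \cref{thm:eldt-psim-db}; (iii) $\dmboolelndt$ proof $\rightsquigarrow$ $\posboolelndt$ proof, eliminating negation via \cref{thm:imm-szel}; and (iv) $\posboolelndt$ proof $\rightsquigarrow \elndt$ proof, eliminating the (positive) Boolean combinations using the (positive) branching program constructions of \cite{Das-Del,DasDel25:pos-bps-journal}, analogously to \cref{eldt-psim-booleldt}. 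Steps (i), (ii) and (iv) are essentially adaptations of known arguments; step (iii) is the crux.

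For step (i) I would push negations down to the \exndt-atoms by the De Morgan laws at the level of queries, replacing each query $Q$ by its normal form $\dm Q$ (with $|\dm Q| = O(|Q|)$). The only wrinkle is that a Boolean-connective simple contradiction of $\bl$ in which $\cnot$ is applied to a \emph{compound} query is not a simple contradiction of $\blus$, so it must be re-derived by a short auxiliary substrategy; this costs only $\poly(|Q|)$ extra nodes per affected leaf, so the $\blus$ strategy still has polynomial size. For step (ii), one simulates each query of the $\blus$ strategy by a cut, keeping queries answered $1$ on the left and those answered $0$ on the right, exactly as in the proof of \cref{thm:eldt-psim-db}; the base cases (Boolean, decision, extension and similarity contradictions) have short proofs as before, with the one new family $\{A \mapsto b, \cnot A \mapsto b\}$ being precisely the negation axioms $A, \cnot A \seqar$ and $\seqar A, \cnot A$, which are primitive in $\dmboolelndt$.

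The heart of the argument is step (iii). Let $\vec B = B_0, \dots, B_{M-1}$ (with $M \leq N$) list the \exndt-formulas over $\E$ that occur as atoms in the $\dmboolelndt$ proof $\tau$ of $\Gamma \seqar \Delta$ produced by step (ii), and apply the construction of \cref{sec:decider-construction} with $A := 1$ and $C := 0$ to obtain, over $\dextaxs \B \BB$, the $k$-deciders $\dec 1{B_i^k}0$ for all $i < M$, $k \in \{0, \dots, M\}$. For each fixed $k$, build $\tau_k$ from $\tau$ by the substitution $\cnot B_i \rightsquigarrow \dec 1{B_i^k}0$ together with the insertion of $\thr\B k$ into every antecedent and $\thr\B{k+1}$ into every succedent. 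This is a genuine $\posboolelndt$ derivation of $\thr\B k, \Gamma \seqar \Delta, \thr\B{k+1}$: every inference of $\tau$ other than a negation axiom survives the insertion of these fixed side formulas, and the substituted negation axioms are exactly the sequents supplied by \cref{thm:imm-szel} (with $A = 1$, $C = 0$, so that $\dec 1{B_i^k}0$ is the partial negation of $B_i$, correct when exactly $k$ of $\vec B$ are true). Each $\tau_k$ has size $\poly(N)$, and there are $M + 1 = O(N)$ of them over $\poly(N)$-many new extension axioms. Finally, chaining $\tau_0, \dots, \tau_M$ by cuts on the intermediate thresholds $\thr\B 1, \dots, \thr\B M$ gives a $\posboolelndt$ proof of $\thr\B 0, \Gamma \seqar \Delta, \thr\B{M+1}$; cutting this against $\seqar \thr\B 0$ and $\thr\B{M+1} \seqar$, both from \cref{monotonicity-of-threshold-subscripts} (since $0 \leq 0$ and $M + 1 > |\vec B|$), yields $\Gamma \seqar \Delta$ in $\posboolelndt$, whence step (iv) completes the proof over some $\E' \supseteq \E$.

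The main obstacle is step (iii), although its genuine mathematical content --- the inductive counting argument of Immerman--\szel\ --- is already encapsulated in \cref{thm:imm-szel}. What is delicate here is the bookkeeping: checking that the relativised substitution leaves every non-negation inference of $\tau$ intact (so that each $\tau_k$ really is a proof), that the $O(N)$-fold duplication over $k$ together with the $O(N^2)$-many decider extension variables stays polynomial, and that the telescoping is correctly underwritten by the threshold monotonicity lemmas of \cref{monotonicity-of-threshold-subscripts}. A lesser nuisance is step (i), where the Boolean-connective contradictions of $\bl$ that become non-primitive in $\blus$ must be re-derived by short auxiliary strategies.
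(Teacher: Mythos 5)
Your proposal is correct and follows essentially the same route as the paper: normalise the strategy to De Morgan form (handling the no-longer-simple negation leaves by short auxiliary strategies), then for each $k$ replace every negated atom $\lnot B_i$ by the partial negation $\dec 1{B_i^k}0$ while adding $\thr{\vec B}{k}$ to antecedents and $\thr{\vec B}{k+1}$ to succedents, justify the resulting negation leaves by \cref{thm:imm-szel} (with $A=1$, $C=0$), telescope over $k$ via cuts using \cref{monotonicity-of-threshold-subscripts}, and finally eliminate the positive Boolean combinations as in \cref{posboolelndt-to-elndt}. The only difference is organisational and harmless: the paper performs the decider substitution directly during the strategy-to-proof induction, producing the threshold-relativised $\posboolelndt$ proofs for each $k$ in one pass, whereas you first extract a single proof in the intermediate negation-bearing calculus $\dmboolelndt$ and then relativise that proof per $k$ (a substitute-and-relativise move the paper itself employs later for its $\coNL=\NL$ analogue).
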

\noindent %FIXED indent
This argument is much more complicated than the deterministic case, \cref{thm:eldt-psim-db}, and requires the formalisation of Immerman-\szel\ from the previous section.
% Let us also point out that the case for $\eldt$ and $\dbl$ essentially follows from Cook's development in \cite{Cook01slides}.
The argument follows by composing a series of intermediate polynomial simulations.

% \subsection{From strategies to $\boolelndt$}

% Recall the system $\boolelndt$ from \Cref{boolelndt-dfn}.
% We have:

% \begin{proposition}
% \label{games-to-boolelndt}
% $\boolelndt$ polynomially simulates the \boolexndt-game.
% \end{proposition}

% \todoavg[inline]{I left your things above, mine starts from \cref{+translation}, here i also commented your section 6.2}% \subsection{From $\boolelndt$ to DeMorgan-$\boolelndt$}
% Let us write $\dmboolelndt$ for the subsystem of $\boolelndt$ where the only $\lnot$ symbols are immediately in front of \exndt\ formulas.
% I.e.\ $\dmboolelndt$ is defined just like $\boolelndt$ but the formulas that may occur are generated from the grammar:

% \[
% P,Q \quad \bnf \quad 
% A \ \mid \ \lnot A \ \mid \ (P\lor Q) \ \mid \ (P\land Q)
% \]

% The main result of this subsection is:

% \begin{proposition}
% \label{boolelndt-to-demorgan-boolelndt}
% $\dmboolelndt$ polynomially simulates $\boolelndt$.    
% \end{proposition}

\subsection{De Morgan normal form of strategies}
\label{sec:dm-nf}
Before translating strategies to proofs, 
it will be useful to work with strategies in `De Morgan' form, where each query only has negations in front of $\elndt$ formulas, with no $\land $ in its scope.
Let us write $\blus$ for the restriction of $\bl$ to this syntax of queries.
    We can define for each $\bl$-query $Q$ an `equivalent' $\blus$-query $\dm Q$ in the natural way, by pushing negations down to \exndt\ formulas.

\begin{definition}[$\DM$-translation]\label{+translation}
 % We  define a translation of queries between the two games by performing induction on the size of queries.
    For each $\bl$-query $Q$ define a $\blus$-query $\dm Q$ by,
    \[
\begin{array}{r@{\ := \ }l}
     \dm B & B  \\ 
       \dm{(Q\land R)}& \dm Q\land \dm R \\  
        \dm{(Q\lor R)} & \dm Q\lor \dm R 
       %  \\
       % \dm {(\neg (Q\lor R))}& \dm{(\neg Q)} \land \dm {(\neg R)}\\
       %  \dm{(\neg (Q\land R))} &  \dm{(\neg Q)} \lor \dm{(\neg R)} \\
       %  \dm {(\neg \neg Q)} & \dm Q
\end{array}
\qquad
\begin{array}{r@{\ := \ }l}
     % \dm B & B  \\ 
     %  \dm{(\neg B)} & \neg B \\
     %   \dm{(Q\land R)}& \dm Q\land \dm R \\  
     %    \dm{(Q\lor R)} & \dm Q\lor \dm R \\
           \dm{(\neg B)} & \neg B \\
     \dm {(\neg \neg Q)} & \dm Q \\
       \dm {(\neg (Q\lor R))}& \dm{(\neg Q)} \land \dm {(\neg R)}\\
        \dm{(\neg (Q\land R))} &  \dm{(\neg Q)} \lor \dm{(\neg R)} 
\end{array}
\]
where $B$ is an \exndt\ formula.
    % \begin{itemize}
    %     \item $\dm B:=B$
    %     \item $\dm{(\neg B)}:=\neg B$ 
    %     \item $\dm{(Q\land R)}:=\dm Q\land \dm R $ 
    %     \item  $\dm{(Q\lor R)}:=\dm Q\lor \dm R $
    %    \item  $\dm {(\neg (Q\lor R))}:=\dm{(\neg Q)} \land \dm {(\neg R)}$
    %     \item  $\dm{(\neg (Q\land R))}:= \dm{(\neg Q)} \lor \dm{(\neg R)}$
    %     \item  $\dm {(\neg \neg Q)}:=\dm Q$
    % \end{itemize}
\end{definition}

  We will need to lift this translation to strategies too for which, we require some intermediate results.  

%   In order to establish \cref{plus minus queries}, 
% we first need a `substitution' lemma. 
Let us write $\depth Q$ for the \emphasis{depth} of $Q$, i.e.\ the maximum length of a path from the root of $Q$ to a (maximal) \exndt\ subformula.
% We write $Q' \leq Q$ if $Q'$ is a \emphasis{subquery} of $Q$, i.e.\  $Q'$ a subformula of $Q$. 
% We shall write $Q[Q'/P]$ for the query obtained from $Q$ by replacing any subquery $Q'$ by the query $P$.
% 
Our first intermediate result 
 constructs a winning strategy from situations when the assignment is
inconsistent with some substitution of queries.

\begin{lemma}
    [Leibniz property]
    \label{replacing-queries}
    Let $P,P',Q[X]$ be $\bl$ (or $\blus$) queries, where $X$ is a distinguished single subquery occurrence in $Q[X]$.
    From $\{P\mapsto b, P'\mapsto b, Q[P] \mapsto c , Q[P'] \mapsto 1-c\}$ 
    there is a $\bl$ (resp., $\blus$) winning strategy with $O(\log(\depth {Q[X]}))$ rounds.
\end{lemma}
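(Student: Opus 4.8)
The plan is to prove the Leibniz property by induction on the structure of $Q[X]$, following the path from the root of $Q[X]$ down to the distinguished occurrence $X$. At each step Prover interrogates the immediate subqueries of the current node, using the Boolean connective contradictions to propagate a discrepancy one level deeper; since $Q[X]$ need not be balanced, I cannot naively branch on both subqueries at every level, but I can afford to \emph{ask} the relevant subqueries and then recurse only into the branch containing $X$ — a single path of length $\depth{Q[X]}$, so the depth of the strategy is $O(\depth{Q[X]})$ rather than $O(\log\depth{Q[X]})$. To get the claimed $O(\log(\depth{Q[X]}))$ bound I will instead phrase the induction as a divide-and-conquer on the path to $X$: split $Q[X]$ at a subquery $Q'[X]$ roughly halfway along the root-to-$X$ path, so that $Q[X] = Q[Q'[X]/Y]$ for a fresh placeholder $Y$ and both $Q[Y]$ and $Q'[X]$ have strictly smaller such path-depth. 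This is exactly how Example~\ref{finding-forcing-long-combinations} obtains logarithmic bounds from balanced bracketing, and I expect the same bookkeeping to work here.

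Concretely, I would first handle the base case: $Q[X] = X$ itself. Then $\{P\mapsto b, P'\mapsto b, P\mapsto c, P'\mapsto 1-c\}$ already contains $\{P'\mapsto b, P'\mapsto 1-c\}$, which is a Boolean contradiction if $b\neq 1-c$, i.e.\ always (since one of $c, 1-c$ differs from $b$); more carefully, either $b = c$ and then $\{P'\mapsto b, P'\mapsto 1-c\}$ is literally the set $\{P'\mapsto c, P'\mapsto 1-c\}$, or $b = 1-c$ and $\{P\mapsto b, P\mapsto c\} = \{P\mapsto b, P\mapsto 1-b\}$ — in both cases a simple contradiction with $0$ extra rounds. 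For the inductive step, write $Q[X] = Q[Q'[X]/Y]$ where $Q'[X]$ sits midway down the path to $X$, so $\depth{Q[Y]}, \depth{Q'[X]} \le \tfrac12\depth{Q[X]} + O(1)$. Prover asks $Q'[P]$ and $Q'[P']$; if they receive the \emph{same} answer $d$, then $\{Q'[P]\mapsto d, Q'[P']\mapsto d, Q[P]\mapsto c, Q[P']\mapsto 1-c\}$ is an instance of the Leibniz situation for the shorter query $Q[Y]$ (substituting $Q'[P], Q'[P']$ for $Y$), so recurse there in $O(\log\depth{Q[Y]})$ rounds; if they receive \emph{different} answers, then $\{P\mapsto b, P'\mapsto b, Q'[P]\mapsto d, Q'[P']\mapsto 1-d\}$ is a Leibniz situation for the shorter query $Q'[X]$, so recurse there. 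The recursion depth is $O(\log\depth{Q[X]})$ and each node contributes $O(1)$ rounds, giving the bound. Note that pushing through a single connective (the reason both halves remain well-formed $\bl$- or $\blus$-queries) is automatic because substitution of a subquery for a placeholder of the same tier preserves the two-tiered syntax, and it preserves $\blus$-form since we never introduce new negations.

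The main obstacle I anticipate is not conceptual but a matter of careful setup: making the "midpoint split" precise for an arbitrary (possibly unbalanced) query tree so that both $\depth{Q[Y]}$ and $\depth{Q'[X]}$ are genuinely at most a constant fraction of $\depth{Q[X]}$, and checking that the placeholder substitution $Q[X] = Q[Q'[X]/Y]$ really does decompose \emph{along the distinguished occurrence path} and not some other branch. Since $\depth{Q}$ is defined as the maximum path length to a maximal \exndt-subformula, I should take $Q'[X]$ to be the subquery rooted at the node exactly $\lceil \depth{Q[X]}/2\rceil$ steps from the root along the path to $X$ — then $\depth{Q'[X]}$ could in principle be larger than that bound if some \emph{other} branch below $Q'$ is long, so I may instead want to induct on the \emph{length of the path from the root of $Q[X]$ to $X$}, call it $\ell$, rather than on $\depth{Q[X]}$, and observe that $\ell \le \depth{Q[X]}$; splitting the path of length $\ell$ in half gives two subproblems with path-length $\le \lceil\ell/2\rceil$, so the recursion has depth $O(\log\ell) = O(\log\depth{Q[X]})$, as required. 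Everything else — the connective-contradiction case analysis, the fact that the constructed trees are genuine $\bl$ (resp.\ $\blus$) strategies — is routine, along the lines already illustrated in Examples~\ref{ex:forcing-finding} and~\ref{finding-forcing-long-combinations}.
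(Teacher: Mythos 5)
Your proposal is correct and follows essentially the same route as the paper: a divide-and-conquer induction on the length of the root-to-$X$ path, splitting at a subquery roughly halfway along it, asking both instantiations of that subquery, and recursing into the inner half ($Q'[X]$ with $P,P'$) when the answers disagree or the outer half ($Q[Y]$ with $Q'[P],Q'[P']$) when they agree. The only detail to make explicit is a second base case at path length $1$ (where, for $Q[X]=X\circ T$, Prover asks the sibling $T$ and reaches a connective contradiction, and for $Q[X]=\neg X$ the contradiction is immediate), since halving a length-$1$ path does not strictly shrink both subproblems — the material you sketch in your opening paragraph already covers this.
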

\begin{proof}
    Let $\pi$ be the (unique) path in the of $Q[X]$ from the root to $X$.
    We proceed by divide-and-conquer induction on the length $|\pi|$ of $\pi$.
    \begin{itemize}
        \item If $|\pi| = 0 $ then $Q[X] = X$, and $\{P \mapsto b, P'\mapsto b, P \mapsto c, P' \mapsto 1-c\}$ always contains a simple contradiction.
        \item If $|\pi|=1$ then $X$ is an immediate subquery of $Q$:
        \begin{itemize}
            \item if $Q[X] = \neg X$ then $\{P \mapsto b, P' \mapsto b, \neg P \mapsto c, \neg P' \mapsto 1-c\}$ always contains a simple contradiction;
            \item if $Q[X] = X \circ T$ for some $\circ \in \{\lor,\land\}$ and query $T$, then ask $T$.
            From here note that $\{P \mapsto b, P'\mapsto b, P \circ T \mapsto c , P' \circ T \mapsto 1-c, T \mapsto d\} $ always contains a simple contradiction. Similarly if $Q[X] = T \circ X$.
        \end{itemize}
        \item For the inductive step, write $Q[X] = Q'[R[X]]$ where $R[X]$ is the subquery of $Q[X]$ rooted (roughly) halfway along $\pi$.
        Ask $R[P] $ and $R[P'] $, with responses $d $ and $d'$ respectively:
        \begin{itemize}
            \item if $d \neq d'$ then apply the inductive hypothesis to $P,P',R[X]$;
            \item if $d=d'$ then apply the inductive hypothesis to $R[P], R[P'], Q'[X]$.\qedhere
        \end{itemize}
    \end{itemize}
\end{proof}

\noindent %FIXED indent
For a tree $T$ write $|T|$ for its number of leaves. Recall `Spira’s lemma’, that each binary tree has a subtree of roughly half the number of leaves:

  \begin{lemma}[Spira's lemma, \cite{spira1971time}]\label{spiral} 
  For any binary tree $T$ there is a subtree $T'$ such that $\frac{1}{3}|T| \leq |T'| < \frac{2}{3}|T|$.  
  \end{lemma}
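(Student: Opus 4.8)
The plan is to follow the classical argument: starting from the root, repeatedly descend into whichever child subtree has the larger number of leaves, and track how the leaf count evolves along this path. The key point is that the count starts at $|T|$ and ends at $1$ (once we reach a leaf), but it can at most halve from one step to the next, so it cannot jump over the target window $[\tfrac13|T|, \tfrac23|T|)$; the first time the count enters that window, the corresponding subtree is our $T'$.

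More precisely, I would define a path $v_0, v_1, \dots, v_m$ of nodes of $T$ where $v_0$ is the root, each $v_{j+1}$ is a child of $v_j$ maximising the number of leaves among the (at most two) children of $v_j$, and $v_m$ is a leaf. Two observations drive the argument. First, $|v_0| = |T|$ and $|v_m| = 1$. Second, for each $j<m$, since the leaf counts of the children of $v_j$ sum to $|v_j|$, the heavier child satisfies $|v_{j+1}| \geq |v_j|/2$. Now let $i$ be least with $|v_i| < \tfrac23|T|$; such an $i$ exists because $|v_m| = 1 < \tfrac23|T|$, and $i\geq 1$ since $|v_0| = |T| \geq \tfrac23|T|$. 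By minimality of $i$ we have $|v_{i-1}| \geq \tfrac23|T|$, whence $|v_i| \geq |v_{i-1}|/2 \geq \tfrac13|T|$. Taking $T'$ to be the subtree rooted at $v_i$ then gives $\tfrac13|T| \leq |T'| < \tfrac23|T|$ as required; note moreover that $T'$ is a proper subtree, since $i\geq 1$, which is what makes the divide-and-conquer recursions using this lemma terminate.

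There is no serious obstacle here — the only point needing a moment's care is ensuring that the index $i$ is well-defined and nonzero, which reduces to the chain of inequalities $1 < \tfrac23|T| \leq |T|$, i.e.\ to assuming $|T| \geq 2$. This is harmless: the lemma is only ever invoked on strategy trees with at least two leaves (for a single-leaf tree the stated bounds are degenerate), so we may simply state the result under that proviso.
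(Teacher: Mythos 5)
Your argument is correct: the paper itself gives no proof of this lemma, citing it directly from Spira, and your heavy-path argument (descend into the child with more leaves, note the leaf count at most halves per step, and take the first node whose count drops below $\tfrac23|T|$) is the standard proof of exactly this statement. Your caveat is also right that the bounds are vacuous for a single-leaf tree, so the lemma implicitly assumes $|T|\geq 2$, which is indeed the only case in which the paper invokes it (in \cref{duality-strategy}).
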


% We can use this fact to drive a divide-and-conquer strategy for De Morgan duality:

% \begin{lemma}
% [Duality]
% \label{plus minus queries}
%   % Assume $Q,\neg Q$ are queries in $\bl$ and $Q^+,(\neg Q)^+$ their +translations. From the specification $[Q^+\mapsto b,(\neg Q)^+\mapsto b]$ one can construct a winning strategy for Prover $\blus$  with at most 
%   Given a $\bl$-query $Q$ there is a winning strategy in $\blus$ from $\{ \dm Q \mapsto b, \dm{(\lnot Q)} \mapsto b\} $ with
%   $O(\log(|Q|)$ rounds.
% \end{lemma}

\noindent %FIXED indent
We can use the above fact to drive a divide-and-conquer strategy for De Morgan duality.
Let us write $\exndtsize Q$ for the number of (maximal) \exndt\ subformula occurrences in $Q$, i.e.\ writing $Q = \phi(\vec B)$ where $\phi(\vec x)$ is a Boolean formula and $\vec B$ are maximal \exndt\ subformulas, $\exndtsize Q$ is just the number of leaves of the formula tree of $\phi(\vec x)$.

\begin{lemma}
[Duality]
\label{duality-strategy}
  % Assume $Q,\neg Q$ are queries in $\bl$ and $Q^+,(\neg Q)^+$ their +translations. From the specification $[Q^+\mapsto b,(\neg Q)^+\mapsto b]$ one can construct a winning strategy for Prover $\blus$  with at most 
  Given a $\bl$-query $Q$ there is a winning strategy in $\blus$ from $\{ \dm Q \mapsto b, \dm{(\lnot Q)} \mapsto b\} $ with
  $O(\log(\exndtsize Q)$ rounds.
\end{lemma}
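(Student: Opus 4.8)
The plan is to induct on $m:=\exndtsize Q$, building a $\blus$ strategy from $\{\dm Q\mapsto b,\dm{(\lnot Q)}\mapsto b\}$. Two easy facts about $\dm{(-)}$ drive the argument, both by routine induction on query structure using \cref{+translation}. First, when $m=1$ the queries $\dm Q$ and $\dm{(\lnot Q)}$ are exactly $B$ and $\neg B$, in some order, for the unique \exndt-subformula $B$ of $Q$; more generally $\dm{(\lnot Q)}$ is the De Morgan dual of $\dm Q$ (swap $\land\leftrightarrow\lor$, complement the \exndt-literals). Second, for a query-with-hole $Q'[X]$ there are a fixed polarity $\epsilon\in\{+,-\}$ and fixed $\blus$-contexts $\mathcal C_{Q'}[-],\mathcal D_{Q'}[-]$, depending only on $Q'[X]$, such that, writing $S^+:=\dm S$ and $S^-:=\dm{(\lnot S)}$, for every query $S$ we have $\dm{(Q'[S])}=\mathcal C_{Q'}[S^\epsilon]$ and $\dm{(\lnot(Q'[S]))}=\mathcal D_{Q'}[S^{\bar\epsilon}]$.

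If $m$ is below a fixed constant the lemma is immediate: ask all $O(1)$ \exndt-literals $B_i,\neg B_i$ and all $O(1)$ subqueries of $\dm Q$ and $\dm{(\lnot Q)}$ that are rooted at a Boolean connective; any inconsistency among the answers is a simple contradiction, and if there is none then the answers evaluate $\dm Q$ and its dual $\dm{(\lnot Q)}$ to opposite values, contradicting that both were answered $b$. For larger $m$ apply Spira's lemma (\cref{spiral}) to the Boolean structure of $Q$ to obtain a subquery $R$ with $\tfrac m3\le\exndtsize R<\tfrac{2m}3$, and write $Q=Q'[R]$; then $\exndtsize{Q'[X]}=m-\exndtsize R+1$, and once $m$ exceeds a small constant we have $\exndtsize R\ge 2$, so both $\exndtsize R$ and $\exndtsize{Q'[X]}$ are $<m$ (indeed $\le\tfrac56 m$ for $m\ge 6$, which suffices for logarithmic recursion depth).

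The strategy begins by asking $\dm R$ and $\dm{(\lnot R)}$. Case (1): they get the same answer $d$; then $\{\dm R\mapsto d,\dm{(\lnot R)}\mapsto d\}$ is an instance of the lemma for the smaller query $R$, so we finish by the inductive hypothesis in $O(\log\exndtsize R)$ further rounds. Case (2): they get opposite answers, say $\dm R\mapsto d$ and $\dm{(\lnot R)}\mapsto 1-d$. Let $d$ also name the constant \exndt-formula ($0$ or $1$) of that value, put $d_+:=d$, $d_-:=1-d$, and note $R^\epsilon$ has been answered $d_\epsilon$ and $R^{\bar\epsilon}$ answered $d_{\bar\epsilon}$. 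Ask the (at most two) queries $d$ and $\neg d$: unless this already yields a Boolean or negation contradiction, it forces $d^\epsilon\mapsto d_\epsilon$ and $d^{\bar\epsilon}\mapsto d_{\bar\epsilon}$. Now ask $\dm{(Q'[d])}=\mathcal C_{Q'}[d^\epsilon]$ and $\dm{(\lnot(Q'[d]))}=\mathcal D_{Q'}[d^{\bar\epsilon}]$. If either gets answer $1-b$ we win: say $\dm{(Q'[d])}\mapsto 1-b$; then $R^\epsilon$ and $d^\epsilon$ share the answer $d_\epsilon$ while $\mathcal C_{Q'}[R^\epsilon]=\dm Q\mapsto b\ne 1-b=\mathcal C_{Q'}[d^\epsilon]$, so the Leibniz property (\cref{replacing-queries}) with context $\mathcal C_{Q'}[-]$ wins in $O(\log\depth{\dm Q})=O(\log m)$ rounds (symmetrically with $\mathcal D_{Q'}[-]$ if instead $\dm{(\lnot(Q'[d]))}\mapsto 1-b$). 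Otherwise both got answer $b$, and $\{\dm{(Q'[d])}\mapsto b,\dm{(\lnot(Q'[d]))}\mapsto b\}$ is an instance of the lemma for the smaller query $Q'[d]$, so we finish by the inductive hypothesis in $O(\log\exndtsize{Q'[d]})$ further rounds.

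For the round count, every recursive call (case (1), or the final clause of case (2)) costs $O(1)$ rounds at its own level and shrinks $\exndtsize$ by a constant factor, so each branch of the strategy makes only $O(\log m)$ recursive calls before reaching a simple contradiction, the constant base case, or a single invocation of the Leibniz property; since the latter costs $O(\log m)$ and happens at most once per branch, the total depth is $O(\log m)=O(\log\exndtsize Q)$. The main point to get right is precisely this organisation of case (2): substituting the constant $d$ for $R$ and bridging via the Leibniz property, recursing only when the Adversary keeps lying consistently, so that the $O(\log m)$-round Leibniz strategy is used at most once along any branch — otherwise the bound would degrade to $O(\log^2 m)$. A secondary point is ensuring the reduction genuinely shrinks the query, which is why Spira's split must be made to give $\exndtsize R\ge 2$ and the small cases are dispatched directly; verifying the two structural facts about $\dm{(-)}$ above is routine.
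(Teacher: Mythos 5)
Your proof is correct and takes essentially the same route as the paper's: a Spira-based divide-and-conquer that queries the dual pair for the chosen subquery, recurses when Adversary answers them consistently, and otherwise substitutes the (forced) constant and closes either by a single terminal appeal to the Leibniz property or by recursing on the smaller context. The differences are only presentational — you make the polarity of the hole, the forcing of the constant queries, and the "Leibniz at most once per branch" round-count explicit, points the paper's proof leaves implicit.
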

\begin{proof}
    We proceed by divide-and-conquer induction on $\exndtsize Q$:
    \begin{itemize}
        \item If $Q$ is just an \exndt-formula or its negation, then the statement is immediate.
        \item For the inductive step, write $Q = P[R]$ where $R$ is a subquery of $Q$ of roughly half the size, more precisely s.t.\ $\frac 1 3 \exndtsize Q  \leq \exndtsize R \leq \frac 2 3 \exndtsize Q$, by \cref{spiral}.
        Ask $\dm R$ and $\dm {(\lnot R)}$, with responses $c$ and $c'$ respectively:
        \begin{itemize}
            \item if $c = c'$ then apply the inductive hypothesis to $R$;
            \item if $c \neq c'$ then ask $\dm {(P[c])}$ and $\dm{(\lnot P[c])}$, with responses $d$ and $d'$ respectively:
            \begin{itemize}
            \item if $d\neq b$ then apply \cref{replacing-queries} to $\dm R, c, \dm {(P [X])}$;\footnote{Note that $\dm Q = \dm{(P[R])} = \dm P [ \dm R]$.}
            \item if $d'\neq b$ then apply \cref{replacing-queries} to $\dm{(\lnot R)}, c', \dm{(\lnot P[X])}$;
                \item otherwise $d=d' = b$ so apply the inductive hypothesis to $P[c]$. \qedhere
            \end{itemize}
        \end{itemize}
    \end{itemize}
\end{proof}

\noindent %FIXED indent
Finally we can obtain the translation of $\bl$ strategies to De Morgan form. For a $\bl$ strategy $\sigma$, write $\exndtsize \sigma := \sum\limits_{Q \in \sigma} \exndtsize Q$, where $Q\in \sigma$ means that $Q$ varies over nodes labelled by the query $Q$ in $\sigma$.

\begin{proposition}\label{game simulation}
    Given a $\bl $ strategy $\sigma$ over $\E$ from an \exndt\ sequent winning in $ d$ rounds
    there is a 
    $\blus$ strategy over $\E$ for the same sequent winning in $O(d+ \log(\exndtsize \sigma))$ rounds.
\end{proposition}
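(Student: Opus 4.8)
\textbf{Proof plan for \cref{game simulation}.} The plan is to keep the underlying tree of $\sigma$ entirely unchanged, relabelling each query $Q$ by $\dm Q$ to obtain a candidate $\blus$ strategy $\dm\sigma$, and then to fix up the leaves (which were simple contradictions of $\bl$) using the earlier machinery, the only non-trivial case being handled by \cref{duality-strategy}. First I would record the obvious invariant obtained by a straightforward induction along branches: if the assignments accumulated by $\sigma$ (including the initial ones) form a state $S$, then the corresponding node of $\dm\sigma$ accumulates $\dm[S]:=\{\dm Q \mapsto b : (Q\mapsto b)\in S\}$, since $\dm\sigma$ asks $\dm Q$ precisely where $\sigma$ asks $Q$ and branches on both Boolean answers just as $\sigma$ does. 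Note the initial state of $\sigma$ is $\{\Gamma\mapsto 1,\Delta\mapsto 0\}$ with $\Gamma,\Delta$ \exndt-formulas, on which $\dm$ acts as the identity; so $\dm\sigma$ starts from the same state and is, once completed, a $\blus$ proof of the same sequent $\Gamma\seqar\Delta$.

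Next I would treat the leaves of $\sigma$, where $S$ contains a simple contradiction of $\bl$, by case analysis on its type. For Boolean, decision, extension and similarity contradictions every query involved is an \exndt-formula (or a $\lor$ of such), hence fixed by $\dm$, so $\dm[S]$ contains the identical simple contradiction of $\blus$ and $\dm\sigma$ wins at that leaf with no extra rounds. For the $\lor$- and $\land$-connective contradictions, since $\dm{(Q_0\lor Q_1)} = \dm{Q_0}\lor\dm{Q_1}$ and $\dm{(Q_0\land Q_1)} = \dm{Q_0}\land\dm{Q_1}$, the translated state $\dm[S]$ again contains a simple contradiction of $\blus$ of the same shape. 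The one remaining case is the negation contradiction $\{Q\mapsto b, \cnot Q\mapsto b\}\subseteq S$: here $\dm[S]$ contains $\{\dm Q\mapsto b, \dm{(\cnot Q)}\mapsto b\}$, which in general is \emph{not} a simple contradiction, as $\dm{(\cnot Q)}$ need not be literally $\cnot\dm Q$. But $\dm Q = \dm{(\cnot\cnot Q)}$ by \cref{+translation}, so this is precisely the starting state of \cref{duality-strategy} instantiated at the $\bl$-query $\cnot Q$; at such a leaf I would append the $\blus$ winning strategy it provides, of depth $O(\log(\exndtsize{(\cnot Q)}))= O(\log\exndtsize Q)$.

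Finally I would bound the rounds. The tree of $\dm\sigma$ down to $\sigma$'s leaves has depth at most $d$, and at each negation-contradiction leaf we splice in a sub-strategy of depth $O(\log\exndtsize Q)$ for the relevant query $Q$; since $\cnot Q$ is not an \exndt-formula it cannot occur among the initial assignments, so it is actively queried by $\sigma$, whence $\exndtsize Q = \exndtsize{(\cnot Q)} \le \exndtsize\sigma$. The total depth is therefore $d + O(\log\exndtsize\sigma) = O(d+\log\exndtsize\sigma)$, as required. I do not expect any genuine obstacle here: the real work is the divide-and-conquer De Morgan argument already carried out in \cref{duality-strategy} (which in turn rests on the Leibniz property \cref{replacing-queries} and Spira's lemma \cref{spiral}); everything else in this step is routine bookkeeping over the six kinds of simple contradiction.
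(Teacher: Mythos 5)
Your proposal is correct and follows essentially the same route as the paper: relabel each query $Q$ of $\sigma$ by $\dm Q$ to get a pre-strategy $\dm\sigma$, observe that the only leaves that are no longer simple contradictions are those of the form $\{\dm Q\mapsto b,\ \dm{(\cnot Q)}\mapsto b\}$, and patch these using the duality strategy of \cref{duality-strategy}, giving the $O(d+\log\exndtsize\sigma)$ round bound. Your case analysis of the other contradiction types and the depth bookkeeping are just a more explicit rendering of what the paper leaves implicit.
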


\begin{proof}
  
    From $\sigma$ construct a $\blus$ `pre-strategy' $\dm{\sigma}$
  for the same sequent 
  by replacing each query $Q$ by $\dm Q$ 
  ($\dm \sigma$ is visualised in \cref{transl-strat}).
The only paths of $\dm \sigma$ that no longer end after a simple contradiction nonetheless contain a subset of queries of form $\{\dm Q \mapsto b, \dm{(\lnot Q)} \mapsto b\}$. 
So we may extend $\dm \sigma$ into a bona fide $\blus$ winning strategy by appealing to \cref{duality-strategy}.
% 
  % Observe that there is only one schema of simple contradictions in the $\bl $  game  not present in $\blus$: $\{Q\mapsto b,\neg Q\mapsto b\}$ where $b\in \{0,1\}$ and $Q$ is a (non-trivial) Boolean combination of \exndt\ formulas (i.e. not an \exndt\ formula).
% 
  %   For these we appeal to \cref{plus minus queries}, yielding the required bounds. 
\end{proof}

\begin{figure}[t]
    \centering
    %original strategy
   \begin{tikzpicture}[scale=1.32]
\begin{scope}[yscale=-1,xscale=1]
%triangle
\draw (0,-0.8) -- (2,2) -- (4,-.6);

\draw (0,-0.8) .. controls (1.2,0) and (2.8,-0.6) .. (4,-.6);

%edges in blob
\draw (2.28,1.208) -- (2.28,1);
\draw (2.24,.82) -- (2.08,.68);
\draw (2.32,.82) -- (2.48,.68);
%edges out of blob
\draw [thick,dotted](2.26,.08) -- (1.88,-.2);
\draw (2.30,.08) -- (2.6,-.2);

%node names

\node at (2,2.2) {\small };

\node at (1.44,.88) {\small };
\node at (1.68,.94) {\small  };
\node at (2.28,1.28) {\footnotesize $\neg $};
\node at (2.28,.88) {\footnotesize $\circ $};
\node at (2.068,.58) {\footnotesize $\land$};
\node at (2.48,.58) {\footnotesize $\lor$};
\node at (2.28,.38) {\footnotesize $\vdots$};

%blob
\path[draw,use Hobby shortcut,closed=true]
(1.868,.58) .. (2.068,.88) ..(2.28,1.38)..(2.48,1.08)   .. (2.68,.58) ;

%leaves
% \node at (.38,-.88) {\footnotesize S.C.};
% \node at (.88,-.68) {\footnotesize S.C.};
\node at (2.08,-.68) {\scriptsize $\{Q\mapsto b,\neg Q\mapsto b\}$};
% \node at (3.38,-.78) {\footnotesize S.C.};

\end{scope}

    \end{tikzpicture}
    %translated pre-strategy
    \begin{tikzpicture}[scale=1.32]
\begin{scope}[yscale=-1,xscale=1]
%triangle
\draw (0,-0.8) -- (2,2) -- (4,-.6);
\draw (0,-0.8) .. controls (1.2,0) and (2.8,-0.6) .. (4,-.6);
%edges
\draw (2.28,1.208) -- (2.28,1);
\draw (2.24,.82) -- (2.08,.68);
\draw (2.32,.82) -- (2.48,.68);
%edges out of blob
\draw [thick,dotted](2.26,.08) -- (1.88,-.2);
\draw (2.30,.08) -- (2.6,-.2);

%strat names

%node names
\node at (-.58,.88) {$\xlongrightarrow{\dm{}}$};

\node at (2,2.2) {\small };
\node at (1.58,.88) {\small  };
\node at (1.68,1) {\small  };

\node at (2.28,.92) {\footnotesize $\bar{\circ} $};
\node at (2.068,.58) {\footnotesize $\lor$};
\node at (2.48,.58) {\footnotesize $\land$};
\node at (2.28,.38) {\footnotesize $\vdots$};

\path[draw,use Hobby shortcut,closed=true]
(1.868,.58) .. (2.068,.88) ..(2.28,1.38)..(2.48,1.08)   .. (2.68,.58) ;

%leaves
% \node at (.38,-.88) {\footnotesize S.C.};
% \node at (.88,-.68) {\footnotesize S.C.};
\node at (2.08,-.68) {\scriptsize $\{\dm{Q}\mapsto b, \dm{(\neg Q)}\mapsto b\}$};
% \node at (3.38,-.78) {\footnotesize S.C.};

\end{scope}

    \end{tikzpicture} 
    \caption{
Visualisation of a $\bl$ strategy (left) and its $\dm \cdot$ translation (right).
`Leaves' of the form $\{\dm Q\mapsto b, \dm {(\neg Q)}\mapsto b\}$ are not simple contradictions of $\blus$, but are rather justified by the $O(\log(\exndtsize Q))$-depth strategies from \cref{duality-strategy}.}
    \label{transl-strat}
\end{figure}
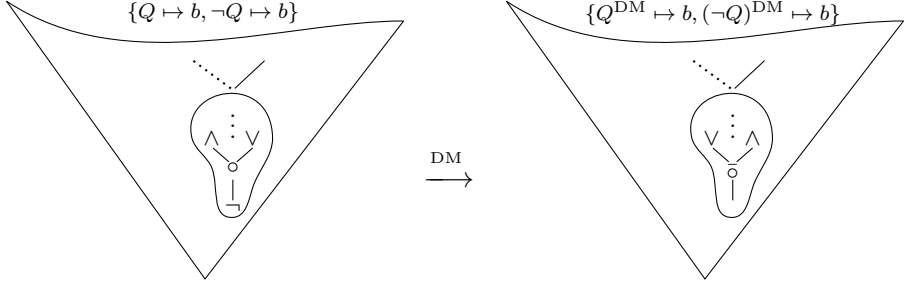

\subsection{From De Morgan strategies to $\posboolelndt$.}
% \todo[]{Anu changed many statemnets to be only for NDT formulas, just to be safe in the main text. this shouldn't create any inconsistency with appendix, but just informing.}
We now use our formalisation of Immerman-\szel\ in order to translate De Morgan strategies to proofs without negation.
Similarly to the deterministic case, cf.~\cref{sec:strats-to-proofs-deterministic}, it will be useful to first translate to a version of $\elndt$ with formulas closed under certain Boolean combinations, this time only positively so:

\begin{definition}
[System for positive Boolean combinations of \exndt\ formulas]
Write $\posboolelndt$ for the extension of $\elndt$ that allows $\{\lor,\land\}$-combinations of \exndt\ formulas and the corresponding rules for these connectives from \cref{eq:pos-bool-rules}.
\end{definition} 

The main result of this subsection is:

\begin{proposition}
    \label{dm-strats-to-posbool-elndt}
Given a $\blus$ proof $\sigma$ over $\BB$ of an \exndt\ sequent $\Gamma \seqar \Delta $ 
there is a $\posboolelndt$ proof of $\Gamma \seqar \Delta$ over some $\BB' \supseteq \BB$ of size polynomial in $|\sigma|$.
\end{proposition}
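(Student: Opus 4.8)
The plan is to adapt the strategy-to-$\booleldt$ translation from the proof of \cref{thm:eldt-psim-db} to the non-deterministic setting, the one genuinely new ingredient being the formalised Immerman-\szel\ construction of \cref{sec: Imm-szel}: since $\posboolelndt$ has no negation, the \emph{negated} \exndt\ formulas (which, in a $\blus$-query, are the only subformulas under a negation) cannot be handled by $\lnot$-rules and must instead be replaced by deciders. The key idea is to \emph{case-split on the number $k$ of true formulas} among a suitable background list $\vec B$: for each fixed $k$ the $k$-decider of \cref{thm:imm-szel} behaves as a genuine negation, and the cases are recombined by a telescoping argument on $k$.

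Concretely, I would first fix $\vec B = B_0,\dots,B_{N-1}$ to be the list of distinct maximal \exndt\ subformulas occurring in queries of $\sigma$ (so $N\leq|\sigma|$), with background extension axioms $\BB$, and then invoke \cref{thm:imm-szel} with $A=1$, $C=0$, for every $i<N$ and every $k\in\{0,\dots,N\}$. This yields extension variables $\dec 1{B_i^k}0$ and a set of extension axioms $\BB'\supseteq\thrextaxs{\vec B}{\BB}\supseteq\pos{\BB}\supseteq\BB$, all of size $\poly(|\sigma|)$. Specialising the sequents of \cref{eq:decider-truth-conds-relativised} to $A=1$, $C=0$ and clearing the stray constant (an initial step and a $\cut$), we obtain $\poly$-size $\elndt$ proofs over $\BB'$ of $\thr{\vec B}k,\dec 1{B_i^k}0,B_i\seqar\thr{\vec B}{k+1}$ and $\thr{\vec B}k\seqar B_i,\dec 1{B_i^k}0,\thr{\vec B}{k+1}$; that is, $\dec 1{B_i^k}0$ computes $\neg B_i$ precisely when exactly $k$ of $\vec B$ are true.

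Next, for each $k\in\{0,\dots,N\}$ define a translation $(\cdot)^{(k)}$ from $\blus$-queries to $\posboolelndt$-formulas by $B^{(k)}:=B$ for \exndt\ $B$, $(\neg B)^{(k)}:=\dec 1{B^k}0$ (well-defined since such $B$ lies in $\vec B$, as $\blus$-negations only guard \exndt\ formulas), and $(Q\circ R)^{(k)}:=Q^{(k)}\circ R^{(k)}$ for $\circ\in\{\lor,\land\}$; each $Q^{(k)}$ is a $\posboolelndt$-formula of size $\poly(|\sigma|)$. By induction on the structure of $\sigma$ I would build, for each $k$, a $\posboolelndt$ proof over $\BB'$ of $\thr{\vec B}k,\Gamma\seqar\Delta,\thr{\vec B}{k+1}$, maintaining the invariant that a substrategy from state $S$ yields a proof of $\thr{\vec B}k,\{Q^{(k)}:(Q\mapsto1)\in S\}\seqar\{Q^{(k)}:(Q\mapsto0)\in S\},\thr{\vec B}{k+1}$ (at the root $\Gamma^{(k)}=\Gamma$, $\Delta^{(k)}=\Delta$, these being \exndt). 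The inductive step is uniform: if $\sigma$ first asks $Q$ with substrategies $\sigma_0,\sigma_1$, apply a $\cut$ on $Q^{(k)}$ to the inductive translations of $\sigma_1$ (with $Q^{(k)}$ on the left) and $\sigma_0$ (with $Q^{(k)}$ on the right), then contract the threshold formulas. For the leaves — simple contradictions, possibly occurring with extra context that is accommodated by weakening exactly as in \cref{thm:eldt-psim-db} — Boolean, decision and extension contradictions translate exactly as in the deterministic case, similarity contradictions are handled by \cref{simulation-has-polysize-proofs}, the $\lor$- and $\land$-connective contradictions by the positive rules of \cref{eq:pos-bool-rules} as for $\lk$, and the sole new case, the $\neg$-connective contradiction $\{Q\mapsto b,\neg Q\mapsto b\}$ (with $Q$ \exndt, so $Q=B_i$ for some $i$), follows directly from the two decider sequents above.

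Finally I would stitch the $N+1$ proofs together by telescoping on $k$: starting from $\seqar\thr{\vec B}0$ (\cref{monotonicity-of-threshold-subscripts}.\eqref{item:thr-0-true}), successively $\cut$ with the proof of $\thr{\vec B}k,\Gamma\seqar\Delta,\thr{\vec B}{k+1}$ and contract, for $k=0,\dots,N$, reaching $\Gamma\seqar\Delta,\thr{\vec B}{N+1}$, and then $\cut$ against $\thr{\vec B}{N+1}\seqar$ (\cref{monotonicity-of-threshold-subscripts}.\eqref{item:thr-big-false}, as $N+1>|\vec B|$) to conclude $\Gamma\seqar\Delta$ over $\BB'$; summing sizes gives $\poly(|\sigma|)$. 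The main obstacle is conceptual rather than technical: recognising that the \emph{partial} negations supplied by \cref{thm:imm-szel}, together with a case split on the count $k$ and a telescoping recombination, can stand in for a genuine total negation of NBPs — which is exactly what is unavailable. Once this is in place, the rest is a routine adaptation of the bookkeeping of \cref{thm:eldt-psim-db}, the only extra care being to thread the passive context $\thr{\vec B}k,\dots,\thr{\vec B}{k+1}$ through every rule.
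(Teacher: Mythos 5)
Your proposal is correct and follows essentially the same route as the paper: the same $k$-indexed translation replacing each $\neg B_i$ by the decider $\decider k 1 {B_i} 0$, the same structural induction on the strategy with cuts maintaining the LHS/RHS invariant and the threaded thresholds $\thr{\vec B}{k}$, $\thr{\vec B}{k+1}$, and the same final telescoping over $k=0,\dots,N$ using \cref{monotonicity-of-threshold-subscripts}. The only differences are cosmetic bookkeeping (e.g.\ taking the union of the decider axiom sets upfront rather than per $k$).
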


\noindent %FIXED indent
Towards a proof of this result, let us fix for the remainder of this subsection $\sigma,\BB,\Gamma, \Delta$ as declared in the statement above, and let $\B = B_0, \dots, B_{N-1}$ enumerate all the \exndt\ formulas occurring in $\sigma$.
By \cref{thm:imm-szel} let $\dextaxs k i  \supseteq \thrextaxs \B \BB \supseteq \pos\BB \supseteq \BB$ be such that we have polynomial-size $\elndt$ proofs of the sequents in \cref{eq:decider-truth-conds-relativised} over $\dextaxs k i $, for $A = 1$ and $C=0$.
Using those sequents, we can simulate negated \exndt\ formulas of $\sigma$ in $\posboolelndt$ relative to a fixed number of true formulas. 
Formally, writing $\decextaxs k := \dextaxs k 0 \cup \cdots \cup \dextaxs k {N-1}$, we have:

\begin{lemma}
    \label{strategy to elndt(land)}
For each $k\in \mathbb Z$ there are $\posboolelndt$ proofs over $\decextaxs k $ of $\thresh{\List{}{B}{}}{k} , \Gamma \seqar \Delta , \thresh{\List{}{B}{}}{k+1}$ of size polynomial in $|\sigma|$. 
\end{lemma}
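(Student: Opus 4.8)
The plan is to eliminate negation from the strategy $\sigma$ by replacing each negated \exndt-formula with the corresponding $k$-decider, and then to simulate the resulting negation-free strategy by cuts, exactly as in the proof of \cref{thm:eldt-psim-db}, while carrying along the relativising threshold formulas $\thr\B k$ (on the left) and $\thr\B{k+1}$ (on the right) throughout every sequent. Concretely, for each $\blus$-query $Q$ I would define a $\posboolelndt$-formula $\ttrans k Q$ over $\decextaxs k$ by $\ttrans k B := B$ for an \exndt-formula $B$, by $\ttrans k{(\lnot B_i)} := \dec 1{B_i^k}0$ (the $k$-decider of \cref{thm:imm-szel}, instantiated at $A=1$ and $C=0$), and by letting the translation commute with $\land$ and $\lor$. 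Since $\sigma$ is in De Morgan form, every negation occurring in a query of $\sigma$ sits directly in front of some \exndt-subformula, which by the choice of $\B$ is one of $B_0,\dots,B_{N-1}$; hence $\ttrans k Q$ is indeed a positive Boolean combination of \exndt-formulas over $\decextaxs k$, and $\ttrans k{\cdot}$ is the identity on \exndt-formulas.

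I would then prove, by induction on the structure of $\sigma$, the slightly more general claim: if $\sigma$ is a $\blus$ winning strategy from $\{\Gamma\mapsto 1,\Delta\mapsto 0\}$ for lists $\Gamma,\Delta$ of \emph{arbitrary} $\blus$-queries, then there is a $\posboolelndt$ proof over $\decextaxs k$ of $\thr\B k,\ttrans k\Gamma\seqar\ttrans k\Delta,\thr\B{k+1}$ (applying $\ttrans k{\cdot}$ pointwise) of size polynomial in $|\sigma|$. For the inductive step, if $\sigma$ begins by querying $Q$ with substrategies $\sigma_0,\sigma_1$ for the answers $0,1$, I apply a cut on $\ttrans k Q$,
\[
\vliinf{\cut}{}{\thr\B k,\ttrans k\Gamma\seqar\ttrans k\Delta,\thr\B{k+1}}{\thr\B k,\ttrans k\Gamma\seqar\ttrans k\Delta,\ttrans k Q,\thr\B{k+1}}{\thr\B k,\ttrans k\Gamma,\ttrans k Q\seqar\ttrans k\Delta,\thr\B{k+1}}
\]
closing the right and left premisses by the inductive hypothesis applied to $\sigma_0$ and $\sigma_1$ respectively (construed as winning strategies from the suitably extended states). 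The base cases are given by the simple contradictions of $\blus$, now relativised by the threshold context and translated by $\ttrans k{\cdot}$, which must be derived by $\poly(|\sigma|)$-size $\posboolelndt$ proofs over $\decextaxs k$: Boolean contradictions reduce to $\lefrul 0$ or $\rigrul 1$; extension contradictions to the extension axioms of $\BB\subseteq\decextaxs k$; decision contradictions and the $\lor$/$\land$ Boolean-connective contradictions have constant-size proofs using only the decision, $\lor$, $\land$, structural and initial rules, exactly as in \cref{thm:eldt-psim-db}; similarity contradictions $\{A\mapsto 0,B\mapsto 1\}$ with $A\simulates\BB B$ are handled by the polynomial-size $\elndt$ proof of $B\seqar A$ from \cref{simulation-has-polysize-proofs} (the reflexive case being just $\id$); and in each of these we merely weaken in the threshold context. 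The crucial case is the $\lnot$-contradiction $\{B_i\mapsto b,\lnot B_i\mapsto b\}$, which under translation and relativisation becomes $\thr\B k,B_i,\dec 1{B_i^k}0\seqar\thr\B{k+1}$ (for $b=1$) or $\thr\B k\seqar B_i,\dec 1{B_i^k}0,\thr\B{k+1}$ (for $b=0$): these are precisely the second and fourth sequents of \eqref{eq:decider-truth-conds-relativised} (at $A=1$, $C=0$), after discarding the redundant $0$ on the right via $\rigrul 0$ or the redundant $1$ on the left via $\lefrul 1$. This single step is where the formalised Immerman--\szel\ theorem, \cref{thm:imm-szel}, is actually invoked.

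Specialising the claim to the case where $\Gamma,\Delta$ consist of \exndt-formulas, for which $\ttrans k{\cdot}$ is the identity, yields the lemma. The proof uses one cut per node of $\sigma$, and each base-case subproof is polynomial in $|\sigma|$ (the decider-truth proofs being polynomial by \cref{thm:imm-szel}, since $N\leq|\sigma|$), so the total size is $\poly(|\sigma|)$; for the extremal ranges $k<0$ (where $\thr\B k$ and $\thr\B{k+1}$ both reduce to $1$) and $k>N$ (where $\thr\B k\seqar$ holds by \cref{monotonicity-of-threshold-subscripts}.\eqref{item:thr-big-false}) the target sequent is immediate anyway. I expect the main obstacle to be bookkeeping rather than mathematical: keeping the relativising thresholds $\thr\B k,\thr\B{k+1}$ threaded uniformly through every sequent of the simulation, and verifying that $\ttrans k{\cdot}$ genuinely lands in \emph{positive} Boolean combinations so that the ambient calculus can be the negation-free $\posboolelndt$ rather than one with full Boolean negation — which is exactly the reason we first passed to De Morgan normal form in \cref{sec:dm-nf}.
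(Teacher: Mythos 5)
Your proposal is correct and follows essentially the same route as the paper: replace each $\lnot B_i$ in the queries by the decider $\decider k 1{B_i}0$, simulate the strategy by cuts on the translated queries while threading $\thr\B k$ on the left and $\thr\B{k+1}$ on the right, and discharge the simple contradictions as in the deterministic case, with the $\lnot$-contradictions handled by \cref{thm:imm-szel} at $A=1$, $C=0$. The only nit is the rule-naming when eliminating the residual constants: one removes the $0$ on the right (resp.\ $1$ on the left) by cutting against the initial sequents $\lefrul 0$ (resp.\ $\rigrul 1$), not by $\rigrul 0$/$\lefrul 1$, which go the other way.
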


\begin{proof}
  
 For each query $Q$ of $\sigma$ write $Q^k$ for the result of replacing each \exndt\ subquery $\neg B_i$ by the respective decider $\decider k 1 {B_i} 0$.
 We proceed inductively on the structure
 of $\sigma$, similarly to the deterministic case, cf.~\cref{eldt-psim-booleldt}, instead conducting cuts on $Q^k$ for each query $Q$, always maintaining the invariant that if $Q\mapsto 0$ then $Q^k$ is on the RHS of a sequent, and if $Q \mapsto 1$ then $Q^k$ is on the LHS.
 We also always keep $\thresh{\List{}{B}{}}{k}$ on the LHS and $\thresh{\List{}{B}{}}{k+1}$ on the RHS.
Formally, if $\sigma $ begins by querying some $Q$, with substrategies $\sigma_0, \sigma_1$ for the answers $0,1$ respectively, we construct the following $\posboolelndt $ proof:
\[
\vlderivation{
\vliin{\cut}{}{ \thresh \B k, \Gamma \seqar \Delta, \thresh \B {k+1}}{
    \vltr{\IH}{\thresh \B k, \Gamma \seqar \Delta, Q^k , \thresh \B {k+1}}{\vlhy{\quad }}{\vlhy{}}{\vlhy{\quad}}
}{
    \vltr{\IH}{\thresh \B k, \Gamma, Q^k \seqar \Delta , \thresh \B {k+1}}{\vlhy{\quad }}{\vlhy{}}{\vlhy{\quad}}
}
}
\]
where the subproofs marked $\IH$ are obtained by the inductive hypohesis, with $\sigma_0,\sigma_1$ construed as winning strategies from the corresponding initial states.
Thus it remains to derive the base cases of simple contradictions by polynomial size proofs:
 \begin{itemize} %FIXED Split math over to lines for margins
     \item $\lnot$ contradictions of $\sigma$ have form either of:
     \begin{itemize}
         \item $\{B_i \mapsto 0, \lnot B_i \mapsto 0\}$: these are translated to polynomial size proofs over $\dextaxs k i $ of $\thresh{\List{}{B}{}}{k} , \Gamma \seqar B_i, $\\$ \decider k 1 {B_i} 0 , \Delta, \thresh{\List{}{B}{}}{k+1}$, by \cref{thm:imm-szel}.
         \item $\{B_i \mapsto 1, \lnot B_i \mapsto 1\}$: these are translated to polynomial size proofs over $\dextaxs k i $ of 
         $\thresh{\List{}{B}{}}{k} , \Gamma, B_i,$\\$  \decider k 1 {B_i} 0 \seqar \Delta, \thresh{\List{}{B}{}}{k+1}$, again by \cref{thm:imm-szel}.
     \end{itemize}
\item all other simple contradictions, namely the $\{\lor,\land\}$ contradictions, decision contradictions, extension contradictions and similarity contradictions, are handled in exactly the same way as for the deterministic case, in the proof of \cref{eldt-psim-booleldt}, just weakening the extraneous threshold extension variables. \qedhere
%  \item $\lor$ or $\land$ contradictions have polynomial-size proofs using the $\lor$ or $\land $ rules, respectively, of \cref{eq:pos-bool-rules}, along with some structural and initial rules.
%  \item Decision contradictions have polynomial-size proofs using the decision rules, along with some structural and initial rules, just like in the deterministic case, cf.~\cref{eldt-psim-booleldt}.
% \item Extension contradictions have polynomial-size 
 \end{itemize}
 % Any other simple contradictions also translate to polynomial-size proofs, in particular of decision truth conditions (using the decision rules), positive truth conditions (using the positive Boolean rules of \eqref{eq:pos-bool-rules}), extension axioms or similarity (using \cref{simulation-has-polysize-proofs}).\todo{could give more details here}
\end{proof}

\noindent %FIXED indent
Now we obtain \cref{dm-strats-to-posbool-elndt}
 by composing proofs from \cref{strategy to elndt(land)}, for $k=0, \dots , N$, and appealing to the monotonicity properties of thresholds from \cref{monotonicity-of-threshold-subscripts} earlier:

    \renewcommand{\storageone}{\cref{monotonicity-of-threshold-subscripts}}
\renewcommand{\storagetwo}{\cref{strategy to elndt(land)}}
\begin{proof}
[Proof of \cref{dm-strats-to-posbool-elndt}]
    Set $\BB' := \decextaxs 0\cup \cdots \cup \decextaxs {N}$ and construct the $\posboolelndt$ proof:
    \[
    \vlderivation{
    \vliq{ \cntr}{}{\Gamma \seqar \Delta}{\vliiiiin{(N+2) \cdot \cut}{}{\Gamma, \dots, \Gamma \seqar \Delta, \dots , \Delta}{
        \vliq{}{}{\seqar  \thresh \B 0 }{\vlhy{\text{\storageone}}}
    }{
        \vliq{}{}{\thresh \B 0 , \Gamma \seqar\Delta , \thresh \B 1}{\vlhy{\text{\storagetwo}}}
    }{\vlhy{\cdots} }{
        \vliq{}{}{\thresh \B {N} , \Gamma \seqar \Delta, \thresh \B {N+1}}{\vlhy{\text{\storagetwo}}}
    }{
        \vliq{}{}{\thresh \B {N+1} \seqar}{\vlhy{\text{\storageone}}}
    }}
    }
    \vspace{-1.6\baselineskip} %FIXED Moved Qed
    \]
\end{proof}

\subsection{From $\posboolelndt$ to $\elndt$}
Just like in the deterministic case, we obtain a polynomial simulation of $\posboolelndt$ by $\elndt$ by the Boolean constructions of \cite[Section 5]{DBLP:conf/csl/BussDasKnop} (see also \cite{BussDasKnop19:preprint}):

\begin{proposition}
\label{posboolelndt-to-elndt}
    If $\posboolelndt$ has a size N proof over $\E$ of an \exndt\ sequent $\Gamma\seqar \Delta$ then $\elndt$ has a $\poly(N)$ size proof of $\Gamma \seqar \Delta$ over some $\E' \supseteq \E$.
\end{proposition}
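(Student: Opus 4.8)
The plan is to absorb the conjunction connective of $\posboolelndt$ into the underlying $\elndt$ layer, reusing the positive-decision machinery of \cref{sec:pos-decs}. The first observation is that the only connective of $\posboolelndt$ not already present in \exndt\ formulas is $\land$: a $\posboolelndt$-formula has the form $\phi(\vec B)$ for a $\{\lor,\land\}$-skeleton $\phi$ and maximal \exndt\ subformulas $\vec B$, and the $\lor$'s of $\phi$ are legal \exndt\ connectives. So, unlike the deterministic analogue \cref{eldt-psim-booleldt} which must also handle $\lnot$, here it suffices to represent conjunctions of NBPs as NBPs.

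To each $\posboolelndt$-formula $P$ I would assign an \exndt\ formula $\tilde P$ over an extension of $\E$, by recursion on the Boolean skeleton: $\tilde A := A$ for $A$ an \exndt\ formula, $\widetilde{Q\lor R} := \tilde Q\lor\tilde R$, and $\widetilde{Q\land R} := \posdec 0 {\tilde Q}{\tilde R}$. Since $\posdec 0 B C$ computes ``if $B$ then $0\lor C$ else $0$'', i.e.\ $B\land C$, the formula $\tilde P$ indeed computes the same Boolean function as $P$. To keep this legitimate one iterates the closure of \cref{pos-decision-programs}: at conjunction-nesting-depth $k$ one re-declares the extension-variable set accumulated so far as the base set ``$\vec e$'' before forming the new positive decisions, so that in $\posdec 0 {\tilde Q}{\tilde R}$ the decided formula $\tilde Q$ always lies over the current base; since $\phi$ has nesting depth $\leq|\phi|$ this is a bounded iteration, each stage adding polynomially many axioms of polynomial size, and the union remains well-founded. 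By \cref{truth-pos-decs}, specialising $A:=0$ (and discarding the resulting $0$ by a cut against $0\seqar$), $\elndt$ has polynomial-size proofs over these axioms of the truth conditions $\widetilde{Q\land R}\seqar\tilde Q$, $\widetilde{Q\land R}\seqar\tilde R$ and $\tilde Q,\tilde R\seqar\widetilde{Q\land R}$; from these one obtains (cf.\ \cref{rem:posdec-rules}) $\elndt$-derivations of the $\land$-rules of \eqref{eq:pos-bool-rules} acting on translated formulas.

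With this in place the simulation is routine, as in \cref{thm:eldt-psim-db}. Given a $\posboolelndt$ proof $\pi$ over $\E$ of the \exndt\ sequent $\Gamma\seqar\Delta$, let $\E'\supseteq\E$ be $\E$ together with all the positive-decision axioms above, and replace every $\posboolelndt$-formula $P$ occurring in $\pi$ by $\tilde P$ (the endsequent is untouched, as $\Gamma,\Delta$ are already \exndt). Every $\elndt$ inference and every $\lor$-inference of $\pi$ survives verbatim, and every $\land$-inference is replaced by its polynomial-size $\elndt$-derivation from the truth conditions. Since each $\tilde P$ and its new axioms have size $\poly(|P|)$, the result is an $\elndt$ proof of $\Gamma\seqar\Delta$ over $\E'$ of size $\poly(|\pi|)$.

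The one point needing genuine care is the bookkeeping for $\E'$: one must check that iterating \cref{pos-decision-programs} yields only polynomially many extension axioms, each of polynomial size — positive decisions are atomic and their defining axioms unfold only one level at a time, so deep nesting incurs merely polynomial, not exponential, growth — and that $\E'$ satisfies the well-foundedness condition of \cref{extension-axioms-definition}, by ordering each fresh positive-decision variable above all variables in its axiom; the quantitative estimates are controlled by \cref{complexity-of-A-induction} exactly as in the $\E$-inductions behind \cref{truth-pos-decs}. Everything else is a direct transcription of the positive-decision development of \cref{sec:pos-decs}, itself essentially the Boolean construction of \cite[Section 5]{DBLP:conf/csl/BussDasKnop}.
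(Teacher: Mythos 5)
Your proposal is correct and takes essentially the same route as the paper, which obtains this proposition as a special case of the positive-decision development of \cref{sec:pos-decs}: interpret $P\land Q$ as $\posdec 0 P Q$ over $\pos\E$ and derive the $\land$-rules of \eqref{eq:pos-bool-rules} from the truth conditions of \cref{truth-pos-decs} (the paper also points to the Boolean constructions of \cite[Section 5]{DBLP:conf/csl/BussDasKnop}). Your explicit iteration of the closure of \cref{pos-decision-programs} to handle conjunctions nested in the decided position, together with the size bookkeeping, is a detail the paper leaves implicit but it does not change the approach.
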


In fact the above result is also obtained as a special case of the development of positive decisions in \cref{sec:pos-decs}: setting $\E' := \pos \E$, we can define $A\land B := \posdec 0 A B$, as given in \cref{pos-decision-programs}, and we can duly derive the $\lefrul \land$ and $\rigrul \land$ rules of \cref{eq:pos-bool-rules}, under this interpretation, by appealing to the truth conditions for positive decisions from \cref{truth-pos-decs}.

As promised the proof of the main result of this section is now obtained by composing the various polynomial simulations we have obtained:

\begin{proof}
    [Proof of \cref{thm:strat-to-proofs}]
    Immediate from \Cref{game simulation,dm-strats-to-posbool-elndt,posboolelndt-to-elndt}.
\end{proof}

\section{A proof complexity theoretic analogue of $\coNL=\NL$}
\label{sec:prf-comp-immszel}
In this section, we use our non-uniform formalisation of the Immerman-\szel\ result to obtain a bona fide proof complexity theoretic version of $\coNL=\NL$.
% What is more, this proof seems genuinely different from the usual inductive counting argument, in the sense that we manage the counting itself at the proof level rather than formula level.

To formulate the statement we want, it is not enough to equate $\elndt$ with some `dual' version, reasoning over co-NBPs: as sequents are two-sided, having the negation of a \exndt\ formula $A$ on one side is the same as having $A$ on the other side. 
The corresponding analogue we shall formalise is rather akin to ``the Logspace Hierarchy collapses to $\NL$''.
To be concrete, a \emphasis{$\exists\forall$BP} is just an alternating branching program with only one alternation, starting with non-deterministic states (see, e.g., \cite{Weg00:bps-and-bdds} for a detailed exposition). 
We shall show in this section that $\elndt$ polynomially simulates a corresponding system for $\exists\forall$BPs.

\subsection{Systems for co-nondeterministic and $\exists\forall$-alternating branching programs}
First let us define the appropriate systems.
In terms of notation, we shall make use of explicit metavariables to delineate alternations of $\lor $ and $\land $ in alternating branching programs. 
All notation should be construed as self-contained to this section, to avoid any clashes with previous sections.
As we have already seen several examples of systems, representations and proofs in related systems, we shall remain as brief as possible in this subsection.

\newcommand{\eleadt}{\mathsf{eL}\exists\forall\mathsf{DT}}
\newcommand{\exeadt}{e$\exists\forall$DT}

A \emphasis{co-\exndt\ formula}, written $U,V$ etc., is generated by,
\[
U,V,\dots \quad ::= \quad 0 \quad \mid \quad 1 \quad \mid \quad \dec U p V \quad \mid \quad  U\land V \quad \mid \quad u_i
\]
where $u_0,u_1,\dots$ is a fresh set of \emphasis{co-\exndt\ extension variables}.

From here an \emphasis{\exeadt\ formula}, written $X,Y$ etc., is generated by,
\[
X,Y,\dots \quad ::= \quad 0 \quad \mid \quad 1 \quad \mid \quad  U\quad \mid \quad \dec X p Y \quad \mid \quad  X\lor Y \quad \mid \quad x_i
\]
where $x_0,x_1,\dots$ is a fresh set of \emphasis{\exeadt\ extension variables}.

Extension axiom sets for both types of formula above are defined as expected, i.e.\ sets of the form $\mathcal U = \{u_i \extiff U_i\}_{i<n}$ or $\mathcal X = \{x_i \extiff X_i\}_{i<n}$, with each $U_i$ or $X_i$, respectively, containing only extension variables among $u_0, \dots, u_{i-1}$ or $x_{0}, \dots, x_{i-1}$, respectively.
It is not hard to see that \exeadt\ formulas, over \exeadt\ extension axiom sets, represent $\exists\forall$BPs, just like \exndt\ formulas for NBPs.

\begin{definition}
[Systems for co-\exndt\ and \exeadt\ formulas]
 The system co-$\elndt$ is defined like  $\elndt$, cf.~\cref{eldt-and-elndt-systems} only over co-\exndt\ formulas instead of \exndt\ formulas, using $\land $-rules from \eqref{eq:pos-bool-rules} instead of the $\lor$-rules of $\elndt$.

The system $\eleadt$ is defined just like $\elndt$, cf.~\cref{eldt-and-elndt-systems}, only over \exeadt\ formulas instead of \exndt\ formulas, so including $\land $-rules from \eqref{eq:pos-bool-rules} as well as the rules of $\elndt$, over appropriate formulas.
\end{definition}

\begin{remark}
    [Arbitrary alternation]
    \label{rem:arbitrary-alternation}
    It is not hard to see from here how to develop a syntax for more general alternating branching programs, but we shall remain within the restricted systems here defined to avoid cumbersome metavariable bookkeeping. 
\end{remark}

The main result of this section is:

\begin{theorem}
[Proof complexity theoretic analogue of $\coNL = \NL$]
\label{thm:conl-nl-proofs}
    $\elndt$ polynomially simulates $\eleadt$, over NDT sequents.
\end{theorem}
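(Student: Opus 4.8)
The plan is to adapt the strategies-to-proofs argument of \cref{sec:strats-to-proofs}: there, ``negated \exndt\ subformulas'' occurring in a strategy were simulated via the deciders of \cref{thm:imm-szel}, and here ``co-\exndt\ subformulas'' occurring in an $\eleadt$ proof will play the same r\^ole. The key observation is that a co-\exndt\ formula $U$ (under its extension axioms) is precisely a \emph{co}-nondeterministic branching program: its \emphasis{De Morgan dual} $\dual U$, obtained by exchanging $0\leftrightarrow 1$, replacing each $\land$ by $\lor$, dualising decisions via $\dual{(\dec U p V)}=\dec{\dual U}p{\dual V}$, and introducing for each co-extension variable $u_i$ a fresh \exndt\ extension variable abbreviating $\dual{U_i}$, is an \exndt\ formula (an NBP) computing $\lnot U$. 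Thus $U$ computes the negation of the NBP $\dual U$, which, relative to a fixed number of true formulas, is exactly what the deciders $\decider{k}{1}{\dual U}{0}$ compute. As in \cref{sec:strats-to-proofs} we route everything through $\posboolelndt$ and finish via \cref{posboolelndt-to-elndt}; no De Morgan normalisation step is needed here, since we translate a proof rather than a strategy.

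Concretely, fix an $\eleadt$ proof $\pi$ of an \exndt\ sequent $\Gamma\seqar\Delta$. Since an $\land$-free co-\exndt\ formula is already an \exndt\ formula, the ``genuine'' co-nondeterministic atoms of $\pi$ are the $\land$-using co-\exndt\ formulas occurring maximally in some sequent of $\pi$; we leave all other subformulas untouched, which guarantees the translation is the identity on plain \exndt\ formulas, so the endsequent is unchanged. Let $\vec B=B_0,\dots,B_{N-1}$ enumerate the De Morgan duals of these atoms, together with the duals $\dual{U_i}$ arising from any co-extension axioms $u_i\extiff U_i$ with $U_i$ using $\land$, over suitable \exndt\ extension axioms $\BB$. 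For each $k\in\mathbb Z$, generate $\thrextaxs{\B}{\BB}$ and the decider axioms of \cref{dec-ext-dfn} (with $A=1$, $C=0$), and let $X^k$ be the \exndt\ formula obtained from an $\eleadt$ formula $X$ by replacing each critical co-\exndt\ atom $U$ by $\decider{k}{1}{\dual U}{0}$; the surrounding $\{\lor,\text{decision}\}$-structure, and the $\land$-free co-\exndt\ structure, pass through verbatim. Extend $X\mapsto X^k$ to the $\eleadt$ extension axioms in the obvious way.

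We then mimic \cref{strategy to elndt(land)}: by induction on $\pi$ we build a $\posboolelndt$ proof of $\thresh{\B}{k},\Gamma\seqar\Delta,\thresh{\B}{k+1}$ over the relevant extension axioms, simulating each $\eleadt$ inference after translation. The structural rules, $\id$, $\cut$, the $\lor$-rules and the decision rules of $\eleadt$ are reproduced by the homonymous $\posboolelndt$ rules, since the translated principal formulas keep their shape. The co-\exndt\ $\land$-rules, and the axioms (including extension axioms), are discharged using the decider truth conditions of \cref{thm:imm-szel} together with the positive-decision truth conditions of \cref{truth-pos-decs}: for example, to simulate $\lefrul{\land}$ on $U\land V$ one cuts the translated premise against sequents of the form $\thresh{\B}{k},\decider{k}{1}{\dual U\lor\dual V}{0}\seqar U^k,\thresh{\B}{k+1}$ (and likewise for $V^k$), which follow from \cref{thm:imm-szel} by a few cuts, using $\dual U\seqar\dual U\lor\dual V$ and, in the $\land$-free case $U^k=U$, small $\elndt$ proofs of $\seqar U,\dual U$. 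Composing these proofs over $k=0,\dots,N$ and invoking threshold monotonicity, \cref{monotonicity-of-threshold-subscripts}, exactly as in \cref{dm-strats-to-posbool-elndt}, yields a $\posboolelndt$ proof of $\Gamma\seqar\Delta$, and \cref{posboolelndt-to-elndt} converts it to $\elndt$.

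The main obstacle is the bookkeeping at the interface of the two layers. The family $\vec B$ must be arranged so that every negation demanded by the translation of a co-\exndt\ rule is a negation of a \emph{member} of $\vec B$, so that \cref{thm:imm-szel} applies off the shelf; simultaneously the translation must be the identity on plain \exndt\ formulas and must respect the two-tiered syntax, so that no $\land$ ever ends up inside a decision after translation. One also needs the small auxiliary fact, provable by $\E$-induction, that an $\land$-free co-\exndt\ formula $U$ and its De Morgan dual $\dual U$ admit polynomial-size $\elndt$ proofs of $\seqar U,\dual U$ and $U,\dual U\seqar$. Once these invariants are fixed, the rule-by-rule simulation and the counting over $k$ are routine reworkings of \cref{strategy to elndt(land)} and \cref{dm-strats-to-posbool-elndt}.
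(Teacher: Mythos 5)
Your overall route is the same as the paper's in its essentials: replace universal-layer (co-\exndt) formulas by the deciders $\decider{k}{1}{\dual U}{0}$ of \cref{thm:imm-szel}, prove threshold-guarded sequents $\thr{\vec B}{k},\Gamma\seqar\Delta,\thr{\vec B}{k+1}$ for each $k$, and stitch over $k=0,\dots,N$ via \cref{monotonicity-of-threshold-subscripts}. However, two steps as you state them would fail. First, your claim that the decision rules of $\eleadt$ ``are reproduced by the homonymous rules, since the translated principal formulas keep their shape'' is false exactly when the principal formula is itself an $\land$-using co-\exndt formula: a $\lefrul p$ step on $\dec U p V$ with $U$ containing $\land$ has its principal formula wholesale replaced by the decider $\decider{k}{1}{\dec{\dual U}p{\dual V}}{0}$, which is an extension \emph{variable}, so no decision rule applies to it and the inductive simulation breaks there. (It is repairable with the same cut-repair you use for $\land$: move the translated premises to $\dual U,\dual V$ on the right via the decider conditions, apply $\rigrul p$ to rebuild $\dec{\dual U}p{\dual V}$, and cut against \cref{thm:imm-szel} --- but this case, and likewise extension steps on co-extension axioms whose bodies use $\land$, must be handled explicitly.) Second, the claim that an $\land$-free co-\exndt formula ``is already an \exndt\ formula'' and may be left untouched ignores co-extension variables $u_i$ whose defining axioms (hereditarily) use $\land$: such a formula is not an NBP, and reinterpreting $u_i$ by a $k$-dependent decider while fixing its occurrences forces per-$k$ renamings and a repair of the corresponding extension steps, contradicting your ``identity on $\land$-free formulas'' invariant. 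The correct dividing line is ``not a \dt\ formula'', which is exactly the paper's enumeration $\vec U$.

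For comparison, the paper avoids both issues by not replacing cedent-level co-\exndt formulas at all: any such $U_i$ occurring as a whole cedent formula is moved to the opposite side as its dual $\bar U_i$, so all co-layer rules ($\land$-rules, co-layer decisions, extension steps) are handled by plain dualisation as in \cref{lem:coelndt-to-elndt-by-duality}, and the deciders enter only at $\lefrul\lor$/$\rigrul\lor$ steps where a co-\exndt formula becomes embedded in an \exeadt formula, repaired by \cref{cor:coendt-fmla-decider-truth-conditions}; the whole construction lives directly in $\elndt$, so the detour through $\posboolelndt$ (which your translation, producing pure \exndt formulas, does not actually need) is superfluous. If you patch the two points above --- treat every non-\dt\ co-\exndt formula uniformly, and cut-repair co-layer decision and extension steps exactly as you do the $\land$-rules --- your argument goes through and becomes a legitimate variant of the paper's proof.
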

\noindent %FIXED indent
To prove this we define a series of translations, according to different types of BPs, and combine them appropriately.
Like earlier results, we could refine this result to account for conclusions with extension variables, over corresponding sets of extension axioms. 
However, as we shall work with many systems in this section over different languages, we shall refrain from doing this to simplify theorem statements and their proofs.

\subsection{$\elndt$ polynomially simulates co-$\elndt$}
First, as a warm up result, and for later use, let us establish the expected polynomial equivalence between $\elndt$ and co-$\elndt$ over simple sequents.
We state the result in only the one direction that we shall need (the other being entirely dual):

\begin{proposition}
\label{prop:elndt-psim-coelndt}
    $\elndt$ polynomially simulates co-$\elndt$ over DT sequents.
\end{proposition}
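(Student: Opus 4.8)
The plan is the obvious ``negate and flip'' translation, specialising the De Morgan duality already exploited elsewhere in the paper. Write $\neg U$ (a meta-operation on co-$\exndt$ formulas, \emph{not} part of the $\exndt$ syntax) for the $\exndt$ formula defined by recursion on $U$: $\neg 0 := 1$, $\neg 1 := 0$, $\neg (\dec U p V) := \dec{\neg U}p{\neg V}$, $\neg (U \land V) := \neg U \lor \neg V$, and $\neg u_i := e_i$ for a fresh $\exndt$ extension variable. A co-$\exndt$ extension axiom set $\mathcal U = \{u_i \extiff U_i\}_{i<n}$ then translates to $\neg \mathcal U := \{e_i \extiff \neg U_i\}_{i<n}$, which is again well-founded since $\neg U_i$ mentions only $e_0,\dots,e_{i-1}$, and whose total size is linear in that of $\mathcal U$ (negation never increases formula size, up to a constant). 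A routine $\mathcal U$-induction confirms that $\neg U$ computes the negation of $U$ over $\neg \mathcal U$.

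First I would show, by induction on a co-$\elndt$ derivation $\pi$ of $\Gamma \seqar \Delta$ over $\mathcal U$, that there is an $\elndt$ derivation of $\neg \Delta \seqar \neg \Gamma$ over $\neg \mathcal U$ of size $\poly(|\pi|)$, where $\neg \Gamma$ denotes the list of negations of the members of $\Gamma$, and likewise $\neg \Delta$. Each inference step of co-$\elndt$ maps to its side-swapped De Morgan dual: $\lefrul\land$ and $\rigrul\land$ become $\rigrul\lor$ and $\lefrul\lor$; the constant rules $\lefrul 0, \rigrul 1$ become $\rigrul 1, \lefrul 0$, and $\lefrul 1, \rigrul 0$ become $\rigrul 0, \lefrul 1$; the structural rules and $\cut$ become their side-swapped versions (with some bookkeeping of exchanges); the $\id$ rule $U \seqar U$ becomes $\neg U \seqar \neg U$; and each extension hypothesis $u_i \seqar U_i$, $U_i \seqar u_i$ becomes the pair $\neg U_i \seqar e_i$, $e_i \seqar \neg U_i$ drawn from $\neg \mathcal U$. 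The only delicate case is the decision rules, since a decision variable $p$ occurs as the formula $0p1$ in the cedents of the co-$\elndt$ $p$-rules, and its negation is the formula $1p0$, written $\neg p$, rather than $p$ itself; here I would translate $\lefrul p$ to an application of $\rigrul p$, inserting a couple of constant-size cuts against $\seqar p, \neg p$ and $p, \neg p \seqar$ to reshape the translated premises into the form the rule demands, and dually for $\rigrul p$.

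To finish, I would note that since $\Gamma,\Delta$ are DT formulas so are $\neg \Gamma, \neg \Delta$, and that for every DT formula $A$ there are $\poly(|A|)$-size $\elndt$ proofs of $\seqar A, \neg A$ and $A, \neg A \seqar$ (by a straightforward induction on $A$, or by \cref{soundness-completeness}); the instance $A = p$ supplies exactly the auxiliary proofs used in the previous paragraph. Cutting the derivation of $\neg \Delta \seqar \neg \Gamma$ against these, once for each formula of $\Gamma$ and of $\Delta$, yields an $\elndt$ proof of $\Gamma \seqar \Delta$ over $\neg \mathcal U$ of size polynomial in $|\pi|$, establishing the simulation. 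I expect the main difficulty to be purely bookkeeping --- organising the exchanges, contractions and cut premises in the rule-by-rule translation together with the $p$-versus-$\neg p$ coercions --- with no conceptual obstacle; the converse simulation of $\elndt$ by co-$\elndt$ over DT sequents is entirely dual.
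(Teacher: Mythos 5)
Your proposal is correct and follows essentially the same route as the paper: a De Morgan dualising translation sending a co-$\elndt$ proof of $\Gamma\seqar\Delta$ over $\mathcal U$ to an $\elndt$ proof of the negated, side-swapped sequent over the negated extension axioms, with the $p$-rules repaired by constant-size cuts against $\seqar p,\bar p$ and $p,\bar p\seqar$, followed by linearly many cuts against $\seqar A,\bar A$ and $A,\bar A\seqar$ for the DT formulas in the endsequent. The paper's proof (\cref{lem:coelndt-to-elndt-by-duality} plus the concluding cuts) matches yours step for step, including the treatment of decision variables as $0p1$ versus $1p0$.
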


\noindent %FIXED indent
The result follows by `dualising' a proof;
let us make this notion formal:

\begin{definition}
    [Negation]
    \label{def:negation}
    For each co-\exndt\ extension variable $u_j$ let its \emphasis{negation} $\bar u_j$ be a fresh \exndt\ extension variable.
    We extend negation to all co-\exndt\ formulas 
    % $U$  over $\mathcal U = \{u_j \extiff U_j\}_j$ we define its \emphasis{negation} 
    by,
   \[
   \begin{array}{r@{\ := \ }l}
        \bar 0 & 1\\
        \bar 1 & 0 \\
        \overline{\dec U p V} & \dec {\bar U} p {\bar V} \\
        \overline{U\land V} & \bar U \lor \bar V 
        % \\
        % \bar u_j & e_j
   \end{array}
   \]
   % where $e_j$ is the \exndt\ variable of same index as $u_j$.
% 
    We write $\bar \Gamma := \{ \bar U : U \in \Gamma\}$. 
    We further extend this notation to sets of co-\exndt\ extension axioms: if $\mathcal U = \{u_i \extiff U_i\}_{i<m}$ then $\bar{\mathcal U} := \{ \bar u_i \extiff \bar U_i \}_{i<m}$.
\end{definition}

Note that, for any co-\exndt\ formula $U$, its negation $\bar U$ is an \exndt\ formula.
Moreover, it is not hard to see that if $U$ is interpreted over some extension axioms $\mathcal U$, then $\bar U$ computes its negation, over $\bar{\mathcal U}$, in the expected way. 
We can make this observation proof-relevant too:

\begin{lemma}\label{lem:coelndt-to-elndt-by-duality}
    For each co-$\elndt$ proof $\pi$ over $\mathcal U$ of a co-\exndt\ sequent $\Gamma \seqar \Delta$ there is an $\elndt$ proof $\bar \pi$ over $\bar{\mathcal U}$ of $\bar \Delta \seqar \bar \Gamma$ of size polynomial in $|\pi|$.
\end{lemma}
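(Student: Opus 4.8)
## Proof plan for \cref{lem:coelndt-to-elndt-by-duality}

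The plan is to proceed by induction on the structure of the co-$\elndt$ derivation $\pi$, showing that each inference rule of co-$\elndt$ ``dualises'' to an inference rule (or short derivation) of $\elndt$ under the syntactic transformation $\Gamma \seqar \Delta \mapsto \bar\Delta \seqar \bar\Gamma$ together with negation of all formulas. The key observation driving everything is that negation is an involution up to provable equivalence, and that the De Morgan-style clauses of \cref{def:negation} exactly mirror the left/right symmetry of the sequent calculus: a left rule for a connective becomes a right rule for its dual connective, and vice versa.

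First I would set up the bookkeeping: for a co-$\elndt$ sequent $s = (\Gamma \seqar \Delta)$ write $\bar s := (\bar\Delta \seqar \bar\Gamma)$, and note that negation sends co-\exndt\ formulas to \exndt\ formulas, $\land$ to $\lor$, and $\dec U p V$ to $\dec{\bar U}p{\bar V}$. For the extension axioms, an axiom $u_i \extiff U_i$ of $\mathcal U$, construed as the pair $u_i \seqar U_i$ and $U_i \seqar u_i$, dualises to $\bar U_i \seqar \bar u_i$ and $\bar u_i \seqar \bar U_i$, which is exactly the pair of sequents coded by $\bar u_i \extiff \bar U_i$ in $\bar{\mathcal U}$; so hypotheses map to hypotheses. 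Then I would check the rules one by one. The identity $A \seqar A$ dualises to $\bar A \seqar \bar A$, again an identity. The cut rule dualises to a cut (on the dual formula), since cutting $\Gamma \seqar \Delta, A$ and $\Gamma, A \seqar \Delta$ to get $\Gamma\seqar\Delta$ becomes cutting $\bar\Delta, \bar A \seqar \bar\Gamma$ and $\bar\Delta \seqar \bar\Gamma, \bar A$ to get $\bar\Delta\seqar\bar\Gamma$. Structural rules (exchange, weakening, contraction) are manifestly symmetric under swapping sides. For the logical rules: $\lefrul 0$ becomes $\rigrul 1$ (since $\bar 0 = 1$) and $\rigrul 1$ becomes $\lefrul 0$; $\lefrul 1$ becomes $\rigrul 0$ and $\rigrul 0$ becomes $\lefrul 1$; the two-premiss $\lefrul\land$ rule of co-$\elndt$ (which has the shape of $\lefrul\lor$ in \cref{eq:pos-bool-rules} but for $\land$ — I should be careful here, co-$\elndt$ uses the $\land$-rules, and $\lefrul\land$ from \cref{eq:pos-bool-rules} is single-premiss $\Gamma,P\land Q\seqar\Delta$ from $\Gamma,P,Q\seqar\Delta$) dualises to $\rigrul\lor$, and $\rigrul\land$ (two-premiss) dualises to $\lefrul\lor$ (two-premiss); finally the decision rules $\lefrul p$ and $\rigrul p$ dualise into each other — inspecting \cref{fig:lndt}, the dual of $\lefrul p$ applied to $\dec U p V$ on the left, with premisses $\Gamma, U \seqar \Delta, p$ and $\Gamma, p, V \seqar \Delta$, is $\rigrul p$ applied to $\dec{\bar U}p{\bar V} = \overline{\dec UpV}$ on the right, with premisses $\bar\Delta \seqar \bar\Gamma, \bar U, p$ and $\bar\Delta, p \seqar \bar\Gamma, \bar V$, which is exactly the shape of $\rigrul p$. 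Note $p$ stays a propositional variable and is unaffected by negation, consistent with $\overline{\dec UpV}$ deciding on the same variable $p$; also since $p = 0p1$, this matches.

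Once every rule is handled, the induction goes through: replace each sequent $s$ in $\pi$ by $\bar s$, each rule instance by its dualised counterpart (each a single rule instance or a constant-size derivation), and the result is an $\elndt$ derivation $\bar\pi$ from $\bar{\mathcal U}$ ending in $\bar\Delta \seqar \bar\Gamma$. The size blow-up is at most a constant factor per line (negation at worst doubles formula size and each rule dualises to $O(1)$ rules), so $|\bar\pi| = \poly(|\pi|)$, and the construction is plainly polynomial-time, hence a bona fide polynomial simulation. I do not anticipate a serious obstacle here — the only thing requiring care is matching up the exact rule shapes of co-$\elndt$ (which uses the $\land$-rules of \cref{eq:pos-bool-rules} in place of $\elndt$'s $\lor$-rules) with the $\lor$-rules of $\elndt$, and confirming that the single-premiss versus two-premiss asymmetry of $\land$ versus $\lor$ rules lines up correctly under the left/right swap (it does: single-premiss $\lefrul\land$ ↔ single-premiss $\rigrul\lor$, two-premiss $\rigrul\land$ ↔ two-premiss $\lefrul\lor$). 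This is really the content of the lemma, and it is routine.
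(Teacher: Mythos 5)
Your overall approach is the same as the paper's: dualise the proof line by line, mapping each co-$\elndt$ rule to its mirror-image $\elndt$ rule under the translation $\Gamma\seqar\Delta\mapsto\bar\Delta\seqar\bar\Gamma$, and your treatment of identity, cut, structural rules, constants, the $\land$/$\lor$ rules and the extension hypotheses matches the paper exactly. However, there is a genuine gap in the one case that actually requires care: the decision rules. Under the transformation you yourself set up (negate \emph{every} cedent formula and swap sides), the first premiss $\Gamma, U \seqar \Delta, p$ of a $\lefrul p$ step does \emph{not} dualise to $\bar\Delta \seqar \bar\Gamma, \bar U, p$ as you claim; it dualises to $\bar\Delta, \bar p \seqar \bar\Gamma, \bar U$, where $\bar p = \overline{0p1} = 1p0$. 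Your remark that ``$p$ stays a propositional variable and is unaffected by negation'' conflates the decision variable occurring \emph{inside} the principal formula $\dec U p V$ (which indeed stays put in $\overline{\dec U p V} = \dec{\bar U}{p}{\bar V}$) with the cedent occurrences of $p$ in the premisses, which are formulas ($p = 0p1$) and hence are negated and moved to the other side like everything else. Consequently the inductively translated premisses do not have the shapes $\bar\Delta \seqar \bar\Gamma, \bar U, p$ and $\bar\Delta, p \seqar \bar\Gamma, \bar V$ demanded by $\rigrul p$, and the claimed single rule instance cannot be applied; the induction step fails as written (symmetrically for $\rigrul p \leadsto \lefrul p$).

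The repair is small but it is precisely the content the paper supplies: before applying the dual decision rule, cut the translated premisses against constant-size proofs of $\seqar p, \bar p$ and $p, \bar p \seqar$ (both easy, since $\bar p = 1p0$), thereby trading $\bar p$ on one side for $p$ on the other and recovering exactly the premiss shapes of $\rigrul p$ (resp.\ $\lefrul p$). This adds only $O(1)$ extra steps per decision inference, so the polynomial (indeed essentially linear) size bound you state is unaffected. Note also that you cannot dodge this by redefining the translation to leave bare-variable cedent formulas in place: that makes the decision rules line up, but then the same mismatch reappears in other rules (e.g.\ a $\lefrul\land$ step whose auxiliary formula is a variable), so the fix-up cuts against $\seqar p,\bar p$ and $p,\bar p\seqar$ are needed somewhere in any case.
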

\begin{proof}
% Let $\pi$ be a co-$\elndt$ proof $\Gamma \seqar \Delta$
% %with  co-\exndt\ formulas among $\bar {\vec U} = \bar U_0, \dots, \bar U_{N-1}$.
% and construct an $\elndt$ proof 
Construct $\bar \pi$ by replacing all  co-\exndt\ formulas $U$ 
%that are not \exndt\ formulas 
of the LHS of the sequent with $\bar U$ in the RHS and vice versa. We justify each resulting step of inference as follows:

\begin{itemize}
    \item Identity and cut steps on $U$ are translated to identity and cut steps, respectively, on $\bar U$.
    \item Extension hypotheses remain correct by definition.
    \item Each left structural step $\lefrul{\mathsf s}$ is translated to a right structural step $\rigrul{\mathsf s}$ and vice-versa.
    \item $\lefrul 0$ and $\lefrul \land$ steps are respectively translated to $\rigrul 1$ and $\rigrul \lor$ steps, and vice-versa. E.g.:
    \[
    \begin{array}{r@{\quad \leadsto\quad }l}
        \vlinf{\lefrul \land}{}{\Gamma, U \land V \seqar \Delta}{\Gamma, U,V} & \vlinf{\rigrul \lor}{}{\bar \Delta \seqar \bar \Gamma, \bar U \lor \bar V }{\bar \Delta \seqar \bar \Gamma, \bar U, \bar V}  \\
        \noalign{\medskip}
        \vliinf{\rigrul \land}{}{\Gamma \seqar \Delta, U\land V}{\Gamma \seqar \Delta, U}{\Gamma \seqar \Delta, V}
& \vliinf{\lefrul\lor}{}{\bar \Delta, \bar U \lor \bar V\seqar \bar \Gamma}{\bar \Delta , \bar U \seqar \bar \Gamma}{\bar \Delta, \bar V \seqar \bar \Gamma}
    \end{array}
    \]
    \item Finally $\lefrul p $ and $\rigrul p $ steps are translated as follows,
    \[
    \begin{array}{r@{\quad \leadsto \quad}l}
    \vliinf{\lefrul p}{}{\Gamma, \dec{U}{p}{V} \seqar \Delta}{\Gamma,U \seqar p,\Delta}{\Gamma,p,V\seqar \Delta}
&
    \vlderivation{
    \vliin{\rigrul p}{}{\bar \Delta \seqar \dec{\bar U}{p}{\bar V} ,\bar \Gamma}{
        \vliq{\circ}{}{\bar \Delta \seqar p, \bar U,\bar \Gamma}{\vlhy{\bar \Delta , \bar p \seqar  \bar U,\bar \Gamma}}
    }{
        \vliq{\bullet}{}{\bar \Delta,p\seqar \bar V,\bar \Gamma}{\vlhy{\bar \Delta\seqar \bar p, \bar V,\bar \Gamma}}
    }
    }
\\
\noalign{\medskip}
    \vliinf{\rigrul p}{}{  \Gamma \seqar \dec{  U}{p}{  V} ,  \Delta}{  \Gamma \seqar p,   U,  \Delta}{  \Gamma,p\seqar   V,  \Delta}
&
    \vlderivation{
    \vliin{\lefrul p}{}{\bar \Delta,\dec{\bar U}{p}{\bar V} \seqar \bar \Gamma}{
        \vliq{\circ}{}{\bar \Delta , \bar U\seqar p,\bar \Gamma}{\vlhy{\bar \Delta , \bar U, \bar p \seqar \bar \Gamma}}
    }{
        \vliq{\bullet}{}{\bar \Delta,p,\bar V\seqar \bar \Gamma}{\vlhy{\bar \Delta,\bar V\seqar \bar p, \bar \Gamma}}
    }
    }
\end{array}
    \]
    where the steps marked $\circ$ are obtained by cutting against constant-size proofs of $\seqar p, \bar p$, and the steps marked $\bullet$ are obtained by cutting against constant-size proofs of  $p,\bar p \seqar $. \qedhere
\end{itemize}
 \end{proof}
\noindent %FIXED indent
 From here, by cutting against identities on \exdt\ formulas, we immediately have a proof of our aforementioned simulation.

% \begin{proof}
%    Given a co-$\elndt$ proof $P$ of an extension free \dt\ sequent $\Gamma \seqar \Delta$ we first construct an $\elndt$ proof $\bar P$ of the extension free \dt\ sequent $\bar \Delta \seqar \bar \Gamma$ as seen in \cref{lem: co-elndt preproof}. Since $\Gamma,\Delta,\bar \Gamma,\bar \Delta$ only contain \dt\ formulas, we have polynomial size $\elndt$ (also co-$\elndt$) proofs of the sequents $U,\bar U\seqar$ and $\seqar U,\bar U$ for all formulas $U$ in $\Gamma,\Delta$. Then, performing linearly (on $|\Gamma|+|\Delta|$) many cut steps on $\bar \Delta \seqar \bar \Gamma$, we obtain an $\elndt$ proof of $\Gamma \seqar \Delta$ as desired. To demonstrate, assume $\Gamma=U_1,\hdots,U_m$ and $\Delta=V_1,\hdots,V_n$ so $\bar{\Gamma}=\bar{U_1},\hdots,\bar{U_m}$ and $\bar{\Delta}=\bar{V_1},\hdots,\bar{V_n}$. The cut steps we perform will be of the form: 
%        $   \vliinf{\cut}{}{\bar V_1,\hdots,\bar V_{n-1} \seqar \bar U_1,\hdots, \bar U_m,V_n                 }{\seqar                 V_n,\bar V_n}{ \bar V_1,\hdots,\bar V_n\seqar                \bar U_1,\hdots, \bar U_m}$
% \end{proof}
\begin{proof}
[Proof of \cref{prop:elndt-psim-coelndt}]
    For a \exdt\ formula $A$, it is routine to construct polynomial-size $\ldt$-proofs of $\seqar A,\bar A$ and $A,\bar A \seqar$, by induction on the structure of $A$.
    Now, given a co-$\elndt$ proof of a DT sequent $\Gamma \seqar \Delta$, first apply \cref{lem:coelndt-to-elndt-by-duality} to obtain an $\elndt$-proof of $\bar \Delta \seqar \bar \Gamma$, then apply linearly many cuts against $\seqar A,\bar A$ for each $A\in \Delta$ and against $B,\bar B \seqar $ for each $B\in \Gamma$, to obtain a proof of $\Gamma \seqar \Delta$.
\end{proof}

\subsection{Simulating $\eleadt$ relative to a fixed number of true variables}
Towards a proof of \cref{thm:conl-nl-proofs}, let us now fix, for the remainder of this section, an $\eleadt$ proof $\pi$ of an \ndt\ sequent $\Sigma \seqar \Pi$ over extension axioms $\mathcal X = \{x_j \extiff X_j\}_{j<n}$. 

\newcommand{\kk}[1]{#1^k}

Let $\vec U = U_0, \dots, U_{N-1}$ enumerate the co-\exndt\ formulas in $\pi$ that are \emph{not} \exdt\ formulas.
Write $\bar {\vec U} = \bar U_0, \dots, \bar U_{N-1}$ and $\kk U_i := \decider k 1 {\bar U_i} 0 $.
Now we can speicalise \cref{thm:imm-szel} to obtain extension axiom sets $\dextaxs k i \supseteq \thrextaxs \B \BB$, for $i<N$ and $k\in \mathbb Z$, such that: 
\begin{corollary}
\label{cor:coendt-fmla-decider-truth-conditions}
    There are polynomial size $\elndt$ proofs of,
    \begin{itemize}
        \item $\thr {\bar{\vec U}}{k}, \bar U_i, \kk {U_i} \seqar  \thr {\bar{\vec U}}{k+1}$
        \item $\thr {\bar{\vec U}}{k} \seqar \bar U_i, \kk {U_i}, \thr {\bar{\vec U}}{k+1}$
    \end{itemize}
    over $\dextaxs k i $, for $i<N$ and $k\in \mathbb Z$.
\end{corollary}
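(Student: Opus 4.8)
The plan is to recognise \cref{cor:coendt-fmla-decider-truth-conditions} as essentially a direct specialisation of the Formalised (partial) Immerman--\szel\ theorem, \cref{thm:imm-szel}, followed by a trivial clean-up of the constants $0$ and $1$.

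First I would invoke \cref{thm:imm-szel} with the list $\vec B := \bar{\vec U} = \bar U_0, \dots, \bar U_{N-1}$ playing the role of $B_0, \dots, B_{N-1}$, over the set $\BB$ of $\exndt$ extension axioms interpreting these negated co-$\exndt$ formulas (obtained by dualising the relevant extension axioms of $\pi$, as in \cref{def:negation}), and with the constant formulas $A := 1$, $C := 0$. For each $i<N$ and $k \in \mathbb Z$ this yields the extension variable $\decider k 1 {\bar U_i} 0 = \kk{U_i}$, and a set of extension axioms $\dextaxs k i \supseteq \thrextaxs{\bar{\vec U}}{\BB}$ over which there are polynomial-size $\elndt$ proofs of the four sequents of \cref{eq:decider-truth-conds-relativised}, with $A = 1$ and $C = 0$ substituted throughout. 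This is exactly the assignment of parameters already anticipated in the definition of $\kk{U_i}$ and $\dextaxs k i$ above.

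Second I would specialise those sequents. With $A = 1$, $C = 0$, the second sequent of \cref{eq:decider-truth-conds-relativised} reads $\thr{\bar{\vec U}}{k}, \kk{U_i}, \bar U_i \seqar 0, \thr{\bar{\vec U}}{k+1}$, and the fourth reads $\thr{\bar{\vec U}}{k}, 1 \seqar \bar U_i, \kk{U_i}, \thr{\bar{\vec U}}{k+1}$. These are precisely the two sequents claimed in the corollary, up to a stray constant $0$ on the right of the first and $1$ on the left of the second (and irrelevant reorderings of cedents, absorbed by exchange steps). I would delete the stray $0$ by cutting against the axiom $0 \seqar$ (rule $\lefrul 0$ together with a weakening), and the stray $1$ by cutting against $\seqar 1$ (rule $\rigrul 1$ together with a weakening); each such cut adds only a constant number of symbols, so the resulting $\elndt$ proofs remain polynomial-size in $|\pi|$.

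There is no genuine obstacle here, since all the real work resides in \cref{thm:imm-szel}; the only things demanding a little care are bookkeeping ones: verifying that $\bar U_0, \dots, \bar U_{N-1}$ forms a legitimate fixed list over a well-founded extension-axiom set so that \cref{thm:imm-szel} literally applies, recalling that the $\kk{U_i}$ of the corollary statement \emph{are} by definition these deciders $\decider k 1 {\bar U_i} 0$, and noting that the constant clean-up leaves the polynomial size bound intact.
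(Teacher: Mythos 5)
Your proposal is correct and follows essentially the same route as the paper's own proof sketch: instantiate \cref{thm:imm-szel} with $\vec B = \bar{\vec U}$, $A=1$, $C=0$, then remove the stray constants by cutting against the initial sequents for $\lefrul 0$ and $\rigrul 1$. Nothing to add.
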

\begin{proof}
    [Proof sketch]
    From \cref{thm:imm-szel} we have polynomial size proofs of $\thr {\bar{\vec U}}{k}, \bar U_i, \kk {U_i} \seqar 0,  \thr {\bar{\vec U}}{k+1}$ and $\thr {\bar{\vec U}}{k}, 1 \seqar \bar U_i, \kk {U_i}, \thr {\bar{\vec U}}{k+1}$, over $\dextaxs k i $, whence the two sequents above follow by cutting against the initial sequents $\lefrul 0 $ and $\rigrul 1 $ respectively. 
\end{proof}

We want a sort of generalisation of this result to cover all \exeadt-theorems.
Let us first extend the definition of $\kk \cdot $ appropriately:
%\todoavg[]{24/09/24 k-transl. is confusing. Does it only act on ext var? Explain a bit?}{}
\begin{definition}
    [$k$-translation]
We extend the notation $\kk U_i$ to all \exeadt\ formulas of $\pi$, defining $\kk X$ for each $X$ in $\pi$ as follows,
    \[
    \begin{array}{r@{\ := \ }l}
          % (\kk {U_i} & \dec 1 {\bar U_i^k} 0 \\
         \kk {(XpY)} & \kk X p \kk Y \\
         \kk{(X \lor Y)} & \kk X \lor \kk Y \\
         \kk{x_j} &  e_{j}^k
    \end{array}
    \]
    where $e_{j}^k$ is a fresh \exndt\ extension variable.
We also write $\kk \Gamma  := \{\kk A : A \in \Gamma\}$, for cedents $\Gamma$ of $\pi$,
    and $\kk{\mathcal X} := \{e^k_j \extiff \kk {X_j}\}_{j<n}$.
\end{definition}

Again, let us point out that each $\kk X$ is an \exndt\ formula.
Writing $\decextaxs k := \dextaxs k 0 \cup \cdots \cup \dextaxs k {N-1}$, let us first establish the following result:
% \todoanu[]{14-10-24: I refactored this proof to do the whole $\kk \cdot$-translation first, despite last conversation. The statements are cleaner this way.}

% \begin{lemma}
%     There are polynomial-size $\elndt$ proofs $\kk \pi$ of $\thr {\bar{\vec U}}{k}, \kk \Gamma \seqar \kk \Delta, \thr {\bar{\vec U}}{k+1}$ over $\kk{\mathcal X}$, for each $k\leq N$.
% \end{lemma}
% \begin{proof}
%     We proceed by induction on the number of steps in $\pi$.
%     \begin{itemize}
%         \item When $\pi$ ends with an extension step, $\lor$-step or decision step on an $X$-formula (that is not some $U_i$) the translation is routine, as the $\kk\cdot$-translation commutes in these cases.
%         \item 
%     \end{itemize}
    
% \end{proof}

\begin{lemma}
\label{lem:eleadt-to-k-elndt}
    There are polynomial-size $\elndt$ proofs of $\thr {\bar{\vec U}}{k}, \Sigma \seqar \Pi, \thr {\bar{\vec U}}{k+1}$ over $\kk{\mathcal X} \cup \decextaxs k $, for each $k\in \mathbb Z$.
\end{lemma}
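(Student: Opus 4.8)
The plan is to proceed by induction on the structure of the fixed $\eleadt$ proof $\pi$, translating each line $\Gamma \seqar \Delta$ of $\pi$ into an $\elndt$ proof of $\thr{\bar{\vec U}}{k}, \kk\Gamma \seqar \kk\Delta, \thr{\bar{\vec U}}{k+1}$ over $\kk{\mathcal X} \cup \decextaxs k$, maintaining the threshold formulas $\thr{\bar{\vec U}}{k}$ on the left and $\thr{\bar{\vec U}}{k+1}$ on the right throughout. Since $\Sigma \seqar \Pi$ is an $\ndt$ sequent, all its formulas are $\exndt$ formulas (no $\land$, no co-$\exndt$ variables), so $\kk\Sigma = \Sigma$ and $\kk\Pi = \Pi$ up to harmless renaming, and the final line will be exactly $\thr{\bar{\vec U}}{k}, \Sigma \seqar \Pi, \thr{\bar{\vec U}}{k+1}$ as required.

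The key step-by-step breakdown: first handle the $\exeadt$-native rules. Identity steps $X \seqar X$ go to $\kk X \seqar \kk X$ plus threshold weakenings. Cut on $X$ goes to cut on $\kk X$. All structural rules pass through unchanged. The decision rules $\lefrul p, \rigrul p$ and the $\lor$ rules translate directly since $\kk{(XpY)} = \kk X p \kk Y$ and $\kk{(X \lor Y)} = \kk X \lor \kk Y$. The $\land$ rules (from $\eqref{eq:pos-bool-rules}$) applied to co-$\exndt$ formulas: here I need to be careful, since $\kk{(U \land V)}$ is only defined via $\bar U, \bar V$ being $\exndt$ formulas and $\kk U = \decider k 1 {\bar U} 0$. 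Actually the cleanest route is: the co-$\exndt$ subformulas $U$ occurring as immediate operands of a decision or $\lor$ are handled by first invoking \cref{cor:coendt-fmla-decider-truth-conditions}, which gives, relative to the thresholds, that $\kk U$ behaves like $\bar{\bar U} = U$, i.e.\ like a decision on $\bar U$. For a genuinely compound co-$\exndt$ formula $U = V \land W$ appearing in $\pi$, it is among $\vec U$ (as it is not a $\exdt$ formula), so $\kk U = \decider k 1 {\bar U} 0$ is provided wholesale by \cref{cor:coendt-fmla-decider-truth-conditions} with its truth conditions, and the $\land$-rule on $U$ is simulated by cutting against these truth conditions together with the $\lor$-rules on $\bar U = \bar V \lor \bar W$ inside $\elndt$ — essentially importing the machinery of \cref{prop:elndt-psim-coelndt} and \cref{lem:coelndt-to-elndt-by-duality} to dualise the co-$\elndt$ sub-reasoning, then re-importing via the decider. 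Extension hypotheses $x_j \extiff X_j$ of $\mathcal X$ translate to $e^k_j \extiff \kk{X_j}$, which are exactly the axioms of $\kk{\mathcal X}$. Extension hypotheses $u_j \extiff U_j$ of the co-$\exndt$ variables, if $U_j$ is a $\exdt$ formula, translate to identities; if not, $U_j \in \vec U$ and the decider truth conditions from \cref{cor:coendt-fmla-decider-truth-conditions} supply the needed sequents $\kk U_j \extiff$ (the appropriate decision on $\bar U_j$) relative to thresholds.

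The main obstacle I anticipate is the bookkeeping at the co-$\exndt$/$\exeadt$ interface: making precise how $\kk{(\cdot)}$ interacts with the negation operation $\bar{(\cdot)}$ of \cref{def:negation} and the decider construction, and in particular ensuring that when a co-$\exndt$ formula $U$ sits strictly below an $\exeadt$ decision or disjunction node, the local inference on $U$ (necessarily an $\land$ or structural step inside the co-$\exndt$ fragment) can be simulated. The point is that $\elndt$ cannot reason about $\land$ directly, so we must route through the decider: we replace $U$ by $\decider k 1 {\bar U} 0$ and replace the sub-$\pi$ reasoning on co-$\exndt$ formulas by the dualised-and-decoded version, which is where \cref{thm:imm-szel} does its real work (this is precisely the bold cyan arrow in \cref{structure-of-paper}). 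Once that interface is pinned down, each rule translation is a short derivation of the same flavour as those in \cref{strategy to elndt(land)} and \cref{lem:coelndt-to-elndt-by-duality}, and the polynomial size bound is immediate since each step of $\pi$ is replaced by a constant-size (times polynomial-in-$|\pi|$, for the decider truth-condition subproofs) derivation.
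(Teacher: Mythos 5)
Your proposal is correct in outline and rests on the same key ingredients as the paper's proof --- the $k$-translation, carrying $\thr{\bar{\vec U}}{k}$ on the left and $\thr{\bar{\vec U}}{k+1}$ on the right throughout, and the decider truth conditions of \cref{cor:coendt-fmla-decider-truth-conditions} --- but it organises the induction around a different invariant. You translate \emph{every} cedent formula on the same side, so a co-\exndt\ formula $U$ occurring as a whole cedent formula becomes its decider $\decider k 1 {\bar U} 0$; consequently each inference internal to the co-\exndt\ fragment ($\land$-steps, decision and extension steps on co-\exndt\ formulas) must be repaired locally: cut against \cref{cor:coendt-fmla-decider-truth-conditions} to trade the decider for $\bar U$ on the opposite side, apply the dual $\elndt$ rule (e.g.\ $\rigrul\lor$ for $\lefrul\land$, using $\overline{U\land V}=\bar U\lor\bar V$), then trade back via the corollary instance for the principal formula. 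This works (your phrase about ``dualising the sub-reasoning then re-importing'' is better read as this per-step local conversion than as a wholesale import of \cref{lem:coelndt-to-elndt-by-duality}), and as a bonus the interface steps --- $\lor$- and decision-steps at the alternating level whose auxiliary formulas are co-\exndt\ --- come for free, since your translation commutes with them. The paper does the opposite: it keeps co-\exndt\ cedent formulas as $\bar U_i$ on the \emph{opposite} side, so the whole co-\exndt\ sub-reasoning is dualised exactly as in \cref{lem:coelndt-to-elndt-by-duality} at no cost, and \cref{cor:coendt-fmla-decider-truth-conditions} is spent only at the interface steps where a co-\exndt\ formula first appears as a cedent formula (bottom-up). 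Both routes are polynomial; yours invokes polynomially many more corollary instances but avoids the side-flipping bookkeeping. Two small points to pin down in your version: when an immediate subformula is a plain \exdt\ formula there is no decider, so cut against the easy \exdt\ duality proofs $\seqar A,\bar A$ and $A,\bar A\seqar$ instead; and simulating extension steps on the co-\exndt\ variables $u_j$ needs the negated axioms $\bar u_j \extiff \bar U_j$, which are indeed available since the base axioms underlying $\decextaxs k$ contain $\bar{\mathcal U}$.
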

\begin{proof}
    We construct the required proof $\kk \pi$ from $\pi$ as follows:
\begin{itemize}
    \item Add $\thr {\bar{\vec U}}{k}$ to the LHS of each line and $\thr {\bar{\vec U}}{k+1}$ to the RHS of each line.
 \item Replace each each $U_i$ in the LHS (or RHS) by $\bar U_i$ in the RHS (respectively LHS).
    \item Replace each \exeadt\ formula $X$ that is not a co-\exndt\ formula (and so also not a \exdt\ formula) by $\kk X$.
\end{itemize}
Notice that (even extended) NDT formulas are unaffected by the formula translation described above, so 
 $\kk \pi$ certainly ends with the desired sequent, $\thr {\bar{\vec U}}{k}, \Sigma \seqar \Pi, \thr {\bar{\vec U}}{k+1}$. 
 Moreover, $\kk \pi$ is almost a correct $\elndt$ proof, it only remains to repair certain inference steps: 
\begin{itemize}
    \item Identities and other initial sequents remain derivable with additional weakenings for $\thr {\bar{\vec U}}{k}$ on the LHS and $\thr {\bar{\vec U}}{k+1}$ on the RHS.
    \item 
    Cuts remain intact by the transformation above.
    \item Any structural steps remain intact by the transformation above. 
    \item Any logical or extension step on \exdt\ formulas remains intact, as the transformation does not affect them.
    \item A LHS step on a co-\exndt\ formula $U_i$ is transformed into a RHS step on its dual $\bar U_i$, and vice versa, just like in the proof of \cref{lem:coelndt-to-elndt-by-duality}. 
    \item Logical steps on \exeadt\ formulas (that are not co-\exndt\ formulas) remain intact, as the $\kk \cdot $ translation commutes with decisions, disjunctions and extensions.
    \item The remaining critical cases are when, bottom-up, a co-\exndt\ formula `appears' for the first time, i.e.\ after a $\lefrul \lor$ or $\rigrul \lor $ step on a \exeadt\ formula when at least one auxiliary formula is a co-\exndt formula. Such steps are translated as follows,
    \[
    \begin{array}{r@{\quad \leadsto \quad}l}
             \vliinf{\lefrul \lor}{}{\Gamma, U_i \lor X \seqar \Delta}{\Gamma, U_i \seqar \Delta}{\Gamma , X \seqar \Delta}
&
    \vlderivation{
    \vliin{\lefrul \lor}{}{\thr {\bar{\vec U}}{k}, \Gamma', \kk {U_i} \lor \kk X \seqar \Delta', \thr {\bar{\vec U}}{k+1}}{
        \vliq{\bullet}{}{\thr {\bar{\vec U}}{k}, \Gamma', \kk{U_i} \seqar \Delta', \thr {\bar{\vec U}}{k+1} }{\vlhy{\thr {\bar{\vec U}}{k}, \Gamma' \seqar \bar U_i, \Delta', \thr {\bar{\vec U}}{k+1}}}
    }{
        \vlhy{\thr {\bar{\vec U}}{k}, \Gamma', \kk X \seqar \Delta}
    }
    }
\\
\noalign{\medskip}
\vlinf{\rigrul\lor}{}{\Gamma \seqar \Delta, U_i \lor X}{\Gamma \seqar \Delta, U_i , X}
&
\vlderivation{
\vlin{\rigrul \lor}{}{\thr {\bar{\vec U}}{k}, \Gamma' \seqar \Delta', \kk U_i \lor \kk X , \Delta', \thr {\bar{\vec U}}{k+1}}{
\vliq{\circ}{}{\thr {\bar{\vec U}}{k}, \Gamma' \seqar \Delta', \kk U_i, \kk X,  \thr {\bar{\vec U}}{k+1}}{
\vlhy{\thr {\bar{\vec U}}{k}, \Gamma', \bar U_i \seqar \Delta', \kk X, \thr {\bar{\vec U}}{k+1} }
}
}
}
    \end{array}    
\]
where $\Gamma',\Delta'$ are obtained from $\Gamma,\Delta $ by the formula replacement we carried out, and the steps marked $\bullet$ or $\circ$ are derived by cutting against the appropriate duality property for $\kk U_i$ from \cref{cor:coendt-fmla-decider-truth-conditions}.
In the case where $X$ is also some $U_j$, we cut against a further instance of \cref{cor:coendt-fmla-decider-truth-conditions} to convert, bottom-up, $\kk U_j$ on the LHS or RHS to $\bar U_j$ on the RHS or LHS, respectively. \qedhere
\end{itemize}
\end{proof}

\subsection{Putting it all together}
We can now assemble the proof of our main result.
Recall, from the beginning of the previous subsection, that we already fixed an $\eleadt$ proof $\pi$ of an \ndt\ sequent $\Sigma \seqar \Pi$ over extension axioms $\mathcal X = \{x_j \extiff X_j\}_{j<n}$, with co-\exndt\ formulas (that are not \exdt\ formulas) among $\vec U$.

\begin{proof}
    [Proof of \cref{thm:conl-nl-proofs}]
% Suppose $\pi$ is a $\eleadt$ proof of a NDT sequent $\Sigma \seqar \Pi$ over extension axioms $\mathcal X$, with co-\exndt\ formulas that are not \exdt\ formulas among $\vec U = U_0, \dots, U_{N-1}$. 
We have from \cref{lem:eleadt-to-k-elndt} polynomial-size $\elndt$ proofs, for each $k\in \mathbb Z$, of $\thr {\bar{\vec U}}{k}, \Sigma \seqar \Pi, \thr {\bar{\vec U}}{k+1}$, over $\kk X \cup \decextaxs k $.
We can now cut these together to obtain the $\elndt$ proof,
\[
\renewcommand{\storageone}{\ref{monotonicity-of-threshold-subscripts}}
\renewcommand{\storagetwo}{\ref{lem:eleadt-to-k-elndt}}
\vlderivation{
\vliq{\cntr}{}{\Sigma \seqar \Pi}{
\vliiiiin{(N+2)\cut}{}{\Sigma, \dots, \Sigma \seqar \Pi,  \dots, \Pi }{
    \vliq{}{}{\seqar \thr {\bar{\vec U}}{0}}{\vlhy{\text{Prop.~\storageone}}}
}{
    \vliq{}{}{\thr {\bar{\vec U}}{0}, \Sigma \seqar \Pi, \thr {\bar{\vec U}}{1}}{\vlhy{\text{Lem.~\storagetwo}}}
}{
    \vlhy{\cdots}
}{
    \vliq{}{}{\thr {\bar{\vec U}}{N}, \Sigma \seqar \Pi, \thr {\bar{\vec U}}{N+1}}{\vlhy{\text{Lem.~\storagetwo}}}
}{
    \vliq{}{}{\thr {\bar{\vec U}}{N+1} \seqar }{\vlhy{\text{Prop.~\storageone}}}
}
}
}
\]
over extension axioms $\mathcal X^0 \cup \decextaxs 0 \cdots\cup  \mathcal X^N \cup \decextaxs N$, as required.
\end{proof}

\section{Conclusions}
We proposed Prover-Adversary games $\bl$ and $\dbl$ for reasoning about non-deterministic branching programs (NBPs) and deterministic branching programs (BPs), respectively. 
We showed that $\dbl$ and $\bl$ correspond to the previously introduced systems $\eldt$ and $\elndt$, respectively, from \cite{DBLP:conf/csl/BussDasKnop,BussDasKnop19:preprint}. 
In the deterministic case, let us point out that similar ideas were communicated by Cook but, as far as we can tell, never published (see \cite{Cook01slides}).
For the non-deterministic case we formalised a non-uniform version of the Immerman-\szel\ theorem, $\coNL = \NL$ \cite{immerman1988nondeterministic,szelep1988method} to (partially) negate NBPs.

% ; we did not have space to present this in detail, but the argument is much simpler thanks to a more direct encoding of negation for BPs (simply flip $0$ leaves to $1$s and vice-versa).
% 

We applied our main technical Immerman-\szel\ construction to establish a proof complexity theoretic version of $\coNL = \NL$, namely that a system for the second level of the logspace hierarchy collapses to the first level. 
Precisely, we showed that the system $\elndt$ reasoning about non-deterministic branching programs polynomially simulates the system $\eleadt$ reasoning about alternating branching programs with two alternations, $\exists\forall$.

Let us point out that a nonuniform development of the Immerman-\szel\ result, in particular for computing negation of NBPs, has appeared before \cite{Vinay96:nbp-negation}. 
The construction in that work is different from ours as our deciders $\dec A {B_i^k} C$ only work relative to a fixed number $k$ of true inputs among $\vec B$. 
On the other hand \cite{Vinay96:nbp-negation} does not consider a \emph{formalisation} of the construction within a proof system or theory as we have done here, cf.~\cref{thm:imm-szel}.

It would be prudent to establish some \emph{bounded arithmetic} correspondence between these systems and appropriate theories of arithmetic, e.g.\ $\mathbf{VL}$ and $\mathbf{VNL}$ for $\Logspace$ and $\NL$ (see, e.g., \cite{cooknguyen2010logical}).
At the same time it would be interesting to understand better the bounded reverse mathematical strength of the Immerman-\szel\ Theorem (cf.~\cite{DBLP:conf/lics/CookK04,perron2009power}).
We are aware of ongoing work  in this direction by Beckman, Buss, Das and Knop.

We can see $\eldt$ and $\elndt$ as the canonical `Frege' systems for $\Logspace $ and $\NL$ respectively.
It would therefore also be interesting to use our games to compare (fragments of) $\elnndt$ with recently studied `OBDD proof systems' in proof complexity, e.g.\ in \cite{DBLP:conf/cp/AtseriasKV04,DBLP:conf/coco/BussIKS18,DBLP:journals/ipl/ChenZ09,itsykson_et_al:LIPIcs.MFCS.2022.59}.

\section*{Acknowledgments}
  \noindent 
  We would like to thank Arnold Beckmann, Sam Buss and Sasha Knop for several helpful discussions and related research collaborations that motivated this work.
  We would also like to thank Meena Mahajan for bringing to our attention some relevant references.

\bibliographystyle{alphaurl}% the mandatory bibstyle
 \bibliography{book}

\end{document}